\theoremstyle{plain}
\newtheorem{theorem}{Theorem}[section]
\newtheorem{lemma}[theorem]{Lemma}
\theoremstyle{definition}
\newtheorem{definition}[theorem]{Definition}
\theoremstyle{remark}
\newcommand{\frnorm}[1]{\|#1\|_{\mathsf{F}}}
\newcommand{\opnorm}[1]{\|#1\|_{{2}}}
\newcommand{\onenorm}[1]{\|#1\|_{1}}
\newcommand{\opt}{\textnormal{Opt}}
\newcommand{\R}{\mathbb{R}}
\newcommand{\T}[1]{#1^{\mathsf{T}}}
\newcommand{\poly}{\operatorname{poly}}
\renewcommand{\Pr}{\text{Pr}}
\newcommand{\bS}{\mathbf{S}}
\newcommand{\bU}{\mathbf{U}}
\newcommand{\bA}{\mathbf{A}}
\newcommand{\bM}{\mathbf{M}}
\newcommand{\E}{\mathbb{E}}
\newcommand{\bX}{\mathbf{X}}
\newcommand{\bV}{\mathbf{V}}
\newcommand{\bZ}{\mathbf{Z}}
\newcommand{\nnz}{\textnormal{\texttt{nnz}}}
\newcommand{\Var}{\text{Var}}
\newcommand{\br}{\mathbf{r}}
\newcommand{\bg}{\mathbf{g}}
\newcommand{\bQ}{\mathbf{Q}}
\newcommand{\calS}{\mathcal{S}}
\newcommand{\bG}{\mathbf{G}}
\newcommand{\cost}{\text{cost}}
\renewcommand{\epsilon}{\varepsilon}
\DeclareMathOperator*{\argmin}{arg\,min}
\newcommand{\set}[1]{\{#1\}}
\title{Sketching Algorithms and Lower Bounds for Ridge Regression}
\author{Praneeth Kacham \\ CMU \\ \href{mailto:pkacham@cs.cmu.edu}{\texttt{pkacham@cs.cmu.edu}} \and David P. Woodruff \\ CMU \\ \href{mailto:dwoodruf@cs.cmu.edu}{\texttt{dwoodruf@cs.cmu.edu}}}
\date{}
\begin{document}
\maketitle

\begin{abstract}
    We give a sketching-based iterative algorithm that computes a $1+\varepsilon$ approximate solution for the ridge regression problem $\min_x \opnorm{Ax-b}^2 +\lambda\opnorm{x}^2$ where $A \in \R^{n \times d}$ with $d \ge n$. Our algorithm, for a constant number of iterations (requiring a constant number of passes over the input), improves upon earlier work \cite{chowdhury2018iterative} by requiring that the sketching matrix only has a weaker Approximate Matrix Multiplication (AMM) guarantee that depends on $\epsilon$, along with a constant subspace embedding guarantee. The earlier work instead requires that the sketching matrix has a subspace embedding guarantee that depends on $\epsilon$. For example, to produce a $1+\varepsilon$ approximate solution in $1$ iteration, which requires $2$ passes over the input, our algorithm requires the OSNAP embedding to have $m= O(n\sigma^2/\lambda\varepsilon)$ rows with a sparsity parameter $s = O(\log(n))$, whereas the earlier algorithm of \citet{chowdhury2018iterative} with the same number of rows of OSNAP requires a sparsity $s = O(\sqrt{\sigma^2/\lambda\varepsilon} \cdot \log(n))$, where $\sigma = \opnorm{A}$ is the spectral norm of the matrix $A$. We also show that this algorithm can be used to give faster algorithms for kernel ridge regression. Finally, we show that the sketch size required for our algorithm is essentially optimal for a natural framework of algorithms for ridge regression by proving lower bounds on oblivious sketching matrices for AMM. The sketch size lower bounds for AMM may be of independent interest.
\end{abstract}
\section{Introduction}
Given a matrix $A \in \R^{n \times d}$, a vector $b \in \R^n$, and a parameter $\lambda \ge 0$, the ridge regression problem is defined as:
\begin{equation*}
    \min_x \opnorm{Ax - b}^2 + \lambda\opnorm{x}^2.
\end{equation*}
Throughout the paper, we assume $n \le d$, and that $x^*$ is the optimal solution for the problem. Let $\opt$ be the optimal value for the above problem. Earlier work \cite{chowdhury2018iterative}  gives an iterative algorithm using so-called subspace embeddings. The following theorem states their results when their algorithm is run for 1 iteration. Note that their algorithm is more general and when run for $t$ iterations, the error is proportional to $\varepsilon^t$.
\begin{theorem}[Theorem~1 of \citet{chowdhury2018iterative}]
\label{thm:restatement-iterative-algorithm-one-iteration}
Given $A \in \R^{n \times d}$, let $V \in \R^{d \times n}$ be an orthonormal basis for the rowspace of matrix $A$. If $S \in \R^{m \times d}$ is a matrix which satisfies
\begin{equation}
    \opnorm{\T{V}\T{S}SV - I_n} \le \varepsilon/2,
    \label{eqn:requirement-chowdhury}
\end{equation}
then $\tilde{x} = \T{A}(A\T{S}S\T{A} + \lambda I_n)^{-1}b$ satisfies
\begin{equation*}
    \opnorm{\tilde{x} - x^*} \le \varepsilon\opnorm{x^*}.
\end{equation*}
\end{theorem}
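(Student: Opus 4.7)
The plan is to diagonalize everything through the thin SVD $A = U\Sigma\T{V}$ (so that the $V$ here agrees with the one in the statement) and then reduce the bound to a single resolvent-perturbation estimate. Using the ``dual form'' $x^* = \T{A}(A\T{A}+\lambda I_n)^{-1}b$ of the ridge solution (valid since $n\le d$) and the definition of $\tilde x$, a direct substitution gives
\begin{equation*}
    x^* = V\Sigma M^{-1}\T{U}b, \qquad \tilde x = V\Sigma(M+\Sigma E\Sigma)^{-1}\T{U}b,
\end{equation*}
where $M := \Sigma^2 + \lambda I_n$ and $E := \T{V}\T{S}SV - I_n$ is the perturbation whose operator norm is at most $\varepsilon/2$ by hypothesis~\eqref{eqn:requirement-chowdhury}. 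After this reduction, the only place the sketching matrix $S$ enters is through the symmetric $n\times n$ perturbation $\Sigma E\Sigma$ added to $M$.

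Next I would apply the resolvent identity $(M+\Sigma E\Sigma)^{-1} - M^{-1} = -(M+\Sigma E\Sigma)^{-1}(\Sigma E\Sigma)M^{-1}$ to obtain
\begin{equation*}
    \tilde x - x^* = -V\cdot\Sigma(M+\Sigma E\Sigma)^{-1}\Sigma\cdot E \cdot \Sigma M^{-1}\T{U}b.
\end{equation*}
Since $V$ has orthonormal columns, $\opnorm{V}=1$; and since $x^* = V(\Sigma M^{-1}\T{U}b)$, we have $\opnorm{\Sigma M^{-1}\T{U}b} = \opnorm{x^*}$. Combined with $\opnorm{E}\le \varepsilon/2$, this reduces the entire claim to the single operator-norm estimate
\begin{equation*}
    \opnorm{\Sigma(M+\Sigma E\Sigma)^{-1}\Sigma} \le \frac{1}{1-\varepsilon/2}.
\end{equation*}

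The expected main obstacle is precisely this last estimate, since $\opnorm{\Sigma E\Sigma}$ can be as large as $\opnorm{A}^2\cdot\varepsilon/2$, which may dwarf $\lambda$, so $\Sigma E\Sigma$ is \emph{not} a small perturbation of $M$ in operator norm. The resolution is to precondition by $M^{1/2}$: write
\begin{equation*}
    M+\Sigma E\Sigma = M^{1/2}\bigl(I + M^{-1/2}\Sigma E\Sigma M^{-1/2}\bigr)M^{1/2},
\end{equation*}
and observe that $M^{-1/2}\Sigma$ is diagonal with entries $\sigma_i/\sqrt{\sigma_i^2+\lambda}\le 1$, so $\opnorm{M^{-1/2}\Sigma E\Sigma M^{-1/2}}\le \opnorm{E}\le \varepsilon/2$. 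A Neumann-series bound $\opnorm{(I+X)^{-1}}\le 1/(1-\opnorm{X})$ then yields the displayed estimate. Putting everything together gives $\opnorm{\tilde x - x^*}\le \frac{\varepsilon/2}{1-\varepsilon/2}\opnorm{x^*}\le \varepsilon\opnorm{x^*}$ under the harmless assumption $\varepsilon\le 1$, finishing the proof.
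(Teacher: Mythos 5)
Your proof is correct (modulo the standard convention $\varepsilon\le 1$ needed for the final step $\tfrac{\varepsilon/2}{1-\varepsilon/2}\le\varepsilon$), and it takes essentially the same route as the paper's proof of its analogous Lemma~\ref{lma:j-th-iteration-approximation-guarantee}: diagonalize through the SVD, set $E=\T{V}\T{S}SV-I_n$, and control the perturbed resolvent after preconditioning by $\Sigma^2+\lambda I_n$ (the paper factors out $(I+\Sigma^{-2})$ on one side where you precondition symmetrically by $M^{1/2}$, but the mechanism — reducing everything to $\opnorm{E}$ via the contraction $\Sigma M^{-1/2}$ plus a Neumann bound — is the same). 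The paper does not reprove this cited theorem of \citet{chowdhury2018iterative}, but measured against its own version of the argument your derivation is sound and tracks the constants cleanly.
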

A matrix $S$ which satisfies \eqref{eqn:requirement-chowdhury} is called an $\varepsilon/2$ subspace embedding for the column space of $V$, since for any $y$ in colspan($V$), we have
$
	(1-\varepsilon/2)\opnorm{y}^2 \le \opnorm{Sy}^2 \le (1+\varepsilon/2)\opnorm{y}^2
$. We also frequently drop the term ``column space'' and say $S$ is an $\varepsilon/2$ subspace embedding for the matrix $V$ itself.

There are many \emph{oblivious} and \emph{non-oblivious} constructions of subspace embeddings. As the name suggests, \emph{oblivious} subspace embedding (OSE) constructions do not depend on the matrix $V$ that is to be embedded. OSEs specify a distribution $\calS$ such that for any arbitrary matrix $V$, a random matrix $\bS$ drawn from the distribution $\calS$ is an $\varepsilon$ subspace embedding for $V$ with probability $\ge 1 - \delta$. On the other hand, non-oblivious constructions compute a distribution $\calS$ that depends on the matrix $V$ that is to be embedded. See the survey by \citet{dw-sketching} for an overview. 

In many cases, such as in streaming, it is important that the sketch used is \emph{oblivious}, since matrix-dependent subspace embedding constructions may need to read the entire input matrix first. Oblivious sketches also allow turnstile updates to the matrix $A$ in a stream. In the turnstile model of streaming, we receive updates of the form $((i,j), v)$ which update $A_{i,j}$ to $A_{i,j} + v$. In our paper we focus on algorithms for ridge regression that use \emph{oblivious} sketching matrices.

To satisfy \eqref{eqn:requirement-chowdhury}, using CountSketch \cite{low-rank-approximation-regression-input-sparsity, meng2013low}, we can obtain a sketching dimension of $m = O(n^2/\varepsilon^2)$ for which the matrix $S\T{A}$ can be computed in $O(\nnz(A))$ time, where $\nnz(A)$ denotes the number of nonzero entries in the matrix $A$. Using OSNAP embeddings \cite{osnap, cohen2016nearly}, we can obtain a sketching dimension of $m = O(n^{1+\gamma}\log(n)/\varepsilon^2)$ for which the matrix $S\T{A}$ can be computed in time $O(\nnz(A)/\gamma\varepsilon)$. For $\gamma = O(1/\log(n))$, we have $m = O(n\log(n)/\varepsilon^2)$ with $S\T{A}$ that can be computed in time $O(\nnz(A)\log(n)/\varepsilon)$. We can see that there is a tradeoff between CountSketch and OSNAP --- one has a smaller sketching dimension while the other is faster to apply to a given matrix. If $t_{S\T{A}}$ is the time required to compute $S\T{A}$, then $\tilde{x}$ in Theorem~\ref{thm:restatement-iterative-algorithm-one-iteration} can be computed in time $O(\nnz(A) + t_{S\T{A}} + mn^{\omega - 1} + n^{\omega})$ where $\omega$ is the matrix multiplication constant. Thus it is important to have both a small $t_{S\T{A}}$ and small $m$ to obtain fast running times.

When allowed $O(\log(1/\varepsilon))$ passes over the input matrix $A$, the algorithm of \citet{chowdhury2018iterative} produces an $\varepsilon$ relative error solution using only a constant, say $1/2$ subspace embedding. When only $O(1)$ passes are allowed over the input, their algorithm requires a $\delta = f(\varepsilon)$ subspace embedding to obtain $\varepsilon$ error solutions. As seen above this leads to either a high value of $m$ or a high value of $t_{S\T{A}}$.

We show that we only need a simpler Approximate Matrix Multiplication (AMM) guarantee, along with a constant subspace embedding, instead of requiring $S$ to be an $\varepsilon/2$ subspace embedding.

\begin{definition}[AMM]
	Given matrices $A$ and $B$ of appropriate dimensions, a matrix $S$ satisfies the  $\varepsilon$-AMM property for $(A,B)$ if
	\begin{align*}
		\frnorm{\T{A}\T{S}SB - \T{A}B} \le \varepsilon\frnorm{A}\frnorm{B}.
	\end{align*}
\end{definition}
We now state the guarantees of our algorithm (Algorithm~\ref{alg:main-alg}) for $1$ iteration, requiring $2$ passes over the matrix $A$.
\begin{theorem}
If $\bS$ is a random matrix such that for any fixed $d \times n$ orthonormal matrix $V$ and a vector $r$, with probability $\ge 9/10$,
\begin{equation*}
    \opnorm{\T{V}\T{\bS}\bS V - I_n} \le 1/2
\end{equation*}
and
\begin{equation}
    \opnorm{\T{V}\T{\bS}\bS Vr - r} \le (\varepsilon/2\sqrt{n}) \frnorm{V}\opnorm{Vr} = (\varepsilon/2) \opnorm{r},
\end{equation}
then $\tilde{x} = \T{A}(A\T{\bS}\bS\T{A} + \lambda I_n)^{-1}b$ satisfies $\opnorm{\tilde{x} - x^*} \le \varepsilon\opnorm{x^*}$ with probability $\ge 9/10$.
\label{thm:intro-one-iteration-statement}
\end{theorem}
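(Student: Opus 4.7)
The key idea is to express $\tilde{x}-x^*$ as a product in which the ``error'' factor $I_n - \T{V}\T{\bS}\bS V$ acts on the fixed (non-random) vector $\T{V}x^*$, so that we can apply the AMM-type guarantee on a single fixed vector $r$ rather than requiring an operator-norm subspace embedding bound depending on $\varepsilon$.

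First, I would take an SVD $A = U\Sigma \T{V}$ where $U \in \R^{n \times n}$ is orthogonal, $\Sigma \in \R^{n \times n}$ is diagonal (assume full row rank for clarity), and $V \in \R^{d \times n}$ has orthonormal columns spanning the row space of $A$. Setting $M := \T{V}\T{\bS}\bS V$ and $c := \T{U} b$, direct substitution gives
\begin{equation*}
    x^* = V\Sigma(\Sigma^2 + \lambda I_n)^{-1} c, \qquad \tilde{x} = V\Sigma(\Sigma M \Sigma + \lambda I_n)^{-1} c.
\end{equation*}
Using the resolvent identity $P^{-1} - Q^{-1} = P^{-1}(Q-P)Q^{-1}$ with $P = \Sigma M \Sigma + \lambda I_n$ and $Q = \Sigma^2 + \lambda I_n$, and noting that $x^* \in \mathrm{colspan}(V)$ so $\Sigma Q^{-1} c = \T{V} x^*$, I get the clean identity
\begin{equation*}
    \tilde{x} - x^* = V\,\Sigma\,(\Sigma M \Sigma + \lambda I_n)^{-1}\,\Sigma\,(I_n - M)\,\T{V} x^*.
\end{equation*}

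Next I would bound the two factors separately. Since $V$ has orthonormal columns it preserves norms, so $\opnorm{\tilde{x} - x^*} \le \opnorm{\Sigma(\Sigma M \Sigma + \lambda I_n)^{-1}\Sigma}\cdot \opnorm{(I_n - M)\T{V} x^*}$. The constant subspace embedding hypothesis $\opnorm{M - I_n} \le 1/2$ yields $M \succeq (1/2) I_n$, hence $\Sigma M \Sigma + \lambda I_n \succeq (1/2)\Sigma^2 + \lambda I_n$; conjugating by $\Sigma$ and taking inverses shows $\Sigma(\Sigma M \Sigma + \lambda I_n)^{-1}\Sigma$ is dominated (in the PSD order) by a diagonal matrix with entries $\sigma_i^2/(\sigma_i^2/2 + \lambda) \le 2$. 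Therefore $\opnorm{\Sigma(\Sigma M \Sigma + \lambda I_n)^{-1}\Sigma} \le 2$.

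For the second factor, the crucial observation is that $r := \T{V} x^*$ is a \emph{fixed} (data-dependent but deterministic) vector, so the hypothesis applies directly: with probability at least $9/10$ (the same high-probability event on which the subspace embedding holds),
\begin{equation*}
    \opnorm{(I_n - M)\T{V} x^*} \le (\varepsilon/2)\opnorm{\T{V}x^*} = (\varepsilon/2)\opnorm{x^*},
\end{equation*}
using $\opnorm{\T{V}x^*} = \opnorm{x^*}$ since $x^* \in \mathrm{colspan}(V)$. Multiplying the two bounds gives $\opnorm{\tilde x - x^*} \le 2 \cdot (\varepsilon/2)\opnorm{x^*} = \varepsilon\opnorm{x^*}$, as desired.

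\textbf{Main obstacle.} The only nontrivial step is the algebraic manipulation producing the identity for $\tilde{x} - x^*$ in which $(I_n - M)$ appears applied to the deterministic vector $\T{V}x^*$; once this form is obtained, the proof essentially reduces to the two one-line bounds above. The reason the earlier result of \citet{chowdhury2018iterative} needs a full $\varepsilon$-subspace embedding is that a weaker form of this identity forces one to bound $\opnorm{I_n - M}$ in operator norm; our identity instead isolates the action on a single vector, which is exactly what AMM with $r = \T{V}x^*$ provides.
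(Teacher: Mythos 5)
Your proof is correct and follows essentially the same route as the paper's (Lemma~\ref{lma:j-th-iteration-approximation-guarantee} specialized to one iteration): both isolate the error as a bounded operator applied to $(I_n - \T{V}\T{\bS}\bS V)\T{V}x^*$, bound that operator by $2$ using the constant subspace embedding, and invoke the AMM guarantee on the fixed vector $r = \T{V}x^*$. The only cosmetic difference is that you package the algebra via the resolvent identity, whereas the paper derives the same factorization by solving a perturbation equation for $\Delta = (I+(I+\Sigma^{-2})^{-1}E)^{-1}v^{(j)} - v^{(j)}$.
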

We show that the OSNAP distribution satisfies both of these two properties with a sketching dimension  of $r = O(n\log(n) + n/\varepsilon^2)$ and with $S\T{A}$ that can be computed in $O(\nnz(A) \cdot \log(n))$ time. Note that our algorithm (Algorithm~\ref{alg:main-alg}) is also more general, and when run for $t$ iterations, the error is proportional to $\varepsilon^t$. Our algorithm differs from that of \citet{chowdhury2018iterative} in that our algorithm needs a fresh sketching matrix in each iteration whereas their algorithm only needs one sketching matrix across iterations. 

Many natural problems in the streaming literature have been studied specifically with $2$ passes \cite{chen2021near,konrad2021two,assadi2020near,brody2011streaming}. Also in the case of federated learning, where minimizing the number of rounds of communication is important \cite{park2021few}, the smaller sketch sizes required by our algorithm (Algorithm~\ref{alg:main-alg}) gives an improvement over the algorithm of \citet{chowdhury2018iterative}.

We can also bound the cost of $\tilde{x}$ computed by our algorithm. For any $x \in \R^d$, let $\cost(x) = \opnorm{Ax - b}^2 + \lambda\opnorm{x}^2$. Bounds on $\opnorm{\tilde{x} - x^*}$ also let us obtain an upper bound on $\cost(\tilde{x})$. It can be shown that for any vector $x$, $\cost(\tilde{x}) = \opt + \opnorm{A(x^* - \tilde{x})}^2 + \lambda\opnorm{x^* - \tilde{x}}^2$. Thus, $\opnorm{\tilde{x} - x^*} \le \varepsilon\opnorm{x^*}$ implies that $\cost(\tilde{x}) = \opt + (\sigma^2 + \lambda)\varepsilon^2\opnorm{x^*}^2 \le (1 + (1 + \sigma^2/\lambda)\varepsilon^2)\opt$. Throughout the paper, we are most interested in the case $\sigma^2 \ge \lambda$, as it is when $\cost(\tilde{x})$ could be much higher than $\opt$. Setting $\varepsilon = O(\sqrt{\delta\lambda/\sigma^2})$, we obtain that the solution $\tilde{x}$ returned by Theorem~\ref{thm:intro-one-iteration-statement} is a $1+\delta$ approximation.

We also show that our algorithm can be used to obtain approximate solutions to Kernel Ridge Regression with a polynomial kernel. We show that instantiating the construction of \citet{ahle2020oblivious} with appropriate sketching matrices gives a fast way to apply sketches, satisfying the subspace embedding and AMM properties, to the matrix $\phi(A)$, where the $i$-th row of the matrix $\phi(A)$ is given by $A_{i*}^{\otimes p}$.

\subsection{Lower bounds for Ridge Regression}
It can be seen that the optimal solution $x^* =\T{A}(A\T{A} + \lambda I_n)^{-1}b$. Our algorithm, for one iteration, is simply to compute $\tilde{x} = \T{A}(A\T{\bS}\bS\T{A}+ \lambda I_n)^{-1}b$ for a matrix $\bS$ that satisfies the requirements in Theorem~\ref{thm:intro-one-iteration-statement}. All the algorithm does is substitute the expensive matrix product $A\T{A}$, which can take $O(n \cdot \nnz(A))$ time to compute, with the matrix product $A\T{\bS}\bS \T{A}$, which only takes $t_{\bS \T{A}} +mn^{\omega - 1}$ time to compute. Thus, constructing ``good'' distributions for which $\tilde{x}$ is a $1+\varepsilon$ approximation seems to be the most natural way to obtain fast algorithms for ridge regression. As discussed previously, OSNAP matrices with $m = O(n\log(n) + n\sigma^2/\lambda\varepsilon)$ and having near-optimal $t_{\bS\T{A}} = \tilde{O}(\nnz(A))$ can be used to compute a solution $\tilde{x}$ that is a $1+\varepsilon$ approximation. We show that for a large class of nice-enough distributions over $m \times d$ matrices $\calS$, if $\bS \sim \calS$ satisfies that $\tilde{x} = \T{A}(A\T{\bS}\bS \T{A} + \lambda I)^{-1}b$ is a $1+\varepsilon$ approximation with high probability, then $r = \Omega(n\sigma^2/\lambda\varepsilon)$. This shows that OSNAP matrices have both a near-optimal sketching dimension $r$ and near-optimal time $t_{\bS\T{A}}$. We show the lower bound by showing that for any ``nice'' distribution  $\calS$ for which $\tilde{x}$ is a $1+\varepsilon$ approximation with high probability, the distribution must also satisfy an Approximate Matrix Multiplication (AMM) guarantee, i.e., for any matrix $B$, for $\bS \sim \calS$, $\frnorm{\T{B}\T{\bS}\bS B - \T{B}B}$ must be small with high probability. We then show a lower bound on $m$ for any distribution $\calS$ which satisfies the AMM  guarantee. Here we demonstrate our techniques in the simple case of $n = 1$. Without loss of generality, we assume $\lambda = 1$.

Consider the ridge regression problem
$
\min_x	(\T{a}x - b)^2 +  \opnorm{x}^2,
$
where $a$  is an arbitrary $d$-dimensional vector. We have $\tilde{x} = a(\T{a}\T\bS{\bS}a + 1)^{-1}b$ and 
\begin{equation*}
    \text{cost}(\tilde{x}) = \left(\frac{\opnorm{a}^2b}{\opnorm{{\bS}a}^2 + 1} - b\right)^2+\frac{\opnorm{a}^2}{(\opnorm{\bS a}^2 + 1)^2}b^2
\end{equation*}
 whereas $\opt = {b^2}/{(\opnorm{a}^2+1)}$. For $\opnorm{a} \ge 100/\sqrt{\varepsilon}$, it turns out that unless $(1-\sqrt{\varepsilon}/\opnorm{a})\opnorm{a}^2 \le \opnorm{\bS a}^2 \le (1 + \sqrt{\varepsilon}/\opnorm{a})\opnorm{a}^2$, we will have $\cost(\tilde{x}) \ge (1 + \varepsilon/2)\opt$. Thus for $\tilde{x}$ to be a $1+\varepsilon/2$ approximation with probability $\ge 99/100$ for any arbitrary $a$, it must be the case that with probability $\ge 99/100$, $|\T{a}a - \T{a}\T{\bS}\bS a| = |\opnorm{a}^2 - \opnorm{\bS a}^2| \le (\sqrt{\varepsilon}/\opnorm{a})\opnorm{a}^2$ i.e., $\bS$ must satisfy the AMM property with parameter $\sqrt{\varepsilon}/\opnorm{a}$. We show an $\Omega(1/\delta^2)$ lower bound for any distribution which satisfies the $\delta$-AMM property, which gives a lower bound of $\Omega(\opnorm{a}^2/\varepsilon)$ for ridge regression for $n=1$.
 
 For the case of general $n$, we show that any ``nice'' distribution $\calS$ that gives $1+\varepsilon$ approximate solutions for ridge regression must satisfy the  $\sqrt{\varepsilon/n\sigma^2}$-AMM guarantee,  which by using the lower bound for AMM, gives an $\Omega(n\sigma^2/\varepsilon)$ lower bound for ridge regression.
 
To prove the lower bound, we crucially use the fact that the sketching distribution $\calS$ must satisfy that $\tilde{x}$ is a $1+\varepsilon$ approximation for \emph{any} particular ridge regression problem instance $(A,b)$ with high probability.
 \subsection{Lower bounds for AMM}
 We prove the following lower bound for oblivious sketching matrices that give AMM guarantees.
 \begin{theorem}[Informal]
 If $\calS$ is a distribution over $m \times d$ matrices such that for any $n \times d$ matrix $A$, $\bS \sim \calS$ satisfies with probability $\ge 99/100$, that
 \begin{equation*}
     \frnorm{A\T{\bS}\bS\T{A} - A\T{A}} \le \delta\frnorm{A}\frnorm{\T{A}},
 \end{equation*}
 for $\delta \le c/\sqrt{n}$, then $m = \Omega(1/\delta^2)$ where $c > 0$ is a small enough universal constant.
 \end{theorem}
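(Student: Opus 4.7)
The plan is to reduce the AMM lower bound to the classical $\Omega(1/\delta^2)$ lower bound for distributional Johnson--Lindenstrauss on a single vector, and then to re-derive the latter by a second-moment / anti-concentration argument on Gaussian inputs.

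\textbf{Step 1 (Reduction to the single-vector JL case).} I would first specialize the AMM hypothesis to a padded single-row matrix. For any vector $a\in\R^d$, let $A_a\in\R^{n\times d}$ be the matrix whose first row is $\T{a}$ and whose remaining $n-1$ rows are zero. A direct calculation gives $A_a\T{A_a}=\opnorm{a}^2\,e_1\T{e_1}$, $A_a\T{\bS}\bS\T{A_a}=\opnorm{\bS a}^2\,e_1\T{e_1}$, and $\frnorm{A_a}=\frnorm{\T{A_a}}=\opnorm{a}$. Applying the AMM hypothesis to $A=A_a$ therefore yields, with probability $\ge 99/100$ over $\bS\sim\calS$,
\[
|\,\opnorm{\bS a}^2-\opnorm{a}^2\,|\;\le\;\delta\,\opnorm{a}^2.
\]
So $\calS$ automatically acts as a distributional JL sketch with distortion $\delta$ and failure probability $1/100$ on every fixed vector $a\in\R^d$. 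The hypothesis $\delta\le c/\sqrt{n}$ is what guarantees the AMM condition is non-vacuous in the $n$-row setting in the first place (for $\delta\gg 1/\sqrt{n}$ even the zero sketch already satisfies the $n$-row AMM bound on orthonormal matrices), so the lower bound only has content in the stated regime.

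\textbf{Step 2 (Second-moment setup with Gaussian inputs).} I would then establish the $\Omega(1/\delta^2)$ lower bound for single-vector JL. Assume $d\ge 1/\delta^2$ (otherwise $m\le d$ already gives the bound) and let $h\sim N(0,I_d)$. Fubini applied to the joint law of $(\bS,h)$ produces a fixed realization $\bS_0$ of $\bS$ for which the event $|\opnorm{\bS_0 h}^2-\opnorm{h}^2|\le\delta\opnorm{h}^2$ holds with probability $\ge 9/10$ over $h$. The Gaussian chaos identities give $\E_h[\opnorm{\bS_0 h}^2]=\frnorm{\bS_0}^2$ and $\Var_h[\opnorm{\bS_0 h}^2]=2\frnorm{\T{\bS_0}\bS_0}^2$, and Cauchy--Schwarz on the (at most $m$) nonzero eigenvalues of $\T{\bS_0}\bS_0$ gives $\frnorm{\T{\bS_0}\bS_0}^2\ge \frnorm{\bS_0}^4/m$. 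Combined with the concentration of $\opnorm{h}^2$ around $d$ and the JL event itself, a first-moment comparison pins down $\frnorm{\bS_0}^2=\Theta(d)$, so $\Var_h[\opnorm{\bS_0 h}^2]\gtrsim d^2/m$.

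\textbf{Step 3 (Anti-concentration, the main obstacle).} The delicate part is converting this variance lower bound into a genuine lower bound on the deviation \emph{probability}, since Chebyshev only goes the wrong way. I would apply the Paley--Zygmund inequality to $X=(\opnorm{\bS_0 h}^2-\frnorm{\bS_0}^2)^2$; because $\opnorm{\bS_0 h}^2$ is a Gaussian quadratic form, its fourth central moment satisfies $\E[X^2]=O((\E X)^2)$, giving
\[
\Pr_h\!\left[\,|\opnorm{\bS_0 h}^2-\frnorm{\bS_0}^2|\;\ge\;\tfrac12\sqrt{\Var_h[\opnorm{\bS_0 h}^2]}\,\right]\;\ge\;\Omega(1).
\]
If $m=o(1/\delta^2)$, then the right-hand deviation is $\gtrsim d/\sqrt{m}\gg \delta d\approx \delta\opnorm{h}^2$, and so the JL event fails with probability strictly greater than $1/10$, contradicting the choice of $\bS_0$. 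Hence $m=\Omega(1/\delta^2)$. The main technical work, and where I expect the constants in the hypothesis $\delta\le c/\sqrt{n}$ to get pinned down, lies in making this Paley--Zygmund / fourth-moment step robust to arbitrary, possibly ill-conditioned, PSD matrices $\T{\bS_0}\bS_0$ of rank $\le m$.
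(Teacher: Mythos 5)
Your route is genuinely different from the paper's, and the first two steps are sound. Zero-padding reduces the $n$-row AMM hypothesis to a distributional JL guarantee on every fixed vector, and by testing on a Gaussian $h$ you get an exactly Gaussian $\T{V}h$ for any fixed realization $\bS_0=U\Sigma\T{V}$ -- so you avoid entirely the orthogonal-matrix-to-Gaussian total-variation coupling of \citet{jiang2017distances} that the paper's proof of Lemma~\ref{lma:initital-amm-lowerbound} relies on. The paper instead keeps the full $n$-column orthonormal test matrix and extracts the bound $\sum_i\sigma_i^4\lesssim d^2\delta^2$ from the \emph{mean-zero off-diagonal block} $\T{\bG_r}\Sigma^2\bG_l$ of $(1/d)\T{\bG}\Sigma^2\bG-I$; that is precisely what lets it dodge the anti-concentration issue you flag. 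Your reduction to $n=1$ discards that block, so you must confront the single quadratic form $Z=\opnorm{\bS_0h}^2=\sum_i\sigma_i^2\bg_i^2$ head-on, and that is where Step 3 has a real gap.

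The gap is this: Paley--Zygmund applied to $X=(Z-\E Z)^2$ (with $\E[X^2]\le 18(\E X)^2$, which is correct) gives only $\Pr\bigl[|Z-\E Z|\ge\tfrac12\sqrt{\Var Z}\bigr]\ge 1/32$, a one-sided bound on the \emph{absolute deviation from the mean} $\E Z=\frnorm{\bS_0}^2$. The JL event (after conditioning on $\opnorm{h}^2=d(1\pm O(\delta))$, valid since $d\ge 1/\delta^2$) confines $Z$ to a fixed interval $I$ of width $O(\delta d)$ centered at $d$ -- not at $\frnorm{\bS_0}^2$. Step 2 only locates $\frnorm{\bS_0}^2$ within a constant factor of $d$, so $\E Z$ may legitimately sit at distance $\Theta(\sqrt{\Var Z})$ from $I$, in which case ``$Z\in I$'' and ``$|Z-\E Z|\ge\tfrac12\sqrt{\Var Z}$'' are perfectly compatible and no contradiction follows. (Separately, $1/32<1/10$, so even the probabilities do not clash as you have set them up; that part is fixable by tuning the Markov threshold in Step 2 to make the residual failure probability, say, $1/50$.) What you actually need is a two-sided small-ball estimate for positive-coefficient Gaussian quadratic forms, $\sup_{t}\Pr\bigl[|Z-t|\le \epsilon\sqrt{\Var Z}\bigr]\le C\sqrt{\epsilon}$, i.e.\ Carbery--Wright for degree-two Gaussian polynomials (or a direct argument for $\sum_i a_i\bg_i^2$); note that the paper's own anti-concentration lemma (Lemma~\ref{lma:anti-concentration}) is a small-ball bound only at $t=0$ and does not suffice. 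With Carbery--Wright in place of Paley--Zygmund, $|I|\ge c\sqrt{\Var Z}$ forces $\sum_i\sigma_i^4=O(\delta^2d^2)$, and together with $\sum_i\sigma_i^2=\Omega(d)$ and Cauchy--Schwarz your argument closes, giving $m=\Omega(1/\delta^2)$.
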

 To the best of our knowledge, this is the first tight lower bound on the dimension of oblivious sketching matrices for AMM. The lower bound is tight up to constant factors as the CountSketch distribution with $m = O(1/\delta^2)$ rows has the above property. Note that for $\delta = \varepsilon/n$, the distribution $\calS$ as in the above theorem satisfies that for any $d \times n$ orthonormal matrix $V$, with probability $\ge 99/100$,
 \begin{equation*}
     \frnorm{\T{V}\T{\bS}\bS V - I_n} \le (\varepsilon/n)\frnorm{V}^2 = \varepsilon.
 \end{equation*}
 Thus, a distribution $\calS$ that has the  $\varepsilon/n$-AMM property also has the  $\varepsilon$-subspace embedding property. \citet{nelson2014lower} gives an $\Omega(n/\varepsilon^2)$ lower bound for such distributions, thus giving an  $\Omega(1/(\delta^2n))$ lower bound for $\delta$-AMM for small enough $\delta$. The above theorem gives a stronger $\Omega(1/\delta^2)$ lower bound.
 
 We now give a brief overview of our proof for $n = 1$. Consider $a \in \R^d$ to be a fixed unit vector and let $\calS$ be a distribution supported on $r \times d$ matrices as in the above theorem. Then we have $\Pr_{\bS \sim \calS}[|\T{a}\T{\bS}\bS a - 1| \le \delta] \ge 0.99$. Let $U\Sigma\T{V}$ be the singular value decomposition of $\bS$ with $\Sigma \in \R^{r \times r}$. Without loss of generality, we can assume that $\T{V}$ is independent of $\Sigma$ and that $\T{V}$ is a uniformly random orthonormal matrix. This follows from the fact that if $\bS$ is an \emph{oblivious} AMM sketch, then $\bS \bQ$ is also an \emph{oblivious} AMM sketch, where $\bQ$ is a uniformly random $d \times d$ orthogonal matrix independent of $\bS$. Thus, we have $\Pr_{\Sigma, \T{V}}[|\T{a}V\Sigma^2\T{V}a - 1| \le \delta] \ge 0.99$, where $\Sigma$ and $\T{V}$ are random matrices that correspond to the AMM sketch $\bS$, as described. 
 
 \citet{jiang2017distances} show that if $m = o(d)$, then the total variation distance between $\T{V}a$ and $(1/\sqrt{d})\bg$ is small, where $\bg$ is an $m$ dimensional vector with independent Gaussian entries. 
 Thus, we obtain that 
$
  \Pr_{\Sigma, \bg}[|(1/d)\T{\bg}\Sigma^2\bg - 1| \le \delta]
  = \Pr_{\Sigma, \bg}[|(1/d)\sum_{i=1}^m \sigma_i^2\bg_i^2 - 1| \le \delta] \ge 0.95.   
$

 If $\sum_{i=1}^m  \sigma_i^2 \le d/200$, then $(1/d)\sum_{i=1}^m \sigma_i^2\bg_i^2 \le 1/2$ with probability $\ge 0.99$ by Markov's  inequality. So, $\Pr_{\Sigma}[\sum_{i=1}^m \sigma_i^2 \le d/200]$ must be small. On the other hand, $\Var{((1/d)\sum_i \sigma_i^2\bg_i^2)} = (2/d^2)\sum_{i=1}^m \sigma_i^4$. Thus for $(1/d)\sum_{i=1}^m \sigma_i^2\bg_i^2$ to concentrate in the interval $(1-\delta, 1+\delta)$, we would expect $\sqrt{\Var{((1/d)\sum_i \sigma_i^2\bg_i^2)}} \approx \delta$, which implies $(2/d^2)\sum_{i=1}^m \sigma_i^4 \approx \delta^2$. Thus, with a reasonable probability, it must be simultaneously true that $d/200 \le \sum_{i=1}^m\sigma_i^2$ and $\sum_{i=1}^m \sigma_i^4 \approx d^2\delta^2/2$. Then, 
 \begin{equation*}
     d^2/(200)^2 \le \left(\sum_{i=1}^m \sigma_i^2\right) \le m\sum_{i=1}^m \sigma_i^4 \approx md^2\delta^2/2, 
 \end{equation*}
 thus obtaining $m \gtrsim \Omega(1/\delta^2)$. We extend this proof idea to the general case of $n \ge 1$.
 
 Non-asymptotic upper bounds on the total variation (TV) distance between Gaussian matrices and sub-matrices of random orthogonal matrices obtained in recent works \cite{jiang2017distances, product-of-gaussians} let us replace the matrices that are harder to analyze with Gaussian matrices in our proof of the lower bound for AMM. We believe this technique could be helpful in proving tight lower bounds for other types of sketching guarantees. 
 
 \subsection{Other Contributions}
 We also show lower bounds on the communication complexity  for approximating the \emph{optimal value} of the ridge regression problem, which is a different, but related problem, to the computation of $1+\varepsilon$ approximate \emph{solutions} which is studied in this paper. We obtain an $\Omega(1/\varepsilon^2)$ bit lower bound for ridge regression when $\sigma^2/\lambda \approx 1$. The hard instance is a two-party communication game on a $2 \times d$ matrix, for which one party has the first row and the other party has the second row. Surprisingly, if each party has $d/2$ columns of the design matrix $A = [A_1\, A_2]$, then they can compute the exact optimal value if the first party communicates $A_1\T{A_1}$ to the second party using $O(n^2)$ words of communication. This suggests that the turnstile streaming setting is harder than the column arrival setting for streaming algorithms for ridge regression. We stress that our algorithm to compute a $(1+\varepsilon)$-approximate solution to ridge regression works in the turnstile streaming setting by maintaining $\bS \T{A}$ in a stream. Nevertheless, to output the $d$ dimensional solution $\tilde{x}$, our algorithm needs to compute \emph{one} matrix-vector product with $\T{A}$ at the end of the stream, necessitating a second pass over the stream. 
 
 In contrast, we obtain $\Omega(d)$ bit communication complexity lower bounds for the Lasso and square-root Lasso objectives, even for computing $1+c$ approximations to the optimum value for a small enough constant $c$. Fast algorithms for these objectives seem to be harder to find and as the lower bounds indicate, there may not be sketching-based algorithms for these problems.
 
 We defer most of the proofs to the supplementary material. We also include an experiment there comparing the time required by our algorithm to the time required to compute the exact solution for ridge regression.
\section{Preliminaries}
For $n \in \mathbb{Z}$, $[n]$ denotes the set $\set{1,\ldots,n}$. Given a matrix $A \in \R^{n \times d}$, let $\nnz(A)$ denote the number of nonzero entries in $A$. For the matrix $A$, $\frnorm{A}$ denotes the Frobenius norm $(\sum_{i,j} A_{i,j}^2)^{1/2}$ and $\opnorm{A}$ denotes the spectral (operator) norm $\max_{x \ne 0}\opnorm{Ax}/\opnorm{x}$. When there is no ambiguity, throughout the paper we use $\sigma$ to denote $\opnorm{A}$. Let $A = U\Sigma\T{V}$ be the Singular Value Decomposition (SVD) with $U \in \R^{n \times \rho}$, $\Sigma \in \R^{\rho \times \rho}$, and $V \in \R^{d \times \rho}$, where $\rho = \text{rank}(A)$. For arbitrary matrices $M,N$, the symbol $t_{MN}$ denotes the time required to compute the product $MN$.

We use uppercase symbols $A,U,V,S,\ldots$ to denote matrices and lowercase symbols $a,b,u,v,\ldots$ to denote vectors. For a matrix $A$, $A_{i*}$ ($A_{*i}$) denotes the $i$-th row (column). We use boldface symbols $\bU,\bV,\bS,\br,\ldots$ to stress that these objects are random and are explicitly sampled from an appropriate distribution.

\begin{definition}[Approximate Matrix Multiplication]
    Given an integer $d$, we say that an $m \times d$ random matrix $\bS$ has the  $(\varepsilon, \delta)$-AMM property if for any matrices $A$ and $B$ with $d$ rows, we have that
    \begin{equation*}
        \frnorm{\T{A}\T{\bS}\bS B - \T{A}B} \le \varepsilon\frnorm{A}\frnorm{B}.
    \end{equation*}
    with probability $\ge 1-\delta$ over the randomness of $\bS$.
\end{definition}
We usually drop $\delta$ from the notation by picking it to be a small enough constant. 
\begin{definition}[Oblivious Subspace Embeddings]
    Given an integer $d$, an $m \times d$ random matrix $\bS$  is an $(\varepsilon,\delta)$-OSE for $n$-dimensional subspaces if for any arbitrary $d \times n$ matrix $A$, with probability $\ge 1-\delta$, simultaneously for all vectors $x$,
    \begin{equation*}
        \opnorm{\bS Ax}^2\in (1 \pm \varepsilon)\opnorm{Ax}^2.
    \end{equation*}
\end{definition}
For both OSEs and distributions satisfying the  $(\varepsilon,\delta)$-AMM property, two major parameters of importance are the size of the sketch ($m$), and the time to compute $\bS A$ ($t_{\bS A}$). See \citet{dw-sketching} and the references therein for several OSE constructions and their corresponding parameters.

\section{Iterative Algorithm for Ridge Regression}\label{sec:main-alg}
The following theorem describes the guarantees of the solution $\hat{x}$ returned by Algorithm~\ref{alg:main-alg}.
\begin{theorem}
	If Algorithm~\ref{alg:main-alg} samples independent sketching matrices $\bS_j \in \R^{m \times d}$ for all $j \in [t]$ satisfying the properties
	\begin{enumerate}
		\item with probability $\ge 1 - 1/(20t)$, for all vectors $x$, $\opnorm{\bS_j \T{A}x}^2 \in (1 \pm 1/2)\opnorm{\T{A}x}^2$, and
		\item for all arbitrary matrices $M,N$, with probability $\ge  1 - 1/(20t)$, 
		\begin{align*}
			\frnorm{\T{M}\T{\bS_j}\bS_j N - \T{M}N} \le \sqrt{\epsilon/4n}\frnorm{M}\frnorm{N},
		\end{align*}
	\end{enumerate}
	then with probability $\ge 9/10$, $\opnorm{\hat{x} - x^*} \le (\sqrt{\varepsilon})^t\opnorm{x^*}$ and further $\cost(\hat{x}) \le (1 + (\sigma^2/\lambda + 1)\varepsilon^t)\opt$.
	\label{thm:main-thm}
\end{theorem}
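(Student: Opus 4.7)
The plan is to establish a per-iteration contraction: for each $j \in [t]$, the error $r^{(j)} := x^* - \hat{x}^{(j)}$ satisfies $\opnorm{r^{(j)}} \le \sqrt{\epsilon}\,\opnorm{r^{(j-1)}}$ with high probability over $\bS_j$. Chaining then gives $\opnorm{\hat{x} - x^*} \le (\sqrt{\epsilon})^t\opnorm{x^*}$ because $r^{(0)} = x^*$ when $\hat{x}^{(0)} = 0$. Algorithm~\ref{alg:main-alg} is naturally described in its $n$-dimensional ``dual'' form $\hat{y}^{(j)} = \hat{y}^{(j-1)} + (A\T{\bS_j}\bS_j\T{A} + \lambda I)^{-1}(b - (A\T{A} + \lambda I)\hat{y}^{(j-1)})$ with $\hat{x}^{(j)} = \T{A}\hat{y}^{(j)}$, i.e.\ Richardson iteration on the normal equations preconditioned by the sketched inverse. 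Using $b = (A\T{A} + \lambda I)y^*$, a one-line manipulation gives
\begin{equation*}
    e^{(j)} := y^* - \hat{y}^{(j)} = (A\T{\bS_j}\bS_j\T{A} + \lambda I)^{-1}(A\T{\bS_j}\bS_j\T{A} - A\T{A})\,e^{(j-1)},
\end{equation*}
and then $r^{(j)} = \T{A}e^{(j)}$.

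To bound this recursion cleanly, I would plug in the SVD $A = U\Sigma\T{V}$ and note that $r^{(j-1)} = \T{A}e^{(j-1)}$ lies in the column span of $V$, so I can write $r^{(j-1)} = V s^{(j-1)}$. Setting $W_j := \T{V}\T{\bS_j}\bS_j V$, the recursion simplifies (using orthonormality of $U$ and symmetry of $W_j$) to the clean identity
\begin{equation*}
    r^{(j)} \;=\; V\,\Sigma\,(\Sigma W_j \Sigma + \lambda I)^{-1}\,\Sigma\,(W_j - I)\,s^{(j-1)},
\end{equation*}
so $\opnorm{r^{(j)}}$ is at most the product of $\opnorm{\Sigma(\Sigma W_j \Sigma + \lambda I)^{-1}\Sigma}$ and $\opnorm{(W_j - I)s^{(j-1)}}$.

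I would then bound the two factors using the two hypotheses. Property (1) gives $\tfrac12 I \preceq W_j \preceq \tfrac32 I$, hence $\Sigma W_j \Sigma + \lambda I \succeq \tfrac12 \Sigma^2 + \lambda I$, so the first factor is at most $\max_i \sigma_i^2/(\tfrac12\sigma_i^2 + \lambda) \le 2$. For the second factor, I invoke property (2) with $M = V$ and $N = V s^{(j-1)}$: because $s^{(j-1)}$ is determined by $\bS_1,\dots,\bS_{j-1}$ and is therefore independent of $\bS_j$, the AMM bound applies to these fixed matrices and yields $\opnorm{(W_j - I)s^{(j-1)}} = \frnorm{\T{V}\T{\bS_j}\bS_j(Vs^{(j-1)}) - s^{(j-1)}} \le \sqrt{\epsilon/4n}\cdot\sqrt{n}\cdot\opnorm{s^{(j-1)}} = (\sqrt{\epsilon}/2)\opnorm{s^{(j-1)}}$. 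Multiplying the two bounds produces the desired contraction $\opnorm{r^{(j)}} \le \sqrt{\epsilon}\,\opnorm{r^{(j-1)}}$.

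A union bound over the $t$ iterations (two failure events per iteration, each of probability at most $1/(20t)$) leaves total failure probability at most $1/10$, giving the main distance bound. The cost statement follows from the identity $\cost(\hat{x}) = \opt + \opnorm{A(x^* - \hat{x})}^2 + \lambda\opnorm{x^* - \hat{x}}^2$ recorded in Section~1: the first error term is at most $\sigma^2\epsilon^t\opnorm{x^*}^2$, the second is at most $\lambda\epsilon^t\opnorm{x^*}^2$, and $\opt \ge \lambda\opnorm{x^*}^2$ (evaluating the objective at $x^*$) converts these into the stated relative error $(\sigma^2/\lambda + 1)\epsilon^t$. I expect the main obstacle to be the clean SVD-based derivation of the recursion for $r^{(j)}$ and recognizing that the $\sqrt{\epsilon/4n}$-AMM condition, when instantiated at $M = V$ and $N = V s^{(j-1)}$, exactly reproduces the single-vector approximate multiplication hypothesis of Theorem~\ref{thm:intro-one-iteration-statement}; the independence of $s^{(j-1)}$ from $\bS_j$, ensured by using fresh sketches at each iteration, is what makes this step valid.
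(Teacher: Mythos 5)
Your proof is correct and follows essentially the same route as the paper: a per-iteration contraction $\opnorm{r^{(j)}} \le \sqrt{\varepsilon}\,\opnorm{r^{(j-1)}}$ derived through the SVD, with the constant subspace embedding controlling the sketched resolvent and the AMM hypothesis applied to the pair $(V, Vs^{(j-1)})$ — your $s^{(j-1)}$ is exactly the paper's $v^{(j)}$, since the paper's residual target $x^{*(j)}$ coincides with your $r^{(j-1)}$, and the independence of this vector from the fresh sketch $\bS_j$ is used in the same way. The only difference is presentational: you obtain the recursion directly as a Richardson-iteration error propagator and bound $\opnorm{\Sigma(\Sigma W_j \Sigma + \lambda I)^{-1}\Sigma} \le 2$, whereas the paper reaches the identical contraction via a residual-subproblem lemma followed by a perturbation bound on $(I+(I+\Sigma^{-2})^{-1}E)^{-1}$.
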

We prove a few lemmas which give intuition about the algorithm before proving the above theorem.
\begin{algorithm}
\caption{\textsc{RidgeRegression}}
\label{alg:main-alg}
\begin{algorithmic}
\STATE {\bfseries Input:} $A \in \R^{n \times d}, b \in \R^d, t \in \mathbb{Z}, \varepsilon,\lambda > 0$
\STATE {\bfseries Output:} $\hat{x} \in \R^d$
\STATE $b^{(0)} \gets b$, $\tilde{x}^{(0)} \gets 0_d$, $y^{(0)} \gets 0_n$
\FOR{$j=1,\ldots,t$}
\STATE $b^{(j)} \gets b^{(j-1)} - \lambda y^{(j-1)} - A\tilde{x}^{(j-1)}$
\STATE $\bS_j \gets $ $1/2$ subspace embedding for the rowspace of $A$ and has the  $\sqrt{\varepsilon/4n}$ AMM property
\STATE $y^{(j)} \gets (A \T{\bS_j}\bS_j\T{A} + \lambda I)^{-1}b^{(j)}$
\STATE $\tilde{x}^{(j)} \gets \T{A}y^{(j)}$
\ENDFOR
\STATE $\hat{x} \gets \sum_{j=1}^t \tilde{x}^{(j)}$
\STATE {\bfseries return} $\hat{x}$
\end{algorithmic}
\end{algorithm}

After $i-1$ iterations of the algorithm, $\sum_{j=1}^{i-1} \tilde{x}^{(j)}$ is the estimate for the optimum solution $x^*$. At a high level, in the $i$-th iteration, the algorithm is trying to compute an approximation to the difference $x^* - \sum_{j=1}^{i-1}\tilde{x}_j$ by computing an approximate solution to the problem
\begin{equation*}
    \min_x \opnorm{A(x + \sum_{j=1}^{i-1}\tilde{x}_j) - b}^2 + \lambda\opnorm{x + \sum_{j=1}^{i-1}\tilde{x}_j}^2.
\end{equation*}

Let $x^{*(j)} = \T{A}(A\T{A} + \lambda I)^{-1}b^{(j)}$. The following lemma shows that the solution to the above problem is $x^{*(i)}$.
\begin{lemma}
For all $i$, $x^* = x^{*(i)}+\sum_{j=1}^{i-1}\tilde{x}^{(j)}$.
\label{lma:i-th-iteration-objective}
\end{lemma}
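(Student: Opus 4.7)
The plan is to prove the identity by induction on $i$, using only the recursive definitions of $b^{(j)}$, $y^{(j)}$, and $\tilde{x}^{(j)}$ given in Algorithm~\ref{alg:main-alg}, together with the definition $x^{*(j)} = \T{A}(A\T{A}+\lambda I)^{-1}b^{(j)}$. The base case $i=1$ is immediate: since $\tilde{x}^{(0)} = 0_d$ and $y^{(0)} = 0_n$, the recursion gives $b^{(1)} = b$, hence $x^{*(1)} = \T{A}(A\T{A}+\lambda I)^{-1}b = x^*$, while the empty sum $\sum_{j=1}^{0}\tilde{x}^{(j)}$ is $0_d$.

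For the inductive step, assume $x^* = x^{*(i)} + \sum_{j=1}^{i-1}\tilde{x}^{(j)}$. The target identity for $i+1$ reduces, after subtracting the inductive hypothesis, to showing the one-step relation $x^{*(i)} - x^{*(i+1)} = \tilde{x}^{(i)}$. This is where the algebraic heart of the lemma lies, and it is the step I would carry out carefully, although it is not really an obstacle so much as a direct computation.

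To handle the one-step relation, I would write
\begin{equation*}
    x^{*(i)} - x^{*(i+1)} = \T{A}(A\T{A}+\lambda I)^{-1}\bigl(b^{(i)} - b^{(i+1)}\bigr) = \T{A}(A\T{A}+\lambda I)^{-1}\bigl(\lambda y^{(i)} + A\tilde{x}^{(i)}\bigr),
\end{equation*}
using the recursion $b^{(i+1)} = b^{(i)} - \lambda y^{(i)} - A\tilde{x}^{(i)}$. Substituting $\tilde{x}^{(i)} = \T{A}y^{(i)}$ turns the parenthesized expression into $(\lambda I + A\T{A})y^{(i)}$, which cancels the inverse and leaves $\T{A}y^{(i)} = \tilde{x}^{(i)}$, completing the induction.

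The only subtle point worth emphasizing is that $x^{*(j)}$ is defined with the true Gram matrix $A\T{A}$, not its sketched surrogate $A\T{\bS_j}\bS_j\T{A}$, so the cancellation $(A\T{A}+\lambda I)^{-1}(A\T{A}+\lambda I) = I$ in the computation above goes through verbatim and is deterministic; the sketching matrices $\bS_j$ enter the lemma only through their effect on $y^{(j)}$ via $b^{(j)}$, but not through the definition of $x^{*(j)}$ itself. Hence the lemma is purely an algebraic identity independent of the quality of the sketches $\bS_j$, and no probabilistic argument is required at this stage.
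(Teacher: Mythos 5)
Your proof is correct, but it takes a genuinely different route from the paper's. The paper proves the identity non-inductively: it defines the residual objective $f(x) = \opnorm{A(x + \sum_{j=1}^{i-1}\tilde{x}^{(j)}) - b}^2 + \lambda\opnorm{x + \sum_{j=1}^{i-1}\tilde{x}^{(j)}}^2$, computes its minimizer via the first-order condition $\nabla f = 0$, unrolls the recursion for $b^{(i)}$ to identify that minimizer as $x^{*(i)}$ (using $(\T{A}A+\lambda I)^{-1}\T{A} = \T{A}(A\T{A}+\lambda I)^{-1}$), and then concludes from uniqueness of the ridge minimizer that $x^* = x^{*(i)} + \sum_{j=1}^{i-1}\tilde{x}^{(j)}$. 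You instead induct on $i$ and reduce the step to the telescoping identity $x^{*(i)} - x^{*(i+1)} = \tilde{x}^{(i)}$, which falls out of the one-line cancellation $(A\T{A}+\lambda I)^{-1}(A\T{A}+\lambda I) = I$ after substituting $\tilde{x}^{(i)} = \T{A}y^{(i)}$. Your argument is more elementary and self-contained for the stated identity, and it sidesteps the optimization machinery entirely; what the paper's version buys is the extra interpretive fact, used for intuition in the surrounding text, that $x^{*(i)}$ is exactly the solution of the residual ridge problem at iteration $i$ — a fact your induction does not establish but which the lemma as stated does not require. Your closing remark that the identity is deterministic and independent of the sketches $\bS_j$ (which enter only through $y^{(j)}$ and hence $b^{(j)}$, not through the definition of $x^{*(j)}$) is accurate and worth making explicit.
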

\begin{proof}
Let $f(x) = \opnorm{A(x + \sum_{j=1}^{i-1}\tilde{x}^{(j)}) - b}^2 + \lambda\opnorm{x + \sum_{j=1}^{i-1}\tilde{x}^{(j)}}^2$ and $z$ be the solution realizing $\min_x f(x)$. We have $\nabla_x f(x)\,|_{x = z} = 0$ giving
$
    z = (\T{A}A + \lambda I)^{-1}(\T{A}b - (\T{A}A + \lambda I)\sum_{j=1}^{i-1}\tilde{x}^{(j)}).
$

Noting that $\tilde{x}^{(j)} = \T{A}y^{(j)}$ for all $j$ and that for all $i$, $b^{(i)} = b - \lambda \sum_{j=1}^{i-1}y^{(j)} - \sum_{j=1}^{i-1}A\tilde{x}^{(j-1)}$, we obtain that $z = (\T{A}A + \lambda I)^{-1}\T{A}b^{(i)}$. Now using the matrix identity $(\T{A}A + \lambda I)^{-1}\T{A} = \T{A}(A\T{A} + \lambda I)^{-1}$, we get $z = x^{*(i)}$ is the optimal solution to $\min_x f(x)$.

As $x^{*(i)}$ is the optimal solution, it is also clear that $x^* = x^{*(i)} + \sum_{j=1}^{i-1}\tilde{x}^{(j)}$ since otherwise $x^*$ is not the optimal solution for the original ridge regression problem, which is a contradiction.
\end{proof}
So by the end of the $(j-1)$-th iteration, the estimate to $x^*$ is off by $x^{*(j)}$. The algorithm is approximating $x^{*(j)} = \T{A}(A\T{A} + \lambda I)^{-1}b^{*(j)}$ with $\tilde{x}^{(j)} = \T{A}(A\T{\bS_j}{\bS_j}\T{A} + \lambda I)^{-1}b^{*(j)}$.
The following lemma gives the error of this approximation assuming that the sketching matrix $\bS_j$ has both the subspace embedding and AMM properties. This is the part where our proof differs from that of the proof of \citet{chowdhury2018iterative}.
\begin{lemma}
	If $\bS_j$ is drawn from a distribution such that for any fixed matrix $\T{A}$, $\bS_j$ is a $1/2$ subspace embedding with probability $1-\delta$ and for any fixed matrices $M,N$, with probability $1-\delta$, 
	\begin{equation*}
		\frnorm{\T{M}\T{\bS_j}\bS_jN - \T{M}N} \le \sqrt{\varepsilon/n}\frnorm{M}\frnorm{N},
	\end{equation*}
	then with probability $\ge 1 - 2\delta$, $\opnorm{x^{*(j)} - \tilde{x}^{(j)}} \le (2\sqrt{\varepsilon})\opnorm{x^{*(j)}}$.
	\label{lma:j-th-iteration-approximation-guarantee}
\end{lemma}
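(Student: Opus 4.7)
My plan is to rewrite $x^{*(j)} - \tilde{x}^{(j)}$ using the compact SVD $A = U\Sigma\T{V}$ so that it factors through $G := \T{V}\T{\bS_j}\bS_jV$, and then apply the two hypotheses separately: the $1/2$-subspace-embedding hypothesis will bound a resolvent-style operator norm by a constant, and the $\sqrt{\varepsilon/n}$-AMM hypothesis will bound the action of $G - I$ on the single fixed vector $\T{V}x^{*(j)}$.

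A routine SVD computation (completing $U$ to a full orthonormal basis $[U,U_\perp]$ and using $\T{U}U_\perp = 0$) gives the closed forms
\begin{align*}
x^{*(j)} = V\Sigma(\Sigma^2+\lambda I)^{-1}\T{U}b^{(j)}, \qquad \tilde{x}^{(j)} = V\Sigma(\Sigma G\Sigma + \lambda I)^{-1}\T{U}b^{(j)}.
\end{align*}
Applying the resolvent identity $X^{-1}-Y^{-1} = Y^{-1}(Y-X)X^{-1}$ with $X=\Sigma^2+\lambda I$ and $Y=\Sigma G\Sigma+\lambda I$ (so $Y - X = \Sigma(G-I)\Sigma$), and noting that $\T{V}x^{*(j)} = \Sigma(\Sigma^2+\lambda I)^{-1}\T{U}b^{(j)}$, yields the clean factorization
\begin{align*}
x^{*(j)} - \tilde{x}^{(j)} = V\Sigma(\Sigma G\Sigma + \lambda I)^{-1}\Sigma(G-I)\T{V}x^{*(j)}.
\end{align*}

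Next I bound the two factors on the right. The subspace-embedding hypothesis gives $\opnorm{G - I} \le 1/2$, so $G \succeq (1/2)I$ and $\Sigma G\Sigma + \lambda I \succeq (1/2)\Sigma^2 + \lambda I$; conjugating by $\Sigma$ shows $\opnorm{\Sigma(\Sigma G\Sigma + \lambda I)^{-1}\Sigma} \le \max_i \sigma_i^2/((1/2)\sigma_i^2+\lambda) \le 2$. For the remaining factor, I invoke the $\sqrt{\varepsilon/n}$-AMM hypothesis on the fixed pair $(M,N) = (V, x^{*(j)})$, which is independent of $\bS_j$; since $x^{*(j)} = V\T{V}x^{*(j)}$ lies in the column span of $V$, this gives
\begin{align*}
\opnorm{(G-I)\T{V}x^{*(j)}} = \frnorm{\T{V}\T{\bS_j}\bS_j x^{*(j)} - \T{V}x^{*(j)}} \le \sqrt{\varepsilon/n}\,\frnorm{V}\opnorm{x^{*(j)}} = \sqrt{\varepsilon}\opnorm{x^{*(j)}},
\end{align*}
using $\frnorm{V} \le \sqrt{n}$. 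Combining these with $\opnorm{V}\le 1$ gives $\opnorm{x^{*(j)}-\tilde{x}^{(j)}} \le 2\sqrt{\varepsilon}\opnorm{x^{*(j)}}$, and a union bound over the subspace-embedding and AMM events produces the success probability $1-2\delta$.

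The key insight, and the point at which the argument departs from that of \citet{chowdhury2018iterative}, is invoking AMM on the pair $(V, x^{*(j)})$ rather than on the pair $(V, V)$. The latter would only deliver the weak bound $\opnorm{G-I} \le \sqrt{n\varepsilon}$, costing a superfluous $\sqrt{n}$ factor; exploiting instead the fact that $x^{*(j)}$ is a single fixed vector---which in turn is why Algorithm~\ref{alg:main-alg} draws a \emph{fresh} sketch in each iteration, so that $x^{*(j)}$ is independent of $\bS_j$---is exactly what lets us trade the $\varepsilon$-dependent subspace embedding requirement of Chowdhury--Drineas for a constant subspace embedding combined with a weaker $\sqrt{\varepsilon/n}$-AMM guarantee.
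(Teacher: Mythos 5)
Your proof is correct and follows essentially the same route as the paper's: both factor the error through $E = \T{V}\T{\bS_j}\bS_jV - I$, use the constant subspace embedding to bound the resolvent factor by $2$, and crucially apply the AMM guarantee to the fixed pair $(V, x^{*(j)})$ rather than to $(V,V)$, picking up the factor $\frnorm{V} \le \sqrt{n}$ that cancels the $1/\sqrt{n}$ in the AMM parameter. The only differences are cosmetic — you invoke the resolvent identity directly and keep $\lambda$ general, whereas the paper normalizes $\lambda = 1$ and conjugates by $\Sigma^{-1}$ before making the same perturbation argument.
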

\begin{proof}
Let $A = U\Sigma \T{V}$ be the singular value decomposition of $A$. We have $x^{*(j)} = V\Sigma(I + \Sigma^2)^{-1}\T{U}b^{(j)}$. By using $(I+\Sigma^2)^{-1} = \Sigma^{-1}(I+\Sigma^{-2})^{-1}\Sigma^{-1}$, we get $x^{*(j)}= V(I+\Sigma^{-2})^{-1}\Sigma^{-1}\T{U}b^{(j)}$. Let $v^{(j)}= (I+\Sigma^{-2})^{-1}\Sigma^{-1}\T{U}b^{(j)}$ which gives $x^{*(j)} = Vv^{(j)}$. 

Similarly, $\tilde{x}^{(j)} = V(\T{V}\T{\bS_j}\bS_j V + \Sigma^{-2})^{-1}\Sigma^{-1}\T{U}b^{(j)}$. Writing $\T{V}\T{\bS_j}\bS_j V = I_n + E$, we have
\begin{align*}
	&\tilde{x}^{(j)} = V(I+\Sigma^{-2} + E)^{-1}\Sigma^{-1}\T{U}b^{(j)}\\
	&= V(I + (I+\Sigma^{-2})^{-1}E)^{-1}(I+\Sigma^{-2})^{-1}\Sigma^{-1}\T{U}b^{(j)}\\
	&= V(I+(I+\Sigma^{-2})^{-1}E)^{-1}v^{(j)}.
\end{align*}
As $\opnorm{E} \le 1/2$, the inverse $(I + (I+\Sigma^{-2})^{-1}E)^{-1}$ is well-defined.
Since the matrix $V$ has orthonormal columns, $\opnorm{\tilde{x}^{(j)} - x^{*(j)}} = \opnorm{(I+(I+\Sigma^{-2})^{-1}E)^{-1}v^{(j)} - v^{(j)}}$. Let $(I+(I+\Sigma^{-2})^{-1}E)^{-1}v^{(j)} = v^{(j)} + \Delta$  and we have $v^{(j)} = v^{(j)} + (I+\Sigma^{-2})^{-1}Ev^{(j)} + (I+(I+\Sigma^{-2})^{-1}E)\Delta$ which implies $(I+(I+\Sigma^{-2})^{-1}E)\Delta = - (I+\Sigma^{-2})^{-1}Ev^{(j)}$. Finally,
\begin{align*}
	(1/2)\opnorm{\Delta} &\le \sigma_{\min}(I+(I+\Sigma^{-2})^{-1}E)\opnorm{\Delta}\\
	&\le \opnorm{(I+(I+\Sigma^{-2})^{-1}E)\Delta}\\
	&= \opnorm{(I+\Sigma^{-2})^{-1}Ev^{(j)}} \le \opnorm{Ev^{(j)}}.
\end{align*}
which gives $\opnorm{x^{*(j)} - {\tilde x}^{(j)}} = \opnorm{V\Delta} \le 2\opnorm{Ev^{(j)}}$. If the matrix $\bS_j$ has a $\sqrt{\varepsilon/n}$-AMM property i.e., 
\begin{align*}
\opnorm{\T{V}\T{\bS_j}\bS_j Vv^{(j)} - \T{V}Vv^{(j)}} &\le \sqrt{\varepsilon/n}\frnorm{V}\opnorm{v^{(j)}}\\
&= \sqrt{\varepsilon}\opnorm{v^{(j)}},
\end{align*}
 we have $\opnorm{Ev^{(j)}} \le \sqrt{\varepsilon}\opnorm{Vv^{(j)}}$ and that $\opnorm{x^{*(j)} - \tilde{x}^{(j)}} \le 2\sqrt{\varepsilon}\opnorm{v^{(j)}} = 2\sqrt{\epsilon}\opnorm{x^{*(j)}}$.
\end{proof}

\begin{proof}[Proof of Theorem~\ref{thm:main-thm}]
	By a union bound, with probability $\ge 9/10$, in all $t$ iterations, we can assume that the matrices $\bS_j$ have both the subspace embedding property for the  column space of $\T{A}$, as well as the AMM property for $V$ and $v^{(j)}$.
	
From Lemma~\ref{lma:i-th-iteration-objective}, $\opnorm{\hat{x} - x^*} = \opnorm{\tilde{x}^{(t)} + \sum_{i=1}^{t-1}\tilde{x}^{(i)} - x^*} = \opnorm{\tilde{x}^{(t)} - x^{*(t)}} \le (\sqrt{\varepsilon})\opnorm{x^{*(t)}}$.
We also have  
\begin{equation*}
x^* = x^{*(j-1)} + \sum_{i=1}^{j-2}\tilde{x}^{(i)} = x^{*(j)} + \sum_{i=1}^{j-1}\tilde{x}^{(i)}    
\end{equation*}
which implies $x^{*(j)} = x^{*(j-1)} - \tilde{x}^{(j-1)}$ and therefore, $\opnorm{x^{*(j)}} = \opnorm{\tilde{x}^{(j-1)} - x^{*(j-1)}} \le \sqrt{\varepsilon}\opnorm{x^{*(j-1)}}$ for all $j$, where the last inequality follows from Lemma~\ref{lma:j-th-iteration-approximation-guarantee}. Now noting that $x^{*(1)} = x^*$, we obtain $\opnorm{\hat{x} - x^*} \le (\sqrt{\varepsilon})^t\opnorm{x^*}$ and using the Pythagorean theorem,
\begin{align*}
	\cost(\hat{x}) &\le \opt + (\sigma^2+\lambda)\opnorm{\hat{x} - x^*}^2\\
	&\le \opt + (\sigma^2 + \lambda)\varepsilon^t\opnorm{x^*}^2.
\end{align*}
As $\lambda\opnorm{x^*}^2 \le \opt$, we obtain the result.
\end{proof}
We now show that the OSNAP distribution has both the properties required by Algorithm~\ref{alg:main-alg}.
\subsection{Properties of OSNAP}
\citet{osnap} proposed OSNAP, an oblivious subspace embedding. OSNAP embeddings are parameterized by their number $m$ of rows and their sparsity $s$. Essentially, OSNAP is a random $m \times d$ matrix $\bS$, with each column having exactly $s$ nonzero entries at random locations. Each nonzero entry is $\pm 1/\sqrt{s}$ with probability $1/2$ each. They show that if the positions of the nonzero entries satisfy an ``expectation'' property and if the nonzero values are drawn from a $k$-wise independent distribution for a sufficiently large $k$, then $\bS$ is an OSE.
\begin{theorem}[Informal, \cite{osnap}]
If $m = O(n^{1+\gamma}\poly(\log(n),1/\varepsilon)/\varepsilon^2)$ and $s = O(1/\gamma\varepsilon)$, then OSNAP is an $\varepsilon$-OSE for $n$ dimensional spaces. Further, $t_{\bS A} = O(\nnz(A)/\gamma\varepsilon)$ for any $d \times n$ matrix $A$.
\end{theorem}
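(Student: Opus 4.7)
The plan is to reduce the OSE guarantee to a statement about the spectral concentration of $\T{V}\T{\bS}\bS V$ around the identity, where $V \in \R^{d \times n}$ is an orthonormal basis for the column span of $A$, and then bound this via the moment method. Specifically, by the standard rotation argument, it suffices to show that with probability at least $1-\delta$ one has $\opnorm{\T{V}\T{\bS}\bS V - I_n} \le \varepsilon$, since this inequality is equivalent to $\opnorm{\bS Ax}^2 \in (1\pm\varepsilon)\opnorm{Ax}^2$ holding simultaneously for every $x$. So the whole task is to prove this concentration, and then separately argue the time bound $t_{\bS A} = O(\nnz(A)/\gamma\varepsilon)$.

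First I would set up the moment method. Writing $\bS = (1/\sqrt{s})\,\bR$ where $\bR \in \R^{m \times d}$ has columns with exactly $s$ signed nonzero entries at random locations, I would expand
\begin{equation*}
    \E\,\bigl[\,\mathrm{tr}\bigl((\T{V}\T{\bS}\bS V - I_n)^{2k}\bigr)\,\bigr]
\end{equation*}
as a sum over closed walks of length $2k$ in a suitable multigraph whose vertices correspond to rows of $V$ and edges correspond to column-index / row-index pairs of the nonzero entries of $\bS$. The combinatorial identity to exploit is that a term survives the expectation only if every edge of the walk appears an even number of times, because the signs of the nonzero entries are $\pm 1$ with mean zero and $k$-wise independent for $k$ large enough. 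This is the classical OSNAP bookkeeping: each column of $\bS$ contributes $s$ independent $\pm 1/\sqrt{s}$ entries, and their ``expectation property'' across columns ensures that the cross-column moment contributions factor cleanly.

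Next I would bound the resulting combinatorial sum. The standard strategy is to group walks by their shape (a rooted tree/graph on at most $k+1$ vertices), count the number of ways each shape can be embedded into $[n] \times [m] \times [d]$, and pay the appropriate factor of $\opnorm{V_{*j}}^2$ or $\|V\|_F^2 = n$ per leg. The outcome is a bound of the form $\E\,\mathrm{tr}((\T{V}\T{\bS}\bS V - I)^{2k}) \lesssim (C k n^{1+\gamma}/m)^{k}$ (up to polylogarithmic factors) for a suitable $\gamma$ that depends on how the counting is balanced between the column sparsity $s$ and the row count $m$. Then Markov's inequality applied to the $2k$-th Schatten norm, with $k = \Theta(\log(n/\delta))$, converts this into the spectral bound $\opnorm{\T{V}\T{\bS}\bS V - I_n} \le \varepsilon$ with probability $\ge 1 - \delta$, provided $m = \Omega(n^{1+\gamma}\poly(\log n, 1/\varepsilon)/\varepsilon^2)$ and $s = \Omega(1/\gamma\varepsilon)$. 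The runtime bound is immediate: each of the $\nnz(A)$ nonzero entries of $A$ touches exactly $s$ entries of $\bS$ in the product $\bS A$, so $t_{\bS A} = O(s \cdot \nnz(A)) = O(\nnz(A)/\gamma\varepsilon)$.

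The main obstacle is the trace moment bookkeeping in the second step: one must tightly count the walks on the multigraph and match the $m,s,n,\gamma,\varepsilon$ dependencies so that the per-shape bound multiplied by the number of shapes gives exactly the claimed sketching dimension. In particular, the tradeoff between $s$ (which controls how badly a single column can concentrate mass on a few rows of $\bS$) and $m$ (which controls the ambient dimension available to spread that mass) is what produces the $n^{1+\gamma}$ factor, and getting this tradeoff clean requires carefully pairing each ``heavy'' column with enough row freedom. This is precisely the combinatorial core of \cite{osnap,cohen2016nearly}, and for the purposes of this paper I would cite their moment bound rather than re-derive it, simply verifying that the parameter regime $m = O(n^{1+\gamma}\poly(\log n, 1/\varepsilon)/\varepsilon^2)$, $s = O(1/\gamma\varepsilon)$ fits the hypotheses of their theorem with $\delta$ set to a small constant.
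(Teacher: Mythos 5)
The paper does not prove this statement at all --- it is stated as an informal restatement of the main theorem of \citet{osnap} (see also \citet{cohen2016nearly}) and is used purely as a citation. Your sketch accurately describes the standard trace-moment / even-walk-counting argument that underlies those works, your runtime observation ($s$ updates per nonzero of $A$) is correct, and since you ultimately defer the combinatorial core to the cited references, your approach coincides with what the paper itself does.
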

In the supplementary, we show that OSNAP with \emph{any} sparsity parameter $s$ and $m = \Omega(1/\varepsilon^2)$ has the $\varepsilon$-AMM property. We state our result as the following lemma.
\begin{lemma}
    OSNAP with $m = \Omega(1/\varepsilon^2\delta)$ and sparsity parameter $s \ge 1$ has the  $(\varepsilon,\delta)$-AMM property.
\end{lemma}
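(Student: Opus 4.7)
\medskip

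\noindent\textbf{Proof Proposal.} The plan is the standard second-moment-plus-Markov approach: I would bound $\mathbb{E}[\frnorm{\T{A}\T{\bS}\bS B - \T{A}B}^2]$ and then apply Markov's inequality to turn the $\varepsilon^2 \delta$ moment bound into an $\varepsilon^2$ tail bound with failure probability $\delta$. Concretely, writing $\frnorm{\T{A}\T{\bS}\bS B - \T{A}B}^2 = \sum_{i,j} (\T{A_{*i}}\T{\bS}\bS B_{*j} - \T{A_{*i}} B_{*j})^2$, linearity of expectation reduces the task to showing $\mathbb{E}[(\T{a}\T{\bS}\bS b - \T{a}b)^2] \le c \opnorm{a}^2 \opnorm{b}^2 / m$ for every pair of fixed vectors $a,b \in \R^d$ and some absolute constant $c$. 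Summing over column pairs then gives $\mathbb{E}[\frnorm{\T{A}\T{\bS}\bS B - \T{A}B}^2] \le c\frnorm{A}^2 \frnorm{B}^2 / m$, and choosing $m = \Omega(1/\varepsilon^2 \delta)$ with an appropriate constant finishes the argument.

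The core calculation is the variance of $\T{a}\T{\bS}\bS b = \sum_k (\sum_i \bS_{ki} a_i)(\sum_j \bS_{kj} b_j) = \sum_{k,i,j} \bS_{ki} \bS_{kj} a_i b_j$. Here I would exploit two structural facts about OSNAP: (i) in each column $i$, exactly $s$ entries are nonzero, each of magnitude $1/\sqrt{s}$, so $\sum_k \bS_{ki}^2 = 1$ deterministically; and (ii) the signs are independent (say from a $4$-wise independent family) across all nonzero positions. Fact (i) implies that the ``diagonal'' contribution $\sum_k \sum_i \bS_{ki}^2 a_i b_i$ equals $\T{a}b$ exactly, so the entire error comes from the off-diagonal piece $E := \sum_k \sum_{i \ne j} \bS_{ki} \bS_{kj} a_i b_j$. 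Expanding $\mathbb{E}[E^2]$, sign independence forces the nonzero contributions to come only from the two pairings $(k,i,j,k,i,j)$ and $(k,i,j,k,j,i)$, giving
\begin{equation*}
\mathbb{E}[E^2] = \sum_k \sum_{i \ne j} \mathbb{E}[\bS_{ki}^2 \bS_{kj}^2] \bigl( a_i^2 b_j^2 + a_i b_i a_j b_j \bigr).
\end{equation*}

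Next I would compute $\mathbb{E}[\bS_{ki}^2 \bS_{kj}^2]$. Since the nonzero pattern within distinct columns $i \ne j$ is chosen independently (each column picks its $s$ nonzero rows by an independent draw, possibly via the block construction used in OSNAP), and since $\mathbb{E}[\bS_{ki}^2] = (s/m)(1/s) = 1/m$, we get $\mathbb{E}[\bS_{ki}^2 \bS_{kj}^2] = 1/m^2$. Therefore
\begin{equation*}
\mathbb{E}[E^2] = \tfrac{1}{m} \sum_{i \ne j} \bigl(a_i^2 b_j^2 + a_i b_i a_j b_j\bigr) \le \tfrac{1}{m} \bigl( \opnorm{a}^2 \opnorm{b}^2 + (\T{a}b)^2 \bigr) \le \tfrac{2}{m}\opnorm{a}^2 \opnorm{b}^2,
\end{equation*}
which gives the per-pair variance bound with $c = 2$. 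Summing and applying Markov yields the lemma.

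The main obstacle I anticipate is purely bookkeeping around the exact OSNAP nonzero distribution: verifying that $\mathbb{E}[\bS_{ki}^2 \bS_{kj}^2] = 1/m^2$ holds for $i \ne j$ in whichever specific OSNAP variant the paper uses, and that $4$-wise independence of signs is sufficient to kill the cross terms in $\mathbb{E}[E^2]$. There is no hard concentration to do since AMM only requires second moments, so the proof should fit cleanly into a page and does not need the heavier machinery (matrix Chernoff, moment method at higher order) that OSNAP's subspace embedding analysis requires. Notably, the bound is independent of the sparsity parameter $s$, which matches the lemma's statement that $s \ge 1$ suffices.
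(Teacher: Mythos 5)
Your proof is correct and its core calculation coincides with the paper's. The paper's appendix establishes the $(\varepsilon,\delta,2)$-JL moment property for OSNAP via exactly this fourth-moment computation with the indicator variables $\delta_{i,j}$ (satisfying $\sum_i\delta_{i,j}=s$ and $\E[\prod_{(i,j)\in S}\delta_{i,j}]\le (s/m)^{|S|}$) and $4$-wise independent signs, in the special case $a=b=x$ a unit vector; the AMM lemma then follows from the standard implication that the $(\varepsilon,\delta,2)$-JL moment property yields AMM with comparable parameters. You instead carry out the bilinear version of the same second-moment bound directly, observe that the diagonal contribution is exact because $\sum_k \bS_{ki}^2=1$ deterministically, and sum over column pairs before applying Markov; this avoids invoking the black-box JL-moment-to-AMM reduction (or a polarization step) but is otherwise the same argument, and both routes correctly yield a bound independent of $s$. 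One small caveat: in the OSNAP model the paper uses, the nonzero patterns of distinct columns need not be fully independent --- one only has the upper bound $\E[\delta_{k,i}\delta_{k,j}]\le (s/m)^2$ --- so your claimed equality $\E[\bS_{ki}^2\bS_{kj}^2]=1/m^2$ should be the inequality $\E[\bS_{ki}^2\bS_{kj}^2]\le 1/m^2$, which is all your argument needs.
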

\subsection{Running times : Embedding vs Current Work}
As discussed in the introduction, the algorithm of \citet{chowdhury2018iterative} is better than ours when $O(\log(1/\varepsilon))$ passes over the matrix $A$ are allowed, as we require a fresh $1/2$ subspace embedding in each iteration and they require only one $1/2$ subspace embedding. However, our algorithm is faster when the algorithm is restricted to $t = O(1)$ passes over the input. We compare the running time of our algorithm with theirs when both algorithms are run only for $1$ iteration to obtain $1+\varepsilon$ approximate solutions. For ease of exposition, we consider the case when $\sigma^2/\lambda = O(1)$.

From Theorem~\ref{thm:restatement-iterative-algorithm-one-iteration}, the algorithm of \citet{chowdhury2018iterative} requires a $c\sqrt{\varepsilon}$ subspace embedding to output a $1+\varepsilon$ approximation to ridge regression. By applying a sequence of CountSketch and OSNAP sketches, we can obtain a $c\sqrt{\varepsilon}$ embedding with $m = n\poly(\log(n))/\varepsilon$ and $t_{S\T{A}} = O(\nnz(A) + n^3\poly(\log(n))/\sqrt{\varepsilon})$ or by directly applying OSNAP, we obtain $m = n\poly(\log(n))/\varepsilon$ and $t_{S\T{A}} = O(\nnz(A)\poly(\log(n))/\sqrt{\varepsilon})$.

From Theorem~\ref{thm:main-thm}, our algorithm needs a random matrix that has the $1/2$ subspace embedding property and the  $c\sqrt{\varepsilon/n}$-AMM property to compute a $1+\varepsilon$ approximation. OSNAP with $m = O(n/\varepsilon + n\poly(\log(n)))$ and $s = O(\poly(\log(n)))$ has this property giving $t_{S\T{A}} = O(\nnz(A)\poly(\log(n)))$.

Finally, the total time to compute $\tilde{x}$ is \[O(t_{S\T{A}} + mn^{\omega-1} + n^{\omega}),\]where $\omega < 3$ denotes the matrix multiplication exponent. 
For the algorithm of \citet{chowdhury2018iterative}, depending on the sketching matrices used as described above, the total running time is either \[O(\nnz(A) + n^{3}\poly(\log(n))/\sqrt{\varepsilon} + n^{\omega}\poly(\log(n))/\varepsilon)\] or \[O(\nnz(A)\poly(\log(n))/\sqrt{\varepsilon} + n^{\omega}\poly(\log(n))/\varepsilon).\] 
For Algorithm~\ref{alg:main-alg} with $t=1$, the total running time is $O(\nnz(A)\poly(\log(n)) + n^{\omega}\poly(\log(n))/\varepsilon)$. 
Thus we have that when $\nnz(A) \approx n^{\omega}/\varepsilon$, our algorithm is asymptotically faster than their algorithm, as our running time does not have the $n^{3}$ term and $\nnz(A)/\sqrt{\varepsilon}$ terms. 
We note that although the fastest matrix multiplication algorithms are sometimes considered impractical, Strassen's algorithm is already practical for reasonable values of $n$, and gives $\omega < \log_2 7$. 
If we consider the algorithm of \citet{chowdhury2018iterative} using just the OSNAP embedding, our algorithm is faster by a factor of $1/\sqrt{\varepsilon}$, which could be substantial when $\varepsilon$ is small.

Even non-asymptotically, our result shows that we can replace the sketching matrix in their algorithm with a sketching matrix that is both sparser and has fewer rows, while still obtaining a $1+\varepsilon$ approximation. Both of these properties help the algorithm to run faster.

\section{Applications to Kernel Ridge Regression}\label{sec:krr}
A function $k : X \times X \rightarrow \R$ is called a positive semi-definite kernel if it satisfies the following two conditions:
	(i) For all $x,y \in X$, $k(x,y) = k(y,x)$, and
	(ii) for any finite set $S = \set{s_1,\ldots, s_t} \subseteq X$, the matrix $K = [k(s_i,s_j)]_{i,j \in [t]}$ is positive semi-definite.
Mercer's theorem states that a function $k(\cdot, \cdot)$ is a positive semi-definite kernel as defined above if and only if there exists a function $\phi$ such that for all $x,y \in X$, $k(x,y) = \T{\phi(x)}\phi(y)$. Many machine learning algorithms only work with inner products of the data points and therefore all such algorithms can work using the function $k$ directly instead of the explicit mapping $\phi$, which in principle could even be infinite dimensional, for example, as in the case of the Gaussian kernel. 
	

Let the rows of a matrix $A$ be the input data points $a_1,\ldots,a_n$, and let $\phi(A)$ denote the matrix obtained by applying the function $\phi$ to each row of the matrix $A$. The kernel ridge regression problem (see \citet{murphy2012machine} for more details) is defined as
\begin{equation*}
	c^* = \argmin_c \opnorm{\phi(A) \cdot c - b}^2 + \lambda \opnorm{c}^2.
\end{equation*}
We have that $c^* = \T{\phi(A)} (\phi(A) \cdot \T{\phi(A)} + \lambda I)^{-1}b$ and the value predicted for an input $x$ is given by $\T{\phi(x)}c^* = \T{\phi(x)}\T{\phi(A)}(\phi(A) \cdot \T{\phi(A)} + \lambda I)^{-1}b$. Letting $\beta = ({\phi}(A)\T{\phi(A)} + \lambda I)^{-1}b$ we have $\T{\phi(x)}c^* = \sum_{i}k(a_i, x)\beta_i$. Now, note that the $(i,j)$-th entry of the matrix $K := \phi(A) \cdot \T{\phi(A)}$ is given by $k(a_i, a_j)$ and therefore, to solve the kernel ridge regression problem, we do not need the explicit map $\phi( \cdot )$ and can work directly with the kernel function. Nevertheless, to construct the matrix $K$, we need to query the kernel function $k$ for $\Theta(n^2)$ pairs of inputs, which may be prohibitive if the kernel evaluation is slow.

Our result for ridge regression shows that if $\bS$ is a $1/2$ subspace embedding and gives a $\varepsilon/2\sqrt{n}$ AMM guarantee, then
\begin{equation*}
\tilde{c} = \T{\phi(A)} \cdot (\phi(A) \cdot \T{\bS} \bS \cdot \T{\phi(A)} + \lambda I)^{-1}b
\end{equation*}
satisfies $\|\tilde{c} - c^*\| \le \varepsilon\opnorm{c^*}$ and if $\tilde{\beta} := (\T{\phi}(A) \cdot \T{\bS} \bS \cdot \T{\phi(A)} + \lambda I)^{-1}b$, then for a new input $x$, the prediction on $x$ can be computed as $\sum_{i}k(a_i,x)\tilde{\beta}_i$. For polynomial kernels, $k(x,y) = \langle x,y\rangle^p$, given the matrix $A$, it is possible to compute $\bS \cdot \T{\phi(A)}$ for a random matrix $\bS$ that satisfies both the subspace embedding property and the AMM property, and hence obtain $\tilde{\beta}$ without computing the kernel matrix. The next theorem follows from the proofs of Theorems~1 and 3 of \citet{ahle2020oblivious}.

\begin{theorem}
\label{thm:nice-kernel-sketch}
	For all positive integers $n,d,p$, there exists a distribution on linear sketches $\Pi^p \in \R^{m \times d^p}$ parameterized by sparsity $s$ such that: if $m = \Omega(p/\varepsilon^{2})$ and any sparsity $s$, then $\Pi^p$ has the $\varepsilon$-AMM property, while if $m = \tilde{\Omega}(p^4n/\varepsilon^2)$ and $s = \tilde\Omega((p^{4}/\varepsilon^2\poly(\log(nd/\epsilon)))$, then $\Pi^p$ has the $\varepsilon$ subspace embedding property. Further, given any matrix $A \in \R^{n \times d}$, the matrix $\Pi^p \cdot \T{\phi(A)}$ for $\phi(x) = x^{\otimes p}$ can be computed in $\tilde{O}(pnm + ps \cdot \nnz(A))$ time. 
\end{theorem}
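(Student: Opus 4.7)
The plan is to instantiate the recursive sketching construction of \citet{ahle2020oblivious} using OSNAP as the base sketches, and then track how the AMM and subspace embedding properties propagate through the recursive composition. Concretely, I would follow the proofs of their Theorems~1 and~3, replacing each ``inner'' sketch by an OSNAP with parameters tuned to the desired property.

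First, I would recall the construction. For a positive integer $p$, one builds a balanced binary tree with $p$ leaves. Each internal node hosts a fresh base sketch $\bT$; if its two children sketch tensors of degree $p_L$ and $p_R$, producing vectors $u = \bS_L(x_1\otimes\cdots\otimes x_{p_L})$ and $v = \bS_R(x_{p_L+1}\otimes\cdots\otimes x_p)$, the node emits $\bT(u\otimes v)$. Using the identity $(\bS_L x)\otimes(\bS_R y) = (\bS_L\otimes\bS_R)(x\otimes y)$, the overall map $\Pi^p:\R^{d^p}\to\R^m$ is linear. We instantiate each $\bT$ as an OSNAP matrix with number of rows and sparsity depending on which of the two claims (AMM or subspace embedding) we want to satisfy.

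For the AMM claim, I would follow the argument in the proof of Theorem~1 of \citet{ahle2020oblivious}. Their key composition lemma says that the AMM error of the composed sketch is bounded by the \emph{sum} of the AMM errors over the $O(p)$ internal nodes, as opposed to a product. So if each base OSNAP satisfies the $(\epsilon/p, 1/(20p))$-AMM property, then $\Pi^p$ is an $\epsilon$-AMM sketch by a union bound over the tree. By our OSNAP AMM lemma, such an $(\epsilon/p, 1/(20p))$-AMM OSNAP needs $m = \Omega(p/\epsilon^{2})$ rows (after a minor accounting adjustment reflecting that the same OSNAP distribution is used at each of the $O(p)$ nodes), and importantly the bound holds for \emph{any} sparsity $s \ge 1$, matching the claim that $s$ is unconstrained for AMM.

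For the subspace embedding claim, I would invoke the proof of Theorem~3 of \citet{ahle2020oblivious} essentially as a black box. Their proof establishes that the composed sketch is an $\epsilon$ subspace embedding for a fixed $n$-dimensional subspace provided each base sketch is an OSE with error $\epsilon/\poly(p, \log(nd/\epsilon))$ and failure probability $1/\poly(nd/\epsilon)$ on the relevant intermediate subspaces. Substituting OSNAP parameters then yields the claim that $m = \tilde{\Omega}(p^4 n/\epsilon^2)$ rows and sparsity $s = \tilde{\Omega}(p^4/\epsilon^2 \cdot \poly(\log(nd/\epsilon)))$ suffice, by the original OSNAP guarantees of \citet{osnap, cohen2016nearly} used at each internal node. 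The running time bound follows from processing the $n$ rows independently: for row $a_i$, the $p$ leaf-level sketches cost $O(s\cdot\nnz(a_i))$ each, and the $O(p)$ internal-node applications each cost $\tilde O(m)$ using the FFT-based convolution trick described in \citet{ahle2020oblivious} to compute $\bT(u\otimes v)$ from $u$ and $v$. Summing yields the stated $\tilde O(pnm + ps\cdot\nnz(A))$.

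The main obstacle, and the reason this proof requires more than a one-line citation, is a careful additive-error accounting of the AMM property across the $\log p$ levels of the tree: obtaining $m = \Omega(p/\epsilon^2)$ rather than $\Omega(p^2/\epsilon^2)$ relies on this additive rather than multiplicative composition of AMM errors, which is precisely what the composition lemma in \citet{ahle2020oblivious} delivers. The subspace embedding portion is a more routine substitution of OSNAP for their base sketch, with the $\tilde\Omega$ factors absorbing the poly-logarithmic slack required by the union bound over the tree.
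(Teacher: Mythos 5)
Your overall plan --- instantiate the recursive construction of \citet{ahle2020oblivious} with appropriate base sketches and push the two properties through the tree --- matches the paper's approach, and your treatment of the subspace embedding claim (invoke their Theorem~3 essentially as a black box with OSNAP parameters) is what the paper does. But the AMM part has a genuine gap. You propose that each internal node carry an $(\varepsilon/p, 1/(20p))$-AMM guarantee and that the errors add over the $O(p)$ nodes via a union bound. OSNAP's AMM guarantee is proved by a second-moment/Chebyshev argument, so its row count scales as $m = \Omega(1/\varepsilon^2\delta)$ with a $1/\delta$ (not $\log(1/\delta)$) dependence on the failure probability; a per-node error of $\varepsilon/p$ with failure probability $1/(20p)$ therefore forces $m = \Omega(p^3/\varepsilon^2)$, not the claimed $\Omega(p/\varepsilon^2)$. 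The ``minor accounting adjustment'' you allude to is exactly where the argument breaks. The paper avoids this by never making per-node high-probability statements: it proves that OSNAP with $m = \Omega(1/\delta\varepsilon^2)$ rows and \emph{any} sparsity satisfies the $(\varepsilon,\delta,2)$-JL \emph{moment} property (Lemma~\ref{lma:jl-moment-property}), and then invokes Theorem~1 of \citet{ahle2020oblivious}, which composes these second moments across the tree in root-sum-of-squares fashion --- each node only needs moment error $\varepsilon/\sqrt{4q+2} = \varepsilon/\Theta(\sqrt{p})$ --- and applies Markov's inequality a single time to the composed sketch. That is what yields $m = \Omega(p/\varepsilon^2)$ with unconstrained sparsity.

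A second, smaller issue is your choice of OSNAP at the internal nodes. The paper takes $T_{\text{base}}$ (the leaf sketches applied to the rows of $A$) to be OSNAP but $S_{\text{base}}$ (the internal-node sketches applied to $u \otimes v \in \R^{m^2}$) to be TensorSketch. The FFT-based trick for evaluating a sketch of $u \otimes v$ in $\tilde{O}(m)$ time without materializing the tensor product is specific to TensorSketch; a generic OSNAP at an internal node would require forming $u \otimes v$ explicitly at a cost of $O(m^2)$ per node per row, which would destroy the $\tilde{O}(pnm)$ term in the stated running time.
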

We show that the construction of \citet{ahle2020oblivious} gives the above theorem when $S_{\text{base}}$ is taken to be  TensorSketch and $T_{\text{base}}$ is taken to be OSNAP. To prove the theorem, we first prove a lemma which shows that the OSNAP distribution has the JL-moment property. For a random variable $\bX$, let $\|\bX\|_{L^t} := (\E[|\bX|^t])^{1/t}$.
\begin{definition}[JL-Moment Property]
	For every positive integer $t$ and parameters $\varepsilon,\delta \ge 0$, we say a random matrix $\bS \in \R^{m \times d}$ satisfies the $(\varepsilon, \delta, t)$-JL moment property if for any $x \in \R^{d}$ with $\opnorm{x}=1$,
	\begin{equation*}
		\|\opnorm{\bS x}^2 - 1\|_{L^t} \le \varepsilon\delta^{1/t}\ \text{and}\ \E[\opnorm{\bS x}^2] = 1.
	\end{equation*}
\end{definition}
\begin{lemma}
\label{lma:jl-moment-property}
    If $\bS$ is an OSNAP matrix with $m = \Omega(1/\delta\varepsilon^2)$ rows and any sparsity parameter $s \ge 1$, then $\bS$ has the $(\varepsilon, \delta, 2)$-JL moment property.
\end{lemma}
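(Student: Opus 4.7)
The plan is to directly compute the second moment of $\opnorm{\bS x}^2 - 1$ for a fixed unit vector $x$, relying on two mild properties of OSNAP: (i) each column of $\bS$ has \emph{exactly} $s$ nonzero entries (so the diagonal part of $\opnorm{\bS x}^2$ is deterministic) and (ii) the signs $\sigma_{ij}$ placed at nonzero positions are $4$-wise independent within each row, with independence across rows, while the column supports are chosen independently. Writing $\bS_{ij} = \eta_{ij}\sigma_{ij}/\sqrt{s}$ with $\eta_{ij}\in\{0,1\}$, expand
\begin{equation*}
\opnorm{\bS x}^2 \;=\; \frac1s\sum_{i,j}\eta_{ij}x_j^2 \;+\; \frac1s\sum_i\sum_{j\ne k}\eta_{ij}\eta_{ik}\sigma_{ij}\sigma_{ik}x_jx_k.
\end{equation*}
The diagonal piece equals $\sum_j x_j^2\cdot(\sum_i\eta_{ij}/s)=\sum_j x_j^2 = 1$ because $\sum_i\eta_{ij}=s$ for every $j$. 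This already yields $\E[\opnorm{\bS x}^2]=1$, since for $j\ne k$ we have $\E[\sigma_{ij}\sigma_{ik}]=0$ by pairwise independence within a row.

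Let $Z := \opnorm{\bS x}^2 - 1 = (1/s)\sum_i\sum_{j\ne k}\eta_{ij}\eta_{ik}\sigma_{ij}\sigma_{ik}x_jx_k$. To compute $\E[Z^2]$, I would expand it as a sum over $(i,i',j,k,j',k')$ with $j\ne k$ and $j'\ne k'$ and analyze which terms survive. If $i\ne i'$, independence across rows factors the sign expectation into $\E[\sigma_{ij}\sigma_{ik}]\E[\sigma_{i'j'}\sigma_{i'k'}]$, both of which vanish. For $i=i'$, $4$-wise independence of signs in one row forces each index in the multiset $\{j,k,j',k'\}$ to appear an even number of times; since $j\ne k$ and $j'\ne k'$, this is possible only when $\{j,k\}=\{j',k'\}$, giving exactly the two configurations $(j',k')=(j,k)$ and $(j',k')=(k,j)$, each contributing the same (positive) quantity.

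Using $\eta_{ij}^2=\eta_{ij}$ and independence of the supports across columns (so $\E[\eta_{ij}\eta_{ik}]=(s/m)^2$ for $j\ne k$), the surviving contribution simplifies to
\begin{equation*}
\E[Z^2] \;=\; \frac{2}{s^2}\sum_i\sum_{j\ne k}\E[\eta_{ij}\eta_{ik}]\,x_j^2x_k^2 \;=\; \frac{2m}{s^2}\cdot\frac{s^2}{m^2}\sum_{j\ne k}x_j^2x_k^2 \;\le\; \frac{2}{m}\bigl(\sum_j x_j^2\bigr)^2 \;=\; \frac{2}{m}.
\end{equation*}
Choosing $m \ge 2/(\varepsilon^2\delta)$ then gives $\|\opnorm{\bS x}^2-1\|_{L^2}=\sqrt{\E[Z^2]}\le\varepsilon\sqrt{\delta}$, which is precisely the $(\varepsilon,\delta,2)$-JL moment property.

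The main delicate point is the combinatorial bookkeeping in Step (2): one must be careful to rule out all mixed index patterns (e.g.\ $j=j'$, $k\ne k'$, and partial overlaps) before reducing to the two "pair-up" configurations. The observation that exact column sparsity makes the diagonal part identically $1$ rather than merely $1$ in expectation is what removes a potentially dominant noise term and lets the bound degrade only as $1/m$ independent of the sparsity $s$; I expect verifying this removal cleanly, and confirming that the independence assumptions invoked are consistent with the OSNAP hypotheses recalled immediately before the lemma, to be the only places requiring care.
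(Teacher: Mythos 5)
Your proposal is correct and follows essentially the same route as the paper's proof: the same expansion into a deterministic diagonal part (equal to $1$ because each column has exactly $s$ nonzeros) plus a cross-term, the same use of $4$-wise independence to reduce the second moment to the two pair-up configurations, and the same bound $\E[\delta_{i,j}\delta_{i,j'}] \le (s/m)^2$ yielding $\E[Z^2] \le 2/m$. The only cosmetic difference is that you assume exact independence of column supports where the paper only requires the one-sided bound $\E[\prod_{(i,j)\in S}\delta_{i,j}] \le (s/m)^{|S|}$, which suffices since the surviving terms are nonnegative.
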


\begin{proof}[Proof of Theorem~\ref{thm:nice-kernel-sketch}]
Let $q = 2^{\lceil \log_2(p) \rceil}$. The construction of the sketch for polynomial kernels of \citet{ahle2020oblivious} uses two distributions of matrices $S_{\text{base}}$ and $T_{\text{base}}$. The proof of Theorem~1 of \citet{ahle2020oblivious} requires that the distributions $S_{\text{base}}$ and $T_{\text{base}}$ have the  $(\varepsilon/\sqrt{4q+2},\delta,2)$-JL moment property. We take $S_{\text{base}}$ to be TensorSketch and $T_{\text{base}}$ to be OSNAP. As Lemma~\ref{lma:jl-moment-property} shows, OSNAP with $m = \Omega(q/\delta\varepsilon^2)$ and any sparsity $s$ has the $(\varepsilon/\sqrt{4q+2},\delta,2)$-JL moment property. 

From Theorem~3 of \citet{ahle2020oblivious}, we also have that for $m = \tilde\Omega(p^4n/\varepsilon^2)$ and sparsity parameter $s = \tilde{\Omega}((p^4/\varepsilon^2) \cdot \poly(\log(nd/\varepsilon)))$, the sketch has the $\varepsilon$-subspace embedding property. The running time of applying the sketch to $\T{\phi(A)}$ also follows from the same theorem.
\end{proof}
Thus for the sketch to have both the $1/2$ subspace embedding property and the  $\varepsilon/\sqrt{4n}$ AMM property, we need to take $m = \tilde{\Omega}(p^4n + pn/\epsilon^2)$ and $s = \tilde{\Omega}(p^4\poly(\log(nd)))$.  The time to compute $\Pi^p \cdot \T{\phi(A)}$ is $\tilde{O}(p^5 \nnz(A) + p^5n^2 + p^2n/\varepsilon^2)$ and the time to compute $\tilde{\beta}$ is 
$\tilde{O}(p^5 \nnz(A) + p^5n^2 + p^2n^2/\varepsilon^2 + p^4n^{\omega} + pn^{\omega}/\epsilon^2)$, thereby obtaining a near-input sparsity time algorithm for polynomial kernel ridge regression.


\section{Lower bounds}
\label{sec:lowerbounds}
Dimensionality reduction, by multiplying the input matrix $A$ on the right with a random sketching matrix, seems to be the most natural way to speed up ridge regression. Recall that in our algorithm above, we show that we only need the sketching distribution to satisfy a simple AMM guarantee, along with being a constant factor subspace embedding, to be able to obtain a $1+\varepsilon$ approximation. We show that, in this natural framework, the bounds on the number of rows required for a sketching matrix we obtain are nearly optimal for all ``non-dilating'' distributions.

More formally, we show lower bounds in the restricted setting where for an oblivious random matrix $\bS$, the vector $\tilde{x} = \T{A}(A\T\bS{\bS}\T{A} + \lambda I)^{-1}b$ must be a $1+\varepsilon$ approximation to the ridge regression problem with probability $\ge 99/100$. We show that the matrix ${\bS}$ must at least have $m = \Omega(n\sigma^2/\lambda\varepsilon)$ rows if $\bS$ is ``non-dilating''.

\begin{definition}[Non-Dilating Distributions]
A distribution $\calS$ over $m \times d$ matrices is a Non-Dilating distribution if for all $d \times n$ orthonormal matrices $V$,
\begin{equation*}
    \Pr_{\bS \sim \calS}[\opnorm{\bS V} \le O(1)] \ge 99/100.
\end{equation*}
\end{definition}
Most sketching distributions proposed in previous work satisfy the property $\E[\T{V}\T{\bS}\bS V] = \T{V}V = I$. Thus the condition of non-dilation is not very restrictive. For example, a Gaussian distribution with $O(n)$ rows satisfies this condition, and other sketching distributions such as SRHT, CountSketch, and OSNAP with $O(n\log(n))$ rows all satisfy this condition with $O(1)$ replaced by at most $O(\log(n))$. Though we prove our lower bounds for non-dilating distributions with $O(1)$ distortion, the lower bounds also hold with distributions with $O(\log(n))$ distortion with at most an $O(\log(n))$ factor loss in the lower bound.

For $n' \ge n$, let $O^{n' \times n}$ denote the collection of $n' \times n$ orthonormal matrices $V \in \R^{n' \times n}$ i.e., $\T{V}V = I_n$. Without loss of generality, we assume that $\lambda = 1$.

Assume that there is a distribution $\calS$ over $m \times d$ matrices such that given an arbitrary matrix $A \in \R^{n \times d}$ and $b \in \R^n$ such that for $\bS \sim \calS$, with probability $\ge 99/100$,
\begin{align*}
	\opnorm{A\tilde{x} - b}^2 + \opnorm{\tilde{x}}^2
	\le (1+\varepsilon)\opt,
\end{align*}
where $\tilde{x} = \T{A}(A\T{\bS}\bS\T{A} + I)^{-1}b$. Given an instance $(A,b)$, let $\bS$ be a $\text{good}_{A,b}$ matrix if the above event holds, i.e., $\tilde{x}$ is a $1+\epsilon$ approximation. Let $b$ be a fixed unit vector. Thus, from our assumption,
\begin{equation}
	\Pr_{\bU\sim O^{n \times n}, \bV\sim O^{d \times n}, \bS \sim \calS}[\text{$\bS$ is $\text{good}_{\sigma \bU\T{\bV},b}$}] \ge 99/100.
	\label{eqn:necessary-condition}
\end{equation}
For the problem $(\sigma \bU\T{\bV}, b)$ where $b$ is a fixed unit vector, we have   $\opt = 1/(1+\sigma^2)$. We also have for $v = (\Sigma^{-2} + \T{\bV}\T{\bS}\bS \bV)^{-1}\Sigma^{-1}\T{\bU}b$ that
\begin{align*}
	\text{cost}(\tilde{x}) - \opt &= \T{v}E\Sigma(I-(\Sigma^{2} + I)^{-1})\Sigma E v\\
	&\ge \lambda_{\min}(I-(\Sigma^{2} + I)^{-1})\opnorm{\Sigma Ev}^2,
\end{align*}
where $E = \T{\bV}\T{\bS}\bS \bV - \T{\bV}\bV$, which is the error in approximating the identity matrix using the sketch $\bS$, and $\Sigma$ is the matrix of singular values of $\sigma \bU\T{\bV}$.
In our case, $\Sigma = \sigma I_n$ for some $\sigma \ge 1$ which implies that
\begin{equation*}
	\text{cost}(\tilde{x}) - \opt \ge \frac{1}{2}\opnorm{E(\sigma^{-2}I + \T{\bV}\T{\bS}\bS \bV)^{-1}\T{\bU}b}^2
\end{equation*}
once we cancel out $\Sigma$ and $\Sigma^{-1}$. Thus, if $\bS$ is $\text{good}_{(\sigma \bU\T{\bV},b)}$,
\begin{equation*}
	\opnorm{E(\sigma^{-2}I + \T{\bV}\T{\bS}\bS \bV)^{-1}\T{\bU}b}^2 \le \frac{2\varepsilon}{1+\sigma^2} \le \frac{2\varepsilon}{\sigma^2}.
\end{equation*}
Therefore,
$
	\Pr_{\bU,\bV,\bS}[\opnorm{E(\sigma^{-2}I + \T{\bV}\T{\bS}\bS \bV)^{-1}\T{\bU}b}^2\le {2\varepsilon}/{\sigma^2}]
	\ge 	\Pr_{\bU, \bV, \bS}[\text{$\bS$ is $\text{good}_{\sigma \bU\T{\bV},b}$}] \ge 99/100. 
$
Now, for a fixed unit vector $b$, the vector $\T{\bU}b$ is a uniformly random unit vector that is independent of $\bV$ and $\bS$. Thus,
\begin{equation*}
	\Pr_{\bV,\bS, \br}[\opnorm{E(\sigma^{-2}I + \T{\bV}\T{\bS}\bS \bV)^{-1}\br}^2\le {2\varepsilon}/{\sigma^2}] \ge 0.99,
\end{equation*}
where above and throughout the section, $\br$ is a uniformly random unit vector. Now we transform this property of the random matrix $\bS$ into a probability statement about the Frobenius norm of a certain matrix.

\begin{lemma}[Random vector to Frobenius Norm]
	If $\bM \in \R^{n \times n}$ is a random matrix independent of the random uniform vector $\br$ such that
$
		\Pr_{\bM, \br}[\opnorm{\bM \br}^2 \le a] \ge 99/100,
$
	then $\Pr_{\bM}[\frnorm{\bM}^2 \le Can] \ge 9/10$ for large enough constant $C$.
	\label{lma:random-vector-to-frobenius}
\end{lemma}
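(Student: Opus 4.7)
The plan is a two-stage reduction: first convert the joint-probability hypothesis into a conditional-on-$\bM$ tail bound via Markov, and then turn that conditional bound into a Frobenius-norm estimate using a Paley--Zygmund anti-concentration argument.

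For the first stage, I would let $p(\bM) := \Pr_{\br}[\opnorm{\bM\br}^2 > a \mid \bM]$. The hypothesis gives $\E_{\bM}[p(\bM)] \le 1/100$, so Markov immediately yields $\Pr_{\bM}[p(\bM) > 1/10] \le 1/10$. Hence with probability at least $9/10$ over $\bM$ I may assume that $\Pr_{\br}[\opnorm{\bM\br}^2 > a \mid \bM] \le 1/10$, and it suffices to prove that every such ``good'' $\bM$ satisfies $\frnorm{\bM}^2 \le Can$.

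For the second stage I would study the first two moments of $X := \opnorm{\bM\br}^2 = \T{\br}(\T{\bM}\bM)\br$ under $\br$ uniform on the sphere. Using $\E[\br\T{\br}] = I_n/n$ and the standard fourth-moment identities on $S^{n-1}$ (equivalently, the representation $\br = \bg/\opnorm{\bg}$ with $\bg \sim N(0,I_n)$ together with the independence of direction and magnitude of a Gaussian), one obtains $\E_{\br}[X] = \frnorm{\bM}^2/n$ and $\E_{\br}[X^2] = [(\frnorm{\bM}^2)^2 + 2\,\mathrm{tr}((\T{\bM}\bM)^2)]/(n(n+2))$. The key inequality is $\mathrm{tr}((\T{\bM}\bM)^2) \le \frnorm{\bM}^4$, which amounts to $\sum_i \sigma_i^4 \le (\sum_i \sigma_i^2)^2$; this yields $\E_{\br}[X^2] \le 3\frnorm{\bM}^4/(n(n+2)) \le 3(\E_{\br}[X])^2$.

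With these two moment bounds in hand, Paley--Zygmund gives $\Pr_{\br}[X > \theta\,\E_{\br}[X]] \ge (1-\theta)^2/3$ for any $\theta \in (0,1)$. Suppose for contradiction that $\frnorm{\bM}^2 > 3an$, so $\E_{\br}[X] > 3a$; then with $\theta = 1/3$ one has $\theta\,\E_{\br}[X] > a$, and hence $\Pr_{\br}[X > a] \ge (2/3)^2/3 = 4/27 > 1/10$, contradicting the good-$\bM$ bound from stage one. This forces $\frnorm{\bM}^2 \le 3an$, so $C = 3$ suffices. The only mildly delicate step, and the one I expect to be the real obstacle, is securing the constant-factor bound $\E_{\br}[X^2] \le 3(\E_{\br}[X])^2$: without a bounded ratio Paley--Zygmund would not beat the $1/10$ budget supplied by Markov, and it is precisely the singular-value inequality $\sum_i \sigma_i^4 \le (\sum_i \sigma_i^2)^2$ that rules out pathological $\bM$ whose $X$ could otherwise be very lumpy on the sphere.
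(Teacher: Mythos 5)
Your proposal is correct, and the first stage (converting the joint probability over $(\bM,\br)$ into a statement about a $9/10$-fraction of ``good'' matrices) is the same averaging step the paper performs, just phrased explicitly as Markov's inequality applied to $p(\bM)=\Pr_{\br}[\opnorm{\bM\br}^2>a\mid\bM]$. Where you genuinely diverge is the deterministic core: the paper writes $\br=\bg/\opnorm{\bg}$ for Gaussian $\bg$, union-bounds with $\opnorm{\bg}^2\le C_1 n$, diagonalizes to get $\sum_i\sigma_i^2\bg_i^2$, and then invokes a cited MGF-based anti-concentration bound $\Pr[\sum_i a_i\bg_i^2\le\delta\sum_i a_i]\le e\sqrt{\delta}$, which it re-proves via a Laplace-transform/Chernoff computation. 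You instead stay on the sphere, compute $\E_{\br}[X]=\frnorm{\bM}^2/n$ and the exact second moment of the quadratic form, use $\operatorname{tr}((\T{\bM}\bM)^2)\le\frnorm{\bM}^4$ to bound the kurtosis ratio by $3$, and close with Paley--Zygmund. Both are anti-concentration arguments for a nonnegative quadratic form, but yours is more self-contained (no external lemma, no Gaussian norm concentration, no union bound) and yields an explicit small constant $C=3$, whereas the paper's route gives a stronger, probability-parameterized tail bound ($e\sqrt{\delta}$ for every $\delta$) that it reuses elsewhere in the AMM lower bound; your second-moment method would not suffice there, but it is entirely adequate for this lemma, where one only needs to beat the fixed $1/10$ failure budget, and your verification that $4/27>1/10$ is exactly the margin that makes it work.
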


This lemma implies that for any random matrix $\bS$ satisfying \eqref{eqn:necessary-condition}, we have $\frnorm{E(\sigma^{-2}I + \T{\bV}\T{\bS}\bS \bV)^{-1}}^2 \le Cn\varepsilon/\sigma^2$ with probability $\ge 9/10$ over $\bV, \bS$. Using the non-dilating property of $\bS$ and applying a union bound, we now have with probability $\ge 8/10$,
\begin{align*}
	&\frnorm{E}^2 \le \frac{\frnorm{E(\sigma^{-2}I + \T{\bV}\T{\bS}\bS \bV)^{-1}}^2}{\sigma_{\min}((\sigma^{-2}I + \T{\bV}\T{\bS}\bS \bV)^{-1})^{2}}\\
	&= \frac{Cn\varepsilon/\sigma^2}{\sigma_{\min}((\sigma^{-2}I + \T{\bV}\T{\bS}\bS \bV)^{-1})^2} \le O(n\varepsilon/\sigma^2)
\end{align*}
where we used the fact that for any invertible matrix $A$, $1/\sigma_{\min}(A^{-1}) = \sigma_{\max}(A)$ and $\sigma_{\max}(\sigma^{-2}I + \T{\bV}\T{\bS}\bS \bV) \le (1/\sigma^2) + \opnorm{\T{\bV}\T{\bS}\bS\bV} = O(1)$ with probability $\ge 9/10$.
Thus, a lower bound on the number of rows in the matrix $\bS$ to obtain, with probability $\ge 8/10$,
\begin{equation}
	\frnorm{\T{\bV}\T{\bS}\bS \bV - I} \le O(\sqrt{n\varepsilon/\sigma^2}) = O(\sqrt{\varepsilon / n\sigma^2})n
	\label{eqn:amm-condition-necessary-for-regression}
\end{equation}
implies a lower bound on the number of rows of a random matrix $\bS$ that satisfies \eqref{eqn:necessary-condition}.
\subsection{Lower bounds for AMM}
\begin{lemma}
    Given parameters $n$ and error parameter $\varepsilon \le c/\sqrt{n}$ for a small enough constant $c$, for all $d \ge Cn/\varepsilon^2$, if a random matrix $\bS \in \R^{m \times d}$  for all matrices $A \in \R^{d \times n}$ satisfies,
$
        \frnorm{\T{A}\T{\bS}\bS A -\T{A}A} \le \varepsilon\frnorm{\T{A}}\frnorm{A}
$
    with probability $\ge 9/10$, then $m = \Omega(1/\varepsilon^2)$. 
    
    Moreover, the lower bound of $\Omega(1/\varepsilon^2)$ holds even for the sketching matrices that give the following guarantee:
    $
        \Pr_{\bA,\bS}[\frnorm{\T{\bA}\T{\bS}\bS \bA - I} \le \varepsilon n] \ge 0.9,
    $
    where $\bA$ is a uniformly random $d \times n$ orthonormal matrix independent of the sketch $\bS$.
    \label{lma:initital-amm-lowerbound}
\end{lemma}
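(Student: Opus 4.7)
The plan is to extend the $n=1$ argument from the introduction to general $n$, focusing on the ``moreover'' formulation (which the main AMM condition implies by averaging over a uniformly random $\bA$). The argument has four stages: reduce to a Gaussian model, lower bound $T := \sum_i \sigma_i^2$, upper bound $Q := \sum_i \sigma_i^4$, and conclude by Cauchy–Schwarz.

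First I will exploit obliviousness: for an independent uniform orthogonal $\bQ \in O(d)$, replacing $\bS$ by $\bS\bQ$ preserves the hypothesis (since $\bQ\bA$ has the same uniform distribution as $\bA$), while making the right singular vectors $\T{\bV}$ of $\bS$ Haar-distributed on the $m \times d$ Stiefel manifold. Fixing $A_0$ to be the first $n$ columns of $I_d$, the matrix $W := \T{\bV}A_0$ is then the top-left $m \times n$ block of a Haar orthogonal $d \times d$ matrix. If $m \ge cd$ for some small absolute constant $c$, then $m \ge c\cdot Cn/\varepsilon^2 = \Omega(1/\varepsilon^2)$ and we are done; otherwise, the quantitative TV-distance bounds of \cite{jiang2017distances, product-of-gaussians} let me replace $\sqrt{d}\,W$ by $\bG$, an $m \times n$ matrix of i.i.d.\ standard Gaussians, at total variation cost at most $1/20$ (taking $C$ sufficiently large in $d \ge Cn/\varepsilon^2$). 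Setting $M := (1/d)\T{\bG}\Sigma^2 \bG$, the hypothesis becomes $\Pr[\frnorm{M - I_n}^2 \le \varepsilon^2 n^2] \ge 0.85$.

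For the lower bound on $T$: the diagonal entries $M_{jj} = (1/d)\sum_i \sigma_i^2 \bG_{ij}^2$ have mean $T/d$ and are independent across $j$ (different columns of $\bG$). If $T \le d/200$, Markov yields $\Pr[M_{jj} \ge 1/2] \le 1/100$, and a Chernoff bound over these $n$ independent indicators shows that at least $9n/10$ diagonals satisfy $|M_{jj}-1| \ge 1/2$ with overwhelming probability, giving $\frnorm{M-I}^2 \ge 9n/40 \gg \varepsilon^2 n^2$ (using $\varepsilon \le c/\sqrt{n}$). So $T = \Omega(d)$. For the upper bound on $Q$, I will work with the off-diagonal mass $Z := \sum_{j\ne k} M_{jk}^2 \le \frnorm{M-I}^2$, which has mean $n(n-1)Q/d^2$. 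Gaussian conditioning on one column of $\bG$ shows $\Var(M_{jk}^2) = O(Q^2/d^4)$; and for disjoint index sets $\{j,k\}\cap\{j',k'\}=\emptyset$, $M_{jk}^2$ and $M_{j'k'}^2$ are independent, so only the $O(n^3)$ overlapping pairs contribute and $\Var(Z) = O(n^3 Q^2/d^4)$. Thus $\sqrt{\Var(Z)}/\E Z = O(1/\sqrt{n})$, and for $n$ above a constant, Chebyshev gives $\Pr[Z \ge \E Z/2] \ge 0.95$; intersecting with the hypothesis event forces $\E Z = O(\varepsilon^2 n^2)$, i.e., $Q = O(\varepsilon^2 d^2)$. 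For $n = O(1)$, I will instead run the $n=1$-style Paley–Zygmund argument from the introduction on a single diagonal $M_{jj}$, absorbing the $O(1)$ dependence on $n$ into the $\Omega(\cdot)$ factor. Finally, Cauchy–Schwarz gives $T^2 \le mQ$, whence $m \ge T^2/Q = \Omega(d^2)/O(\varepsilon^2 d^2) = \Omega(1/\varepsilon^2)$.

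The main technical obstacle will be the variance bookkeeping in the third stage: computing $\E Z^2$ as a fourth-order Gaussian moment, carefully classifying pairs $(j,k),(j',k')$ by overlap, and bounding the contribution of each class to obtain $\Var(Z) = O(n^3 Q^2/d^4)$ (along with the separate Paley–Zygmund variant for small $n$). A secondary technical point is the quantitative use of Jiang's TV bound in the borderline regime $mn = \Theta(d)$, which forces $C$ in the hypothesis $d \ge Cn/\varepsilon^2$ to be chosen large enough.
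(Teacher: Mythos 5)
Your plan is sound and shares the paper's overall architecture: reduce to a Gaussian model via obliviousness and the TV bound of \citet{jiang2017distances, product-of-gaussians}, lower bound $T=\sum_i\sigma_i^2$ by $\Omega(d)$ using the diagonal entries plus a Chernoff bound, upper bound $Q=\sum_i\sigma_i^4$ by $O(\varepsilon^2 d^2)$ using the off-diagonal mass, and finish with Cauchy--Schwarz. The genuine divergence is in the third stage. You propose a second-moment method on $Z=\sum_{j\ne k}M_{jk}^2$: compute $\E Z = n(n-1)Q/d^2$, bound $\Var(Z)=O(n^3Q^2/d^4)$ by classifying index pairs by overlap, and apply Chebyshev, with a separate Paley--Zygmund-style argument for constant $n$. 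The paper instead splits the columns of $\bG$ into halves $\bG_l,\bG_r$, observes $\frnorm{(1/d)\T{\bG}\Sigma^2\bG - I}^2 \ge d^{-2}\frnorm{\T{\bG_r}\Sigma^2\bG_l}^2$ (the identity vanishes on the off-diagonal block), and applies twice a one-line consequence of its anti-concentration lemma (``$\frnorm{M\bG}^2\ge 0.001t\frnorm{M}^2$ w.p.\ $0.9$'') to get $\frnorm{\T{\bG_r}\Sigma^2\bG_l}^2=\Omega(n^2\frnorm{\Sigma^2}^2)$ with constant probability. This sidesteps all fourth-moment bookkeeping and the case split on $n$; your route buys nothing extra here, but it does work, and the covariance accounting ($O(n^3)$ overlapping pairs, disjoint pairs independent) is correct.

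Two points need attention before your plan is airtight. First, the TV branch threshold should be $m \ge cd/n$ rather than $m \ge cd$: the bound $d_{\mathrm{TV}}\le\sqrt{Cmn/2d}$ needs $mn=O(d)$, so the ``done immediately'' branch is $m\ge cd/n\ge cC/\varepsilon^2$, which still suffices. Second, your fallback for constant $n$ relies on the introduction's heuristic that concentration of $(1/d)\sum_i\sigma_i^2\bg_i^2$ in an interval of width $\varepsilon$ forces $\Var = O(\varepsilon^2)$; the paper's anti-concentration lemma (Lemma~\ref{lma:anti-concentration}) is only a lower-tail bound, so you would need to supply a genuine two-sided small-ball estimate for weighted chi-square sums (true, but not in the paper), whereas the paper's block argument handles this case without it.
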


Although the above lemma only shows that an AMM sketch requires $m = \Omega(1/\varepsilon^2)$ for $d \ge Cn/\varepsilon^2$, we can extend it to show the lower bound for $d \ge C/\varepsilon^2$ for a large enough constant $C$. Note that $C/\varepsilon^2 = \Omega(n)$ since $\varepsilon \le c/\sqrt{n}$.
\begin{theorem}\label{thm:final-amm-lowerbound}
	Given $n \ge 0$ and $\varepsilon < c/\sqrt{n}$ for a small enough constant $c$, there are universal constants $C,D$ such that for all $d \ge D/\varepsilon^2$, any distribution that has the $\varepsilon$ AMM property for $d \times n$ matrices must have $\ge C/\varepsilon^2$ rows.
\end{theorem}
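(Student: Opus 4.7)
The plan is a short reduction to the $n'=1$ case of Lemma~\ref{lma:initital-amm-lowerbound}. Instantiated at $n'=1$, that lemma already shows that any $\varepsilon$-AMM sketch for vectors in $\R^d$ must have $\Omega(1/\varepsilon^2)$ rows, provided $d \ge C_0/\varepsilon^2$ and $\varepsilon \le c_0$ (where $C_0, c_0$ are the constants supplied by the lemma at $n=1$). The present theorem concerns the ostensibly stronger hypothesis of an $\varepsilon$-AMM sketch for $d \times n$ matrices, so I will argue that every such sketch is automatically a $d \times 1$ AMM sketch and then invoke the lemma.

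For the reduction step, given any vector $a \in \R^d$, I would pad it into the $d \times n$ matrix $A = [a\; 0\; \cdots\; 0]$ whose first column is $a$ and whose remaining $n-1$ columns vanish. A direct computation shows that $\T{A}A$ and $\T{A}\T{\bS}\bS A$ are $n \times n$ matrices supported only at the $(1,1)$-entry, equal to $\opnorm{a}^2$ and $\opnorm{\bS a}^2$ respectively, while $\frnorm{\T{A}}\frnorm{A} = \opnorm{a}^2$. Hence the hypothesized $\varepsilon$-AMM inequality $\frnorm{\T{A}\T{\bS}\bS A - \T{A}A} \le \varepsilon\frnorm{\T{A}}\frnorm{A}$ degenerates exactly to $|\opnorm{\bS a}^2 - \opnorm{a}^2| \le \varepsilon\opnorm{a}^2$, which is the single-vector AMM condition. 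The per-query success probability over $\bS$ is inherited from the hypothesis with no loss, since we have simply chosen a particular $d \times n$ input $A$ determined by $a$.

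To close the argument, I would set the theorem's universal constant $D$ to be at least $C_0$ so that $d \ge D/\varepsilon^2$ implies the lemma's dimension requirement $d \ge C_0/\varepsilon^2$, and pick the theorem's $c$ to be at most $c_0$ so that $\varepsilon < c/\sqrt{n} \le c_0$ satisfies the lemma's error constraint (using $n \ge 1$). Applying Lemma~\ref{lma:initital-amm-lowerbound} with $n'=1$ then produces an absolute constant $C$ with $m \ge C/\varepsilon^2$, as claimed. The hard part is really only cosmetic: the improvement from $d \ge Cn/\varepsilon^2$ in the lemma to the $n$-free bound $d \ge D/\varepsilon^2$ in the theorem comes essentially for free once one notices that the AMM requirement over $d \times n$ matrices is at least as restrictive as the AMM requirement over single vectors, and that zero-padding is lossless because it neither changes $\opnorm{a}$ nor inflates $\frnorm{A}$ beyond $\opnorm{a}$.
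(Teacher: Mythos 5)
Your reduction is correct, and it is a genuinely different route from the paper's. The paper keeps the column count $n$ fixed and instead lifts the \emph{ambient dimension}: it assumes a too-small sketch $\bS$ exists for $d = C_1/\varepsilon^2$, precomposes it with a CountSketch $\bS_1$ with $K/\varepsilon^2$ rows applied to matrices with $d' \ge Cn/\varepsilon^2$ rows, verifies via Lemma~\ref{lma:countsketch-fro-norm} and the triangle inequality that $\bS\cdot\bS_1$ is still an $\varepsilon$-AMM sketch for $d' \times n$ matrices, and then contradicts Lemma~\ref{lma:initital-amm-lowerbound} at the original $n$ in the regime $d' \ge Cn/\varepsilon^2$ where that lemma applies. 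You instead shrink the \emph{column count}: zero-padding shows the $d \times n$ AMM property is monotone in the number of columns, so the lemma can be invoked at $n'=1$, where its dimension requirement $d \ge Cn'/\varepsilon^2$ is already $n$-free. Your computations of $\T{A}A$, $\T{A}\T{\bS}\bS A$, and $\frnorm{\T{A}}\frnorm{A}$ for $A=[a\;0\;\cdots\;0]$ are right, the per-instance success probability is preserved, and the constants work out. Your argument avoids the CountSketch composition and its union-bound bookkeeping entirely; what it gives up is only that it discards all but one column of structure, which is harmless here because the target bound $\Omega(1/\varepsilon^2)$ is itself $n$-independent.

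One caveat worth flagging: you lean entirely on the $n'=1$ instance of Lemma~\ref{lma:initital-amm-lowerbound}, and the appendix proof of that lemma splits the $m\times n$ Gaussian matrix $\bG$ into left and right halves and lower-bounds the off-diagonal block $\frnorm{\T{\bG_r}\Sigma^2\bG_l}$, a step that only makes sense for $n\ge 2$; the $n=1$ case is treated only heuristically (via a variance calculation) in the introduction. If you take the lemma strictly as what its written proof establishes, pad to two columns rather than one whenever $n\ge 2$, i.e., apply the hypothesis to $A=[a_1\;a_2\;0\;\cdots\;0]$ for arbitrary $a_1,a_2\in\R^d$; this costs only a factor of $2$ in $D$ and $\sqrt 2$ in $c$. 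The remaining case $n=1$ is then literally the statement of the lemma itself, and the paper's own proof of the theorem relies on that same instance when $n=1$, so you are no worse off there.
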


As discussed in the introduction, we crucially use the fact that a sub-matrix of a random orthonormal matrix is close to a Gaussian matrix in total variation distance to prove the above theorem. This seems to be a useful direction to obtain lower bounds for other sketching problems.
\subsection{Lower Bound Wrapup}
In the case of ridge regression with $\lambda = 1$, \eqref{eqn:amm-condition-necessary-for-regression} shows that the sketching distribution has to satisfy the AMM guarantee with parameter $c\sqrt{\varepsilon /n\sigma^2}$. By using the above hardness result for AMM, we obtain the following theorem.
\begin{theorem}
If $\calS$ is a non-dilating distribution over $m \times d$ matrices such that for all ridge regression instances $(A,b,\lambda)$ with $A \in \R^{n \times d}$, $1 \le \sigma^2/\lambda \le \alpha$ satisfies,
\begin{equation*}
    \Pr_{\bS \sim \calS}[\opnorm{A\tilde{x} - b}^2 + \lambda\opnorm{\tilde{x}}^2 \le (1+\varepsilon)\opt] \ge 0.99,
\end{equation*}
for $\tilde{x} = \T{A}(A\T{\bS}\bS\T{A} + \lambda I)^{-1}b$, then $m = \Omega(n\alpha/\varepsilon) = \Omega(n\sigma^2/\lambda \varepsilon)$.
\end{theorem}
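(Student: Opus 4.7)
\medskip

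The plan is to combine the derivation the paper has already carried out (from equation~\eqref{eqn:necessary-condition} through equation~\eqref{eqn:amm-condition-necessary-for-regression}) with the AMM lower bound of Theorem~\ref{thm:final-amm-lowerbound}. The point is that the paper's argument does not merely show that distributions giving approximate ridge regression solutions satisfy AMM for fixed matrices; it shows the stronger statement that against the random hard instance $(\sigma \bU \T{\bV}, b)$ the matrix $\T{\bV}\T{\bS}\bS\bV - I$ is small in Frobenius norm with high probability over $\bV$ as well. This is exactly the second, random-orthonormal-matrix form of the AMM guarantee handled by Lemma~\ref{lma:initital-amm-lowerbound} and inherited by Theorem~\ref{thm:final-amm-lowerbound}.

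The execution would proceed in three steps. First, I would reduce to $\lambda = 1$: replacing $(A,b)$ by $(A/\sqrt{\lambda}, b/\sqrt{\lambda})$ leaves $\tilde x$ unchanged and rescales the spectral norm so that $\sigma^2/\lambda$ takes the role of $\sigma^2$; thus it suffices to prove $m = \Omega(n\sigma^2/\varepsilon)$ when $\lambda = 1$ and $1 \le \sigma^2 \le \alpha$. Second, I would invoke the derivation preceding the theorem statement verbatim: starting from the approximation hypothesis on the instance $(\sigma \bU\T{\bV}, b)$ with $b$ a fixed unit vector, one obtains via Lemma~\ref{lma:random-vector-to-frobenius} and the non-dilation of $\bS$ the conclusion~\eqref{eqn:amm-condition-necessary-for-regression}, namely that with probability $\ge 8/10$ over $(\bV, \bS)$,
\begin{equation*}
    \frnorm{\T{\bV}\T{\bS}\bS\bV - I_n} \le C\sqrt{\varepsilon/n\sigma^2}\cdot n.
\end{equation*}
Third, I would instantiate Theorem~\ref{thm:final-amm-lowerbound} with AMM parameter $\varepsilon' = C\sqrt{\varepsilon/n\sigma^2}$, which yields $m = \Omega(1/\varepsilon'^2) = \Omega(n\sigma^2/\varepsilon)$. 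Undoing the rescaling replaces $\sigma^2$ by $\sigma^2/\lambda$, giving the claimed $m = \Omega(n\sigma^2/\lambda\varepsilon)$.

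The main obstacle, and the only real content beyond bookkeeping, is checking that the hypotheses of Theorem~\ref{thm:final-amm-lowerbound} are satisfied by $\varepsilon'$. We need $\varepsilon' \le c/\sqrt{n}$, i.e.\ $\varepsilon \lesssim \sigma^2/\lambda \cdot c^2$, which is harmless: in the regime $\sigma^2/\lambda \ge 1$ it forces only $\varepsilon \le O(1)$, and outside this regime the claimed bound is $O(1)$ and trivial. We also need $d$ sufficiently large, which is given since $d \ge n$ and the interesting regime has $n\sigma^2/\lambda\varepsilon \le d$. The remaining care is to verify that the second form of the AMM guarantee in Lemma~\ref{lma:initital-amm-lowerbound} (expectation over a uniformly random orthonormal $\bA$ independent of $\bS$) is precisely what the derivation of~\eqref{eqn:amm-condition-necessary-for-regression} produces---it is, since $\bV$ there is drawn uniformly from $O^{d \times n}$ independently of $\bS$---and that the success probability of $8/10$ from Lemma~\ref{lma:random-vector-to-frobenius} is consistent with the $9/10$ probability required in Lemma~\ref{lma:initital-amm-lowerbound}, which is again a matter of rebalancing constants in the union bounds.
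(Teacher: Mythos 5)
Your proposal is correct and follows essentially the same route as the paper: the derivation culminating in \eqref{eqn:amm-condition-necessary-for-regression} is applied verbatim, and the theorem then follows by instantiating the AMM lower bound (Theorem~\ref{thm:final-amm-lowerbound}, via the random-orthonormal-matrix form in Lemma~\ref{lma:initital-amm-lowerbound}) with parameter $\varepsilon' = \Theta(\sqrt{\varepsilon/n\sigma^2})$. Your additional checks on the rescaling to $\lambda=1$, the condition $\varepsilon' \le c/\sqrt{n}$, and the matching of probability constants are exactly the bookkeeping the paper leaves implicit.
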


\section{Communication Complexity Lower Bounds for Ridge Regression}\label{sec:communication-bounds}
Consider a ridge regression matrix $A$ of the form $A_{1} + A_{2}$ where Alice has the matrix $A_1$ and Bob has the matrix $A_{2}$. To compute a vector $y$ such that $\T{A}y$ is a $1+\varepsilon$ approximation, the theorem from the previous section lower bounds the communication required between Alice and Bob by $\Omega(n\sigma^2/\lambda\varepsilon)$. The lower bound is crude in that it assumes that they only communicate $A_1\T{\bS}$ between them for some sketch $\bS$, and they compute $y$ only as $(A\T{\bS}\bS\T{A} + \lambda I)^{-1}b$.

In this section, we present communication complexity lower bounds for a different but related problem of computing a value $v$ such that $v = (1 \pm \varepsilon)\opt$ in the above two-player scheme, where Alice has the matrix $A_1$ and Bob has the matrix $A_2$. We show the lower bound by reducing from the well-known \textsc{Gap-Hamming} problem \cite{iw03,w04} to a ridge regression instance with a $2 \times d$ design matrix.

In the \textsc{Gap-Hamming} problem, Alice and Bob receive vectors $x,y \in \set{\pm 1}^d$ and want to decide if $d_H(x,y) \ge d/2 + \varepsilon d$ or $d_H(x,y) \le d/2 - \varepsilon d$, where $d_{H}(x,y)$ is the Hamming distance $|\{i\,|\, x_i \ne y_i\}|$. This problem has a communication complexity lower bound of $\Omega(1/\varepsilon^2)$ bits, even for multiple rounds \cite{chakrabarti2012optimal}. Let $M$ be a $2 \times d$ matrix with $x$ and $y$ as its rows. Consider the following ridge regression problem:
\begin{equation*}
	\min_x \opnorm{Mx - \begin{bmatrix}1 \\ -1\end{bmatrix}}^2 + \lambda \opnorm{x}^2.
\end{equation*}
Let $N = d_H(x,y)$. The optimal value of the above ridge regression problem is given by ${2\lambda}/{(\lambda + 2N)}$. In the case when $N \ge d/2 + \varepsilon d$ and $N \le d/2 + \varepsilon d$, the optimal values differ by a factor of $1 + 4\varepsilon/(1 + \lambda/d)$. So, obtaining a $1 + O(\varepsilon/(1+\lambda/d))$ approximation to ridge regression lets us solve the Gap-Hamming problem, and hence requires $\Omega(1/\varepsilon^2)$ bits of communication. As $\opnorm{M}^2 = \Theta(d)$, we obtain an $\Omega(1/\varepsilon^2(1+\lambda/\sigma^2))$ bit lower bound for computing a $1+\varepsilon$ approximate value for ridge regression.

In contrast to the $\Omega(1/\varepsilon^2)$ type communication lower bounds on computing $1 \pm \varepsilon$ approximations to the optimal values of ridge regression, we obtain $\Omega(d)$ lower bounds on the communication complexity of approximating optimal values of Lasso and square-root Lasso objectives even up to a factor of $1+c$ for a small enough constant $c > 0$. Concretely, we prove the following results.

\begin{theorem}[Communication Complexity of Lasso]
Let $0< \lambda < 1$ be the Lasso parameter. If Alice has the $n \times d$ matrix $M_1$ and Bob has the $n \times d$ matrix $M_2$, then to determine a $1+c$ approximation, for a small enough constant $c$, to the optimal value of
\begin{align*}
    \min_z \opnorm{(M_1+M_2)z - b}^2 + \lambda\|{z}\|_1, 
\end{align*}
requires $\Omega(d)$ bits of communication between Alice and Bob.
\end{theorem}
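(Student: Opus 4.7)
The plan is to reduce from the $\textsc{Set-Disjointness}$ problem on universe $[d]$, whose randomized two-party communication complexity is $\Omega(d)$ bits. In $\textsc{Set-Disjointness}$, Alice holds $x \in \{0,1\}^d$, Bob holds $y \in \{0,1\}^d$, and they must decide whether there exists an index $i$ with $x_i = y_i = 1$; the standard hard distribution also lets me assume that $x$ is nonzero. I will take the Lasso instance with $n=1$ and $b=1$, set $M_1 = \T{x}$ and $M_2 = \T{y}$ viewed as row vectors, and write $a := x + y \in \{0,1,2\}^d$.

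First I would argue that the one-dimensional Lasso
\[
\min_{z \in \R^d}\, (\T{a} z - 1)^2 + \lambda \onenorm{z}
\]
admits an optimum supported on a single coordinate in $\argmax_i a_i$: given any feasible $z$ with two nonzero entries $z_i, z_j$, I can transfer all of $|z_i|$ onto coordinate $j$ (with the appropriate sign) whenever $a_j \ge a_i$, preserving $\T{a}z$ while not increasing $\onenorm{z}$. Iterating reduces the problem to the scalar instance $\min_{\alpha \in \R}\, (a_{i^\star}\alpha - 1)^2 + \lambda|\alpha|$ for any $i^\star \in \argmax_i a_i$, which I would solve in closed form via soft-thresholding (valid since $\lambda < 1 \le \max_i a_i$) to obtain
\[
\opt \;=\; \frac{\lambda}{\max_i a_i} \;-\; \frac{\lambda^2}{4(\max_i a_i)^2}.
\]

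Next I would observe that $\max_i a_i = 2$ exactly when the sets encoded by $x$ and $y$ intersect and $\max_i a_i = 1$ otherwise, giving two possible optimal values $\opt_{\text{disj}} = \lambda - \lambda^2/4$ and $\opt_{\text{int}} = \lambda/2 - \lambda^2/16$. A short direct calculation shows that their ratio equals $4(4-\lambda)/(8-\lambda) \ge 3/2$ uniformly for $\lambda \in (0,1)$. Consequently, any $(1+c)$-approximation protocol for $\opt$ with $c$ less than a small absolute constant (for instance $c < 1/4$) can be used to decide $\textsc{Set-Disjointness}$ with the same error probability, yielding the claimed $\Omega(d)$ bit lower bound. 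For general $n$ I would simply pad $M_1, M_2$ with $n-1$ zero rows and take $b = e_1 \in \R^n$, which leaves the objective and the analysis unchanged.

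The main obstacle is choosing the right source problem rather than executing the reduction itself: arguments modeled on $\textsc{Gap-Hamming}$ (used in Section~\ref{sec:communication-bounds} for the ridge case) only produce $\Omega(1/\varepsilon^2)$ bounds, and naive sketching-style arguments give at most $\Omega(\log d)$. The key observation enabling the stronger $\Omega(d)$ bound is that, unlike ridge regression, the Lasso optimum at small $\lambda$ is controlled by a coordinate-wise nonlinear functional of $a$ (here $\max_i a_i$), so even a \emph{single-bit} property of $a = x+y$ moves $\opt$ by a multiplicative constant. Pairing this with the $\Omega(d)$ hardness of the hardest single-bit question about $(x,y)$, namely set disjointness, delivers the theorem.
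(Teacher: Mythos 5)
Your proposal is correct and takes essentially the same route as the paper: a reduction from \textsc{Disjointness}, exhibiting a constant multiplicative gap between the closed-form Lasso optima in the intersecting and disjoint cases. The only difference is the gadget --- you collapse the instance to a single row $\T{(x+y)}$ and read the gap off $\max_i(x_i+y_i)\in\{1,2\}$ via soft-thresholding, whereas the paper keeps a $2\times d$ design matrix with rows $x,y$ and target $(1,1)$ --- but the analysis and conclusion are the same.
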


\begin{theorem}[Hardness of Sketching Square-Root Lasso]
Let $0< \lambda < 2\sqrt{2}/3$ be the Lasso parameter. If Alice has the $n \times d$ matrix $M_1$ and Bob has the $n \times d$ matrix $M_2$, then to determine a $1+c$ approximation, for a small enough constant $c > 0$, to the optimal value of
\begin{align*}
    \min_z \opnorm{(M_1+M_2)z - b} + \lambda\|{z}\|_1,
\end{align*}
requires $\Omega(d)$ bits of communication between Alice and Bob.
\end{theorem}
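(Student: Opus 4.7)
The plan is to reduce from Set Disjointness, which has $\Omega(d)$ randomized communication complexity even under the Unique Disjointness promise and for multi-round protocols. Given Alice's input $x \in \{0,1\}^d$ and Bob's input $y \in \{0,1\}^d$ (both nonzero, with $|x \cap y| \in \{0, 1\}$), I would take $n = 2$ and set $M_1 := \begin{pmatrix} \T{x} \\ \T{\mathbf{0}} \end{pmatrix}$, $M_2 := \begin{pmatrix} \T{\mathbf{0}} \\ \T{y} \end{pmatrix}$, and $b := (1, 1)^{\mathsf{T}}$, so that $M_1 + M_2 = \begin{pmatrix}\T{x}\\ \T{y}\end{pmatrix}$ and the square-root Lasso objective becomes $F(z) = \sqrt{(\T{x}z - 1)^2 + (\T{y}z - 1)^2} + \lambda\|z\|_1$.

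Setting $p := \T{x}z$ and $q := \T{y}z$, I would first reduce the problem to a two-variable convex minimization. Since $x, y \in \{0,1\}^d$, the minimum of $\|z\|_1$ subject to $\T{x}z = p$ and $\T{y}z = q$ equals $|p|+|q|$ in the disjoint case (distinct coordinates contribute independently to each constraint) and $\max(|p|, |q|)$ in the intersecting case (the single shared coordinate $z_{j^*}$ with $j^* \in x \cap y$ contributes fully to both). A subgradient/KKT analysis then yields the two case optima: the intersecting-case optimum equals $\lambda$, attained at $p = q = 1$ (i.e.\ $z = e_{j^*}$), and the disjoint-case optimum equals $\min(2\lambda, \sqrt{2})$, attained at $p = q = 1$ when $\lambda \le 1/\sqrt{2}$ and at $p = q = 0$ otherwise.

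Combining these, for $\lambda < 2\sqrt{2}/3$ the ratio of the disjoint optimum to the intersecting optimum is $\min(2, \sqrt{2}/\lambda) > 3/2$, so any $(1+c)$-approximation with $c < 1/2$ distinguishes the two cases and yields a Disjointness protocol, forcing $\Omega(d)$ bits of communication. The main technical obstacle will be the disjoint-case two-variable convex minimization: one must verify that no asymmetric $(p, q)$ does strictly better than the symmetric candidates, which follows because equating the partial-derivative stationarity conditions of $\sqrt{(p-1)^2+(q-1)^2} + \lambda(|p|+|q|)$ in the positive quadrant forces $p = q$, reducing the problem to a one-dimensional comparison along the symmetric line and the coordinate axes. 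The constant $2\sqrt{2}/3$ arises precisely as the solution of $\sqrt{2}/\lambda = 3/2$, which is the boundary at which the constant-factor gap drops below $3/2$.
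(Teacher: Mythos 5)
Your proposal is correct and follows essentially the same route as the paper: the same reduction from \textsc{Disjointness} to a $2\times d$ instance with rows $x,y$ and target $(1,1)^{\mathsf{T}}$, the same case values $\lambda$ versus $\min(\sqrt{2},2\lambda)$, and the same threshold $\lambda<2\sqrt{2}/3$ yielding a constant-factor gap of at least $3/2$. Your explicit reduction to the two-variable problem in $(p,q)$ and the symmetry argument forcing $p=q$ (which also follows directly from convexity plus the swap symmetry of the objective) in fact fills in a step the paper only asserts.
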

To show these lower bounds, we reduce the classic \textsc{Disjointness} problem \cite{haastad2007randomized} to computing $1+c$ approximations to optimal values of an appropriate Lasso and square-root Lasso problem. See the supplementary for proofs.

\section*{Acknowledgments}
The authors thank National Institute of Health (NIH) grant 5401 HG 10798-2, Office of Naval Research (ONR) grant N00014-18-1-2562, and a Simons Investigator Award.
\bibliographystyle{plainnat}
\bibliography{main} 

\begin{thebibliography}{21}
\providecommand{\natexlab}[1]{#1}
\providecommand{\url}[1]{\texttt{#1}}
\expandafter\ifx\csname urlstyle\endcsname\relax
  \providecommand{\doi}[1]{doi: #1}\else
  \providecommand{\doi}{doi: \begingroup \urlstyle{rm}\Url}\fi

\bibitem[Ahle et~al.(2020)Ahle, Kapralov, Knudsen, Pagh, Velingker, Woodruff,
  and Zandieh]{ahle2020oblivious}
Thomas~D Ahle, Michael Kapralov, Jakob~BT Knudsen, Rasmus Pagh, Ameya
  Velingker, David~P Woodruff, and Amir Zandieh.
\newblock Oblivious sketching of high-degree polynomial kernels.
\newblock In \emph{Proceedings of the Fourteenth Annual ACM-SIAM Symposium on
  Discrete Algorithms}, pages 141--160. SIAM, 2020.

\bibitem[Assadi and Raz(2020)]{assadi2020near}
Sepehr Assadi and Ran Raz.
\newblock Near-quadratic lower bounds for two-pass graph streaming algorithms.
\newblock In \emph{2020 IEEE 61st Annual Symposium on Foundations of Computer
  Science (FOCS)}, pages 342--353. IEEE, 2020.

\bibitem[Brody and Woodruff(2011)]{brody2011streaming}
Joshua Brody and David~P Woodruff.
\newblock Streaming algorithms with one-sided estimation.
\newblock In \emph{Approximation, Randomization, and Combinatorial
  Optimization. Algorithms and Techniques}, pages 436--447. Springer, 2011.

\bibitem[Chakrabarti and Regev(2012)]{chakrabarti2012optimal}
Amit Chakrabarti and Oded Regev.
\newblock An optimal lower bound on the communication complexity of
  {G}ap-{H}amming-{D}istance.
\newblock \emph{SIAM Journal on Computing}, 41\penalty0 (5):\penalty0
  1299--1317, 2012.

\bibitem[Chen et~al.(2021)Chen, Kol, Paramonov, Saxena, Song, and
  Yu]{chen2021near}
Lijie Chen, Gillat Kol, Dmitry Paramonov, Raghuvansh Saxena, Zhao Song, and
  Huacheng Yu.
\newblock Near-optimal two-pass streaming algorithm for sampling random walks
  over directed graphs.
\newblock \emph{arXiv preprint arXiv:2102.11251}, 2021.

\bibitem[Chowdhury et~al.(2018)Chowdhury, Yang, and
  Drineas]{chowdhury2018iterative}
Agniva Chowdhury, Jiasen Yang, and Petros Drineas.
\newblock An iterative, sketching-based framework for ridge regression.
\newblock In \emph{International Conference on Machine Learning}, pages
  989--998. PMLR, 2018.

\bibitem[Clarkson and
  Woodruff(2017)]{low-rank-approximation-regression-input-sparsity}
Kenneth~L. Clarkson and David~P. Woodruff.
\newblock Low-rank approximation and regression in input sparsity time.
\newblock \emph{J. ACM}, 63\penalty0 (6), January 2017.
\newblock ISSN 0004-5411.
\newblock \doi{10.1145/3019134}.

\bibitem[Cohen(2016)]{cohen2016nearly}
Michael~B Cohen.
\newblock Nearly tight oblivious subspace embeddings by trace inequalities.
\newblock In \emph{Proceedings of the twenty-seventh annual ACM-SIAM symposium
  on Discrete algorithms}, pages 278--287. SIAM, 2016.

\bibitem[H{\aa}stad and Wigderson(2007)]{haastad2007randomized}
Johan H{\aa}stad and Avi Wigderson.
\newblock The randomized communication complexity of set disjointness.
\newblock \emph{Theory of Computing}, 3\penalty0 (1):\penalty0 211--219, 2007.

\bibitem[Indyk and Woodruff(2003)]{iw03}
Piotr Indyk and David~P. Woodruff.
\newblock Tight lower bounds for the distinct elements problem.
\newblock In \emph{44th Symposium on Foundations of Computer Science {(FOCS}
  2003), 11-14 October 2003, Cambridge, MA, USA, Proceedings}, pages 283--288.
  {IEEE} Computer Society, 2003.
\newblock \doi{10.1109/SFCS.2003.1238202}.

\bibitem[Jiang and Ma(2017)]{jiang2017distances}
Tiefeng Jiang and Yutao Ma.
\newblock Distances between random orthogonal matrices and independent normals.
\newblock \emph{arXiv preprint arXiv:1704.05205}, 2017.

\bibitem[Konrad and Naidu(2021)]{konrad2021two}
Christian Konrad and Kheeran~K Naidu.
\newblock On two-pass streaming algorithms for maximum bipartite matching.
\newblock \emph{arXiv preprint arXiv:2107.07841}, 2021.

\bibitem[Li and Woodruff(2021)]{product-of-gaussians}
Yi~Li and David~P. Woodruff.
\newblock The product of {G}aussian matrices is close to {G}aussian.
\newblock In \emph{Approximation, Randomization, and Combinatorial
  Optimization. Algorithms and Techniques (APPROX/RANDOM)}, 2021.

\bibitem[Lowther(2012)]{anti-concentration}
George Lowther.
\newblock Anti-concentration of {G}aussian {Q}uadratic form.
\newblock MathOverflow.net, 2012.
\newblock URL \url{https://mathoverflow.net/q/95108}.

\bibitem[Meng and Mahoney(2013)]{meng2013low}
Xiangrui Meng and Michael~W Mahoney.
\newblock Low-distortion subspace embeddings in input-sparsity time and
  applications to robust linear regression.
\newblock In \emph{Proceedings of the forty-fifth annual ACM symposium on
  Theory of computing}, pages 91--100, 2013.

\bibitem[Murphy(2012)]{murphy2012machine}
Kevin~P Murphy.
\newblock \emph{Machine learning: a probabilistic perspective}.
\newblock MIT press, 2012.

\bibitem[Nelson and Nguyen(2013)]{osnap}
Jelani Nelson and Huy~L. Nguyen.
\newblock {OSNAP:} faster numerical linear algebra algorithms via sparser
  subspace embeddings.
\newblock In \emph{54th Annual {IEEE} Symposium on Foundations of Computer
  Science, {FOCS} 2013}, pages 117--126, 2013.

\bibitem[Nelson and Nguyen(2014)]{nelson2014lower}
Jelani Nelson and Huy~L Nguyen.
\newblock Lower bounds for oblivious subspace embeddings.
\newblock In \emph{International Colloquium on Automata, Languages, and
  Programming}, pages 883--894. Springer, 2014.

\bibitem[Park et~al.(2021)Park, Han, Kim, Seo, and Moon]{park2021few}
Younghyun Park, Dong-Jun Han, Do-Yeon Kim, Jun Seo, and Jaekyun Moon.
\newblock Few-round learning for federated learning.
\newblock \emph{Advances in Neural Information Processing Systems}, 34, 2021.

\bibitem[Woodruff(2004)]{w04}
David~P. Woodruff.
\newblock Optimal space lower bounds for all frequency moments.
\newblock In J.~Ian Munro, editor, \emph{Proceedings of the Fifteenth Annual
  {ACM-SIAM} Symposium on Discrete Algorithms, {SODA} 2004, New Orleans,
  Louisiana, USA, January 11-14, 2004}, pages 167--175. {SIAM}, 2004.

\bibitem[Woodruff(2014)]{dw-sketching}
David~P. Woodruff.
\newblock Sketching as a tool for numerical linear algebra.
\newblock \emph{Foundations and Trends{\textregistered} in Theoretical Computer
  Science}, 10\penalty0 (1--2):\penalty0 1--157, 2014.
\newblock ISSN 1551-305X.
\newblock \doi{10.1561/0400000060}.

\end{thebibliography}
\newpage
\appendix
\section{Missing Proofs from Section~\ref{sec:krr}}
\begin{proof}[Proof of Lemma~\ref{lma:jl-moment-property}]
    For $i \in [m]$ and $j \in [d]$, let $\delta_{i,j}$ be the indicator random variable that denotes if the $(i,j)$-th  entry of the matrix $\bS$ is nonzero. We have that $\sum_{i}\delta_{i,j} = s$ and that for any $S \subseteq [m] \times [d]$, $\E[\Pi_{(i,j) \in S}\delta_{i,j}] \le (s/m)^{|S|}$. Also, let $\sigma_{i,j}$ be the sign of the  $(i,j)$-th entry and let $\sigma_{i,j}$ be $4$-wise independent Rademacher random variables. Now,
\begin{align*}
	\opnorm{\bS x}^2 = \sum_{i} |\bS_{i*}x|^2 = \frac{1}{s}\sum_i (\sum_{j} \delta_{i,j}\sigma_{i,j}x_j)^2 &= \frac{1}{s}\sum_i\sum_{j,j'}\delta_{i,j}\delta_{i,j'}\sigma_{i,j}\sigma_{i,j'}x_jx_{j'}\\
	&=\frac{1}{s}\sum_i\sum_j (\delta_{i,j})^2(\sigma_{i,j})^2x_j^2 + \frac{1}{s}\sum_i\sum_{j \ne j'}\delta_{i,j}\delta_{i,j'}\sigma_{i,j}\sigma_{i,j'}x_jx_{j'}.
\end{align*}
We have $\delta_{i,j}^2 = \delta_{i,j}$ and $\sigma_{i,j}^2 = 1$ for all $i,j$. So,
\begin{align*}
    (1/s)\sum_i\sum_j (\delta_{i,j})^2(\sigma_{i,j})^2x_j^2 = (1/s)\sum_{i}\sum_j \delta_{i,j}x_j^2 = (1/s)\sum_jx_j^2\sum_i\delta_{i,j} = (1/s)\sum_j x_j^2 \cdot s = \opnorm{x}^2 = 1.
\end{align*} 
Therefore,
\begin{align*}
	\opnorm{\bS x}^2 = 1 + \frac{1}{s}\sum_i\sum_{j \ne j'}\delta_{i,j}\delta_{i,j'}\sigma_{i,j}\sigma_{i,j'}x_jx_{j'}.
\end{align*}
If $\sigma_{i,j}$ are uniform random signs that are $2$-wise independent, then for $j \ne j'$, $\E[\sigma_{i,j}\sigma_{i,j'}] = 0$ and as the set of random variables $\delta_{i,j}$ are independent of the random variables $\sigma_{i,j}$, we have $\E[\delta_{i,j}\delta_{i,j'}\sigma_{i,j}\sigma_{i,j'}] = \E[\delta_{i,j}\delta_{i,j'}]\E[\sigma_{i,j}\sigma_{i,j'}] = 0$ for $j \ne j'$ which implies that $\E[\opnorm{\bS x}^2] = 1$. We also have
\begin{align*}
	(\opnorm{\bS x}^2 - 1)^2 &= \frac{1}{s^2}\sum_{i,i'}\sum_{\substack{j \ne j'\\ k\ne k'}}\delta_{i,j}\delta_{i,j'}\delta_{i',k}\delta_{i',k'}\sigma_{i,j}\sigma_{i,j'}\sigma_{i',k}\sigma_{i',k'} x_jx_{j'}x_kx_{k'}.
\end{align*}
If $i \ne i'$, $j \ne j'$, and $k \ne k'$, then the random variables $\sigma_{i,j}$, $\sigma_{i,j'}, \sigma_{i',k}$, and $\sigma_{i',k'}$ are distinct and if they are $4$-wise independent Rademacher random variables, then $\E[\sigma_{i,j}\sigma_{i,j'}\sigma_{i',k}\sigma_{i',k'}]=0$ which implies that
\begin{align*}
	\E[(\opnorm{\bS x}^2 - 1)^2] = \frac{1}{s^2}\sum_i\sum_{\substack{j \ne j'\\k \ne k'}}\E[\delta_{i,j}\delta_{i,j'}\delta_{i,k}\delta_{i,k'}]\E[\sigma_{i,j}\sigma_{i,j'}\sigma_{i,k}\sigma_{i,k'}] x_jx_{j'}x_kx_{k'}.
\end{align*}
Again, if all the indices $j,j',k,k'$ are distinct, then by the $4$-wise independence of the $\sigma$ random variables, we obtain that $\E[\sigma_{i,j}\sigma_{i,j'}\sigma_{i,k}\sigma_{i,k'}] = 0$, which leaves only $j = k \ne j' = k'$ and $j =k' \ne j' = k$ as the cases where the expectation can be non-zero. In each of these cases, $\sigma_{i,j}\sigma_{i,j'}\sigma_{i,k}\sigma_{i,k'} = 1$ with probability $1$. Therefore,
\begin{align*}
	\E[(\opnorm{\bS x}^2 - 1)^2] = \frac{2}{s^2}\sum_i\sum_{j,j'}\E[\delta_{i,j}\delta_{i,j'}]x_j^2x_{j'}^2 \le \frac{2}{s^2}\frac{s^2}{m^2}\sum_i\sum_{j \ne j'}x_j^2x_{j'}^2 \le \frac{2}{m^2} \sum_i (\sum_j x_j^2)(\sum_{j'}x_{j'}^2) \le \frac{2}{m}
\end{align*}
which gives that $\|\opnorm{\bS x}^2 - 1\|_{L^2} = \E[(\opnorm{\bS x}^2 -1)^2]^{1/2} \le \sqrt{2/m}$. Now, for $m = \Omega(1/\varepsilon^2\delta)$, we have $\|\opnorm{\bS x}^2 - 1\|_{L^2} \le \varepsilon\delta^{1/2}$, which proves that the matrix $\bS$ has the $(\varepsilon, \delta, 2)$-JL moment property.
\end{proof}

\section{Missing Proofs from Section~\ref{sec:lowerbounds}}
\subsection{Proof of Lemma~\ref{lma:random-vector-to-frobenius}}
Lemma~\ref{lma:random-vector-to-frobenius} transforms a probability statement about the squared norm of a product of a random matrix $\bM$ with an independent uniform random vector $\br$. To prove the lemma, we first prove the following similar lemma in which the matrix $M$ is a deterministic matrix.
\begin{lemma}
    Let $M \in \R^{n \times n}$ be a fixed matrix and $\br$ be a uniformly random unit vector. If $\Pr_{\br}[\opnorm{M\br}^2 \le a] \ge 9/10$, then $\frnorm{M}^2 \le Cna$ for a large enough universal constant $C$.
\end{lemma}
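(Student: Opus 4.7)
The plan is to reduce to a diagonal $M$, compute the first two moments of $\|M\br\|^2$, and then invoke the Paley--Zygmund inequality to convert the given upper-tail bound into an upper bound on $\E[\|M\br\|^2]$.

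\textbf{Step 1: Reduce to diagonal $M$.} Write the SVD $M = U\Sigma\T{V}$. Since $\br$ is uniform on the unit sphere, $\T{V}\br$ is also uniform on the sphere, and $\|M\br\| = \|\Sigma \T{V}\br\|$. Thus we may assume $M = \Sigma = \mathrm{diag}(\sigma_1,\dots,\sigma_n)$ and write $X := \|M\br\|^2 = \sum_{i=1}^n \sigma_i^2 \br_i^2$. Note $\frnorm{M}^2 = \sum_i \sigma_i^2$, so it suffices to show $\E[X] \le C'a$ for some absolute constant $C'$ (giving $\frnorm{M}^2 = n\E[X] \le C'na$).

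\textbf{Step 2: Compute $\E[X]$ and $\E[X^2]$.} Using the standard sphere moments $\E[\br_i^2] = 1/n$, $\E[\br_i^4] = 3/(n(n+2))$, and $\E[\br_i^2\br_j^2] = 1/(n(n+2))$ for $i\ne j$, we get $\E[X] = \frnorm{M}^2/n$ and
\begin{align*}
    \E[X^2] &= \sum_i \sigma_i^4 \E[\br_i^4] + \sum_{i\ne j}\sigma_i^2\sigma_j^2\E[\br_i^2\br_j^2]\\
    &= \frac{1}{n(n+2)}\Bigl(2\sum_i \sigma_i^4 + \Bigl(\sum_i \sigma_i^2\Bigr)^2\Bigr) \le \frac{3}{n^2}\Bigl(\sum_i\sigma_i^2\Bigr)^2 = 3\,(\E[X])^2,
\end{align*}
where we used $\sum_i\sigma_i^4 \le (\sum_i\sigma_i^2)^2$.

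\textbf{Step 3: Apply Paley--Zygmund.} Since $X \ge 0$ and $\E[X^2] \le 3(\E[X])^2$, the Paley--Zygmund inequality with $\theta \in (0,1)$ gives
\begin{equation*}
    \Pr[X \ge (1-\theta)\E[X]] \ge \frac{(1-\theta)^2(\E[X])^2}{\E[X^2]} \ge \frac{(1-\theta)^2}{3}.
\end{equation*}
Choosing, say, $\theta = 2/5$ yields $\Pr[X \ge (3/5)\E[X]] \ge 3/25 > 1/10$.

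\textbf{Step 4: Combine with the hypothesis.} We are given $\Pr[X > a] \le 1/10$. If $(3/5)\E[X] > a$ held, the event $\{X \ge (3/5)\E[X]\}$ would be contained in $\{X > a\}$, contradicting the probabilities from Step 3. Therefore $(3/5)\E[X] \le a$, i.e., $\E[X] \le (5/3)a$, and $\frnorm{M}^2 = n\E[X] \le (5/3)na$, proving the lemma with $C = 5/3$.

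There is no real obstacle here beyond the second-moment computation; the argument is essentially a textbook application of Paley--Zygmund, with the only mildly technical ingredient being the closed forms for the fourth moments of a uniform vector on $S^{n-1}$. The lemma is then used in the paper via an averaging argument: after a union bound over the randomness of $\bM$, one transfers the $\Omega(1)$-probability statement about $\bM\br$ into an $\Omega(1)$-probability statement about $\frnorm{\bM}$ alone, which is exactly what Lemma~\ref{lma:random-vector-to-frobenius} asserts.
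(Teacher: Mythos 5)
Your argument is correct in structure but takes a genuinely different route from the paper. The paper's proof first replaces the uniform vector $\br$ by $\bg/\opnorm{\bg}$ for a Gaussian $\bg$, conditions on $\opnorm{\bg}^2 \le C_1 n$, reduces to the diagonal case by rotational invariance, and then invokes an anti-concentration inequality for weighted sums of squared Gaussians ($\Pr[\sum_i a_i\bg_i^2 \le \delta\sum_i a_i]\le e\sqrt{\delta}$, proved via an exponential-moment argument). You instead stay on the sphere, compute $\E[X]$ and $\E[X^2]$ exactly from the moments of $\br_i^2$, and apply Paley--Zygmund. Your route is more self-contained (no Gaussian comparison step, no MGF computation) at the cost of needing the fourth moments of the uniform measure on the sphere; your Step 2 computation of those moments and of $\E[X^2] \le 3(\E[X])^2$ is correct. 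Both approaches are second-moment arguments at heart and both yield explicit constants.

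One slip in Step 3: the Paley--Zygmund inequality reads $\Pr[X > t\,\E[X]] \ge (1-t)^2(\E[X])^2/\E[X^2]$, so with threshold $(1-\theta)\E[X]$ the correct lower bound is $\theta^2(\E[X])^2/\E[X^2]$, not $(1-\theta)^2(\E[X])^2/\E[X^2]$. With your choice $\theta = 2/5$ this gives $\Pr[X\ge (3/5)\E[X]]\ge (4/25)/3 = 4/75 < 1/10$, so the contradiction in Step 4 does not close as written. Taking $\theta = 3/5$ instead gives $\Pr[X \ge (2/5)\E[X]] \ge (9/25)/3 = 3/25 > 1/10$, hence $\E[X]\le (5/2)a$ and $\frnorm{M}^2 \le (5/2)na$. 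The proof therefore goes through with $C = 5/2$ rather than $5/3$; since the lemma only asserts the bound for a large enough universal constant, this is a cosmetic repair rather than a structural gap.
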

\begin{proof}
    Let $\bg \in \R^n$ be a Gaussian random vector with i.i.d. entries drawn from $N(0,1)$. Then the distribution of $\bg/\opnorm{\bg}$ is identical to that of a uniformly random unit vector in $n$ dimensions by rotational invariance of the Gaussian distribution. Therefore from our assumption, $\Pr_\bg[\opnorm{M\bg}^2 \le a\opnorm{\bg}^2] \ge 9/10$. We also have that with probability $\ge 9/10$, $\opnorm{\bg}^2 \le C_1n$ for a large enough absolute constant $C_1$. Thus, we have by a union bound that,
\begin{equation*}
	\Pr_\bg[\opnorm{M\bg}^2 \le a\opnorm{\bg}^2\ \land \ \opnorm{\bg}^2 \le C_1n] \ge 8/10,
\end{equation*}
which implies that
\begin{equation*}
	\Pr_{\bg}[\opnorm{M\bg}^2 \le C_1an] \ge 8/10.
\end{equation*}
Let $M = U\Sigma \T{V}$ be the singular value decomposition of the matrix $M$. Then, the above equation is equivalent to 
\begin{equation*}
	8/10 \le \Pr_{\bg}[\opnorm{\Sigma \T{V}\bg}^2 \le C_1an] = \Pr_\bg[\opnorm{\Sigma \bg}^2 \le C_1an]
\end{equation*}
where the equality follows from the fact that for an orthonormal matrix $\T{V}$, we have $\T{V}\bg \equiv \bg$. Thus, if the singular values of $M$ are $\sigma_1,\ldots, \sigma_n$, we have
\begin{equation*}
	\Pr_{\bg}[\sum_i \sigma_i^2\bg_i^2 \le C_1an] \ge 8/10.
\end{equation*}
Now, we have the following lemma which gives an upper bound on the probability of a linear combination of squared Gaussian random variables being small.
\begin{lemma}[\citet{anti-concentration}]
\label{lma:anti-concentration}
	If $a_i \ge 0$ for $i=1,\ldots, n$ are constants and $\bg_1,\ldots, \bg_n$ are i.i.d. Gaussians of mean $0$ and variance $1$, then for every $\delta > 0$,
	\begin{equation*}
		\Pr[\sum_i a_i \bg_i^2 \le \delta \sum_i a_i] \le e\sqrt{\delta}.
	\end{equation*}
\end{lemma}
\begin{proof}
The inequality is obviously true for $\delta\ge 1$. We now consider arbitrary $\delta < 1$. Assume without loss of generality that $\sum_i a_i$ = 1. Now, for any $\lambda > 0$,
\begin{equation*}
\Pr[\sum_i a_i \bg_i^2 \le \delta] = \Pr[-\lambda \sum_i a_i \bg_i^2 \ge -\lambda \delta] = \Pr[\exp(-\lambda\sum_i a_i \bg_i^2) \ge \exp(-\lambda \delta)] \le \exp(\lambda \delta)\E[\exp(-\lambda \sum_i a_i\bg_i^2)]
\end{equation*}	
and therefore,
\begin{align*}
	\Pr[\sum_i a_i \bg_i^2 \le \delta] &= \exp(\lambda \delta)\E[\exp(-\lambda \sum_i a_i\bg_i^2)]\\
	&= \exp(\lambda \delta)\prod_i\E[\exp(-\lambda a_i\bg_i^2)]\\
	&= \exp(\lambda \delta)\prod_i (1+2\lambda a_i)^{-1/2}.
\end{align*}
Now, $\prod_i (1+2\lambda a_i) \ge 1 + 2\lambda(\sum_i a_i) = 1 + 2\lambda$ which implies that $\prod (1+2\lambda a_i)^{-1/2} \le (1+2\lambda)^{-1/2}$ which gives
\begin{align*}
	\Pr[\sum_i a_i \bg_i^2 \le \delta] \le \exp(\lambda \delta)(1+2\lambda)^{-1/2}.
\end{align*}
Picking $\lambda \ge 0$ such that $1 + 2\lambda = 1/\delta$, we obtain that $\Pr[\sum_i a_i \bg_i^2 \le \delta] \le \exp((1-\delta)/2)\sqrt{\delta} \le e\sqrt{\delta}$.
\end{proof}

For $\delta = 0.01$, the above lemma implies that $\Pr[\sum_i \sigma_i^2 \bg_i^2 \le 0.01\sum_i \sigma_i^2] \le e \cdot (0.1) \le 0.3$. This, in particular implies that $0.01\sum_i \sigma_i^2 = 0.01 \frnorm{\Sigma}^2 \le C_1an$ which gives
	$\frnorm{M}^2 = \frnorm{\Sigma}^2 \le Can$ for a large enough absolute constant $C$.
\end{proof}
We now extend the above lemma to the case when the matrix $\bM$ is also random and independent of the random unit vector $\br$.
\begin{proof}[Proof of Lemma~\ref{lma:random-vector-to-frobenius}]
	Let $\bM$ be \emph{good} if $\Pr_{\br}[\opnorm{\bM \br}^2 \le a] \ge 9/10$ and let $\bM$ be \emph{bad} otherwise and note from the above lemma that if $\bM$ is \emph{good}, then $\frnorm{\bM}^2 \le Can$. Now,
	\begin{align*}
		&99/100 \le \Pr_{\bM, \br}[\opnorm{\bM \br}^2 \le a]\\
		&\le \Pr_{\bM}[\text{$\bM$ is \emph{good}}] + \Pr_{\bM}[\text{$\bM$ is \emph{bad}}] \cdot (9/10)\\
		&= 9/10 + (1/10) \cdot \Pr_{\bM}[\text{$\bM$ is \emph{good}}]
	\end{align*}
	which implies that $\Pr_{\bM}[\text{$\bM$ is \emph{good}}] \ge 9/10$ and therefore $\Pr_{\bM}[\frnorm{\bM}^2 \le Can] \ge \Pr_{\bM}[\text{$\bM$ is good}] \ge 9/10$.
\end{proof}
\subsection{Proof of Lower Bounds for AMM}
\begin{proof}[Proof of Lemma~\ref{lma:initital-amm-lowerbound}]
We assume that such a distribution exists with $m \le c/\varepsilon^2$ for a small enough constant $c$. Let $\bA \in \mathbb{R}^{d \times n}$ be a uniformly random orthonormal matrix ($\T{\bA}\bA = I_n$) independent of the sketching matrix. Then we have
\begin{equation*}
	\Pr_{\bA,\bS}[\frnorm{\T{\bA}\T{\bS}\bS \bA - I} \le \varepsilon n] \ge 0.9,
\end{equation*}
as $\frnorm{\T{A}} = \sqrt{n}$. Let $\bS = U\Sigma \T{V}$ be the singular value decomposition with $U \in \mathbb{R}^{m \times m}$, $\Sigma \in \mathbb{R}^{m \times m}$ and $\T{V} \in \R^{m \times d}$. Note that if $\bS$ is a random matrix that satisfies the AMM property, then $\bS \cdot \bQ$ is also a random matrix that satisfies the AMM property where $\bQ$ is an independent uniformly random orthonormal matrix. Therefore, we can without loss of generality assume that $\Sigma$ is independent of $\T{V}$ and that $\T{V}$ is a uniformly random orthonormal matrix. Thus, the above condition implies that
\begin{equation*}
	\Pr_{\bA, V,\Sigma}[\frnorm{\T{\bA}V\Sigma^2\T{V}\bA - I} \le \varepsilon n] \ge 0.9.
\end{equation*}
Using the following lemma, we effectively show that the matrix $\T{V}\bA$ in the above statement can be replaced with $(1/\sqrt{d})\hat{\bG}$, where $\hat{\bG}$ is a Gaussian matrix of the same dimensions as $\T{V}\bA$.     
\begin{lemma}[Lemma~3 of \citet{product-of-gaussians}]
Let $\bG \sim \mathcal{G}_{d,d}$ and $\bZ \sim O^{d \times d}$. Suppose that $p,q \le d$ and $\hat{\bG}$ is the top left $p \times q$ block of $\bG$ and $\hat{Z}$ is the top left $p \times q$ block of $Z$. Then 
$
    d_{\text{KL}}(\frac{1}{\sqrt{d}}\hat{\bG} \| \hat{Z}) \le C\frac{pq}{d},
$
where $C$ is a universal constant. By applying Pinsker's inequality, we obtain that 
\begin{equation*}
d_{\text{TV}}(\frac{1}{\sqrt{d}}\hat{\bG} \| \hat{Z}) \le \sqrt{(1/2)d_{\text{KL}}(\frac{1}{\sqrt{d}}\hat{\bG}\|\hat{Z})}\le \sqrt{Cpq/2d}.	
\end{equation*}
\end{lemma}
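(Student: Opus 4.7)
The plan is to compute the Kullback-Leibler divergence directly from explicit densities. The density of $\hat{Z}$, the top-left $p\times q$ block of a Haar-distributed $d\times d$ orthogonal matrix, is a classical object in multivariate statistics: for $p+q\le d-1$ and $Y\in\R^{p\times q}$ with $\opnorm{Y}<1$,
\begin{equation*}
f_{\hat{Z}}(Y)=\frac{\Gamma_q(d/2)}{\pi^{pq/2}\,\Gamma_q((d-p)/2)}\,\det(I_q-\T{Y}Y)^{(d-p-q-1)/2},
\end{equation*}
where $\Gamma_q$ is the multivariate gamma function, while the density of $\hat{\bG}/\sqrt d$ is the standard Gaussian $g(Y)=(d/(2\pi))^{pq/2}\exp(-d\frnorm{Y}^2/2)$. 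Since $f_{\hat{Z}}$ is supported only on contractions while $g$ is supported on all of $\R^{p\times q}$, the mathematically well-defined KL to compute is $d_{\text{KL}}(\hat{Z}\,\|\,\hat{\bG}/\sqrt d)$; the TV bound stated in the lemma is symmetric in its two arguments, so bounding either direction suffices via Pinsker.

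The log-ratio $\log(f_{\hat{Z}}/g)$ is a sum of three pieces: a constant $C_{p,q,d}=\log[\Gamma_q(d/2)/(\Gamma_q((d-p)/2)(d/2)^{pq/2})]$, the log-determinant $((d-p-q-1)/2)\log\det(I_q-\T{Y}Y)$, and the Gaussian energy $(d/2)\frnorm{Y}^2$. Taking expectations under the distribution of $\hat{Z}$, for which $\T{\hat{Z}}\hat{Z}\sim\mathrm{Beta}_q(p/2,(d-p)/2)$, the Gaussian term contributes $(d/2)\cdot(pq/d)=pq/2$ via $\E[\text{tr}(\T{\hat{Z}}\hat{Z})]=pq/d$. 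For the log-determinant I would use the convergent series $\log\det(I_q-A)=-\sum_{k\ge 1}\text{tr}(A^k)/k$, valid on $\opnorm{A}<1$, and plug in matrix-beta moment identities: the $k=1$ term combined with the Gaussian energy leaves a residue of order $pq(p+q)/d$, and higher-order terms contribute only $O(pq/d)$ because each additional factor of $\T{\hat{Z}}\hat{Z}$ brings in a factor $\min(p,q)/d$. The constant $C_{p,q,d}$ is handled via Stirling's expansion of $\Gamma_q(x)=\pi^{q(q-1)/4}\prod_{i=0}^{q-1}\Gamma(x-i/2)$: expanding each $\log\Gamma(d/2-i/2)-\log\Gamma((d-p)/2-i/2)$ to second order in $p/d$ is calibrated to absorb the residue and yield a total bound $d_{\text{KL}}\le Cpq/d$.

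The main obstacle I expect is the bookkeeping of cancellations: each of the three pieces above is individually $\Theta(pq)$ in absolute value, and obtaining a final bound of $O(pq/d)$ --- rather than a weaker $O(pq(p+q)/d)$, which would be useless in the regime $p,q=\Theta(\sqrt d)$ relevant to the AMM lower bound --- depends on using the precise exponent $(d-p-q-1)/2$ in the density of $\hat{Z}$ together with sharp second-order Stirling corrections. A secondary issue is computing the matrix-beta trace moments $\E[\text{tr}((\T{\hat{Z}}\hat{Z})^k)]$ at the required precision; this can be done either by direct integration against $f_{\hat{Z}}$ or, more transparently, via the Gram--Schmidt coupling $\hat{Z}=\hat{\bG}R^{-1}$ where $R$ is the upper-triangular factor from the QR decomposition of a $d\times q$ Gaussian, reducing the computation to standard Wishart identities. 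Once $d_{\text{KL}}\le Cpq/d$ is established, the TV bound $\sqrt{Cpq/(2d)}$ is immediate from Pinsker's inequality.
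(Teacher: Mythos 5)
First, note that the paper does not prove this statement at all: it is imported verbatim as Lemma~3 of the cited reference, so there is no in-paper proof to compare against. Your plan — write down the exact density $c_{p,q,d}\det(I_q-\T{Y}Y)^{(d-p-q-1)/2}$ of the orthogonal block, compare with the Gaussian density, and compute the KL divergence via matrix-beta moments and multivariate gamma asymptotics — is indeed the route taken in the literature this lemma comes from, and your observation that the only finite direction is $d_{\text{KL}}(\hat{Z}\,\|\,\tfrac{1}{\sqrt d}\hat{\bG})$ (the Gaussian charges the complement of the contraction ball, so the direction as literally printed is $+\infty$) is the right fix, and it is harmless downstream since only the TV bound is used.

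However, there is a concrete error in your order bookkeeping that would break the plan as written. You claim the $k\ge 2$ terms of $\log\det(I_q-\T{Y}Y)=-\sum_{k\ge1}\mathrm{tr}((\T{Y}Y)^k)/k$ contribute only $O(pq/d)$ because "each additional factor of $\T{\hat Z}\hat Z$ brings in a factor $\min(p,q)/d$." The nonzero eigenvalues of $\T{Y}Y$ number $\min(p,q)$ and concentrate near $\max(p,q)/d$ (the trace is $pq/d$), so each additional factor costs $\max(p,q)/d$, not $\min(p,q)/d$; consequently $\E[\mathrm{tr}((\T{Y}Y)^2)]\approx pq(p+q+1)/d^2$, and after multiplying by the exponent $(d-p-q-1)/2\approx d/2$ the $k=2$ term is $\Theta(pq\max(p,q)/d)$ — the same order as the $k=1$ residue you do track, and enormous compared to the target: at $p=q=\Theta(\sqrt d)$, the regime you yourself flag as the relevant one, it is $\Theta(\sqrt d)$ versus a target bound of $O(1)$. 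So calibrating the Stirling expansion of $\log[\Gamma_q(d/2)/\Gamma_q((d-p)/2)]$ to absorb only the $k=1$ residue cannot yield $Cpq/d$; the gamma-function constant has to cancel the entire series. The clean way to execute your own strategy is to avoid the truncated trace expansion altogether: use the exact identity $\E[\log\det(I_q-\T{Y}Y)]=\sum_{i=0}^{q-1}\bigl[\psi\bigl(\tfrac{d-p-i}{2}\bigr)-\psi\bigl(\tfrac{d-i}{2}\bigr)\bigr]$ (differentiate the beta moment formula at $s=0$) and expand it jointly with $\sum_{i=0}^{q-1}\bigl[\log\Gamma\bigl(\tfrac{d-i}{2}\bigr)-\log\Gamma\bigl(\tfrac{d-p-i}{2}\bigr)\bigr]-\tfrac{pq}{2}\log\tfrac d2$ using uniform digamma/log-gamma asymptotics; the $\Theta(pq)$ and $\Theta(pq(p+q)/d)$ pieces then cancel identically and the surviving terms are $O(pq/d)$ (in fact $O((pq)^2/d^2+pq(p+q)/d^2)$ in the regime $pq\ll d$), after which Pinsker gives the TV bound. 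Also note the density formula you invoke requires $p+q\le d-1$; this is satisfied in the paper's application but is not implied by the hypothesis "$p,q\le d$" as stated.
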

Now both the matrices $V,\bA$ can be taken to be the first $m$ and $n$ columns of independent uniform random orthogonal matrices $\bV'$ and $\bA'$, respectively. By the properties of the Haar Measure, we obtain that $\T{\bV'}\bA'$ is also a uniform random orthogonal matrix. Thus, the matrix $\T{V}\bA$ can be seen as the top left $m \times n$ sub-matrix of a uniformly random orthogonal matrix. If $nm \le  d/100C$, which can be assumed as $m \le c/\varepsilon^2$ for a small enough constant $\alpha$,  we have from the above lemma that $d_{\text{TV}}(\frac{1}{\sqrt{d}}\bG \| \T{V}\bA) \le 0.1$ which implies that
\begin{align*}
	|\Pr_{\bG, \Sigma}[\frnorm{(1/d)\T{\bG}\Sigma^2\bG - I} \le \varepsilon n] - \Pr_{\bA, V,\Sigma}[\frnorm{\T{\bA}V\Sigma^2\T{V}\bA - I} \le \varepsilon n]| \le 0.1
\end{align*}
and therefore
\begin{equation}
    \Pr_{\bG, \Sigma}[\frnorm{(1/d)\T{\bG}\Sigma^2\bG - I} \le \varepsilon n] \ge 0.8,
    \label{eqn:condition-on-sigma-g}
\end{equation}
where $\bG$ is an $m \times n$ matrix of i.i.d. normal random variables. We will now show that if $m \ll 1/\varepsilon^2$, then no distribution over matrices $\Sigma$ satisfies the above condition. Note that $\bG$ and $\Sigma$ are independent. We prove this by showing that a random matrix $\Sigma$ satisfying the above probability statement must satisfy two properties simultaneously that cannot be satisfied unless $m \ge c/\varepsilon^2$ for a large enough constant $c$.

Let $\bG_{l}$ denote the left half of the matrix $\bG$ and $\bG_r$ denote the right half of the matrix $\bG$. We have
\begin{equation*}
	\frnorm{(1/d)\T{\bG}\Sigma^2\bG - I}^2 \ge \frac{1}{d^2}\frnorm{\T{\bG_r}\Sigma^2\bG_l}^2
\end{equation*}
which is obtained by considering the Frobenius norm of only the bottom-left block matrix of $(1/d)\T{\bG}\Sigma^2\bG - I$.
We first have the following lemma.
\begin{lemma}
	Let $M$ be a fixed matrix and $\bG$ be a Gaussian matrix with $t$ columns. Then with probability $\ge 0.9$, $\frnorm{M\bG}^2 \ge 0.001t\frnorm{M}^2$.
\end{lemma}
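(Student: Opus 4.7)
The plan is to reduce the problem to a linear combination of squared independent Gaussians and then directly invoke the anti-concentration bound (Lemma~\ref{lma:anti-concentration}) that was just proved in the paper.

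First, I would write $M = U\Sigma \T{V}$ as the singular value decomposition, where $\Sigma$ is diagonal with entries $\sigma_1,\ldots,\sigma_q$ (the singular values of $M$). If $\bG \in \R^{q \times t}$ has i.i.d.\ $N(0,1)$ entries, then by rotational invariance of the Gaussian distribution, $\T{V}\bG$ is again a $q \times t$ matrix with i.i.d.\ $N(0,1)$ entries. Call this matrix $\bH$. Since $U$ has orthonormal columns,
\begin{equation*}
\frnorm{M\bG}^2 = \frnorm{U\Sigma \bH}^2 = \frnorm{\Sigma \bH}^2 = \sum_{i=1}^t \sum_{j=1}^q \sigma_j^2 \bH_{j,i}^2.
\end{equation*}

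Next, I would view this as a single sum $\sum_{k} a_k \bg_k^2$ over $qt$ independent standard Gaussians $\bg_k = \bH_{j,i}$, with coefficients $a_k = \sigma_j^2 \ge 0$. The total sum of coefficients is $\sum_{i,j}\sigma_j^2 = t\sum_j \sigma_j^2 = t\frnorm{M}^2$. Now I would apply Lemma~\ref{lma:anti-concentration} with $\delta = 0.001$ to obtain
\begin{equation*}
\Pr\!\left[\,\frnorm{M\bG}^2 \le 0.001\, t\frnorm{M}^2\,\right] \le e\sqrt{0.001} < 0.1,
\end{equation*}
which immediately yields $\Pr[\frnorm{M\bG}^2 \ge 0.001\, t\frnorm{M}^2] \ge 0.9$, as desired.

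There is essentially no obstacle here: the only subtlety worth double-checking is that rotational invariance really does let us absorb $\T{V}$ into the Gaussian matrix (which follows column by column since each column of $\bG$ is an independent spherical Gaussian), and that the anti-concentration lemma is applicable with arbitrary nonnegative coefficients, which it is. The constant $0.001$ is not tight; any $\delta$ with $e\sqrt{\delta} < 0.1$ would give a qualitatively identical statement, but $\delta = 0.001$ suffices for the downstream application in the AMM lower bound.
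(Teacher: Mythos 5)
Your proof is correct and is essentially identical to the paper's: both use the SVD to reduce $\frnorm{M\bG}^2$ to $\sum_{i,j}\sigma_j^2 \bg_{ij}^2$ via rotational invariance and then apply the anti-concentration lemma with $\delta = 0.001$, noting $e\sqrt{0.001} < 0.1$. No gaps.
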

\begin{proof}
	Let $M = U\Sigma\T{V}$. We have $M\bG = U\Sigma\T{V}\bG = U\Sigma \bG'$ where $\bG'$ is a Gaussian matrix with $t$ columns. Now, $\frnorm{M\bG}^2 = \frnorm{U\Sigma \bG'}^2 = \frnorm{\Sigma \bG'}^2 = \sum_i\sum_j \sigma_i^2g_{ij}^2$. By Lemma~\ref{lma:anti-concentration}, $ \sum_i\sum_j \sigma_i^2g_{ij}^2 \ge (\sum_i \sum_j \sigma_i^2) \cdot 0.001$ with probability $\ge 0.9$. Now, using the equality $\sum_i \sum_j \sigma_i^2 = \sum_i t\sigma_i^2 = t\frnorm{\Sigma}^2 = t\frnorm{M}^2$, we finish the proof. 
\end{proof}
Thus, conditioned on the matrix $\T{\bG_r}\Sigma^2$, we have that with probability $\ge 0.9$, \[\frnorm{\T{\bG_r}\Sigma^2 \bG_l}^2 \ge 0.001(n/2)\frnorm{\T{\bG_r}\Sigma^2}^2.\] Applying the same lemma again, we have with probability $\ge 0.9$, $\frnorm{\T{\bG_r}\Sigma^2}^2 \ge 0.001(n/2)\frnorm{\Sigma^2}^2$. Thus, overall with probability $\ge 0.8$ over $\bG$, we have for any fixed $\Sigma$ that,
$
	\frnorm{\T{\bG_r}\Sigma^2 \bG_l}^2 \ge \Omega(n^2\frnorm{\Sigma^2}^2).
$ and therefore,
\begin{equation*}
	\Pr_{\bG, \Sigma}[\frnorm{\T{\bG_r}\Sigma^2\bG_{l}}^2 \ge \Omega(n^2\frnorm{\Sigma^2}^2)] \ge 0.8.
\end{equation*}
Using a union bound with \eqref{eqn:condition-on-sigma-g}, we obtain that with probability $\ge 0.6$, it is simultaneously true that
\begin{equation*}
\varepsilon^2n^2 \ge \frnorm{(1/d)\T{\bG}\Sigma^2\bG - I}^2 \ge \frac{1}{d^2}\frnorm{\T{\bG_r}\Sigma^2\bG_l}^2
\end{equation*}
and
\begin{equation*}
    \frnorm{\T{\bG_r}\Sigma^2\bG_{l}}^2 \ge \Omega(n^2\frnorm{\Sigma^2}^2)
\end{equation*}
which implies that with probability $\ge 0.6$, $(1/d^2)\frnorm{\Sigma^2}^2 = O(\varepsilon^2)$ i.e., $(1/d^2)\sum_{i=1}^m \sigma_i^4 = O(\varepsilon^2)$. Thus, if $\bS$ is a random matrix that satisfies the AMM property and if $\sigma_1,\ldots, \sigma_r$ are its singular values, then with probability $\ge 0.6$,
\begin{equation}
	\sum_i \sigma_i^4 \le C_1d^2\varepsilon^2
	\label{eqn:first-implication-sigma}
\end{equation}
for a universal constant $C_1$.

We now obtain a different probability statement about the singular values of the sketching matrix $\bS$ by considering the sum of squares of the diagonal entries of the matrix $(1/d)\T{\bG}\Sigma^2\bG - I = (1/d)\sum_{i=1}^m \sigma_i^2\bg_i\T{\bg_i} - I$ where $\bg_i$ are $n$ dimensional Gaussian vectors. Note that $((1/d)\T{\bG}\Sigma^2\bG - I)_{jj} = (1/d)\sum_{i=1}^m \sigma_i^2\bg_{ij}^2 - 1$. Fix the matrix $\Sigma$. Clearly,
\begin{equation*}
	\frnorm{(1/d)\T{\bG}\Sigma^2 \bG - I}^2 \ge \sum_{j=1}^n ((1/d)\sum_{i=1}^m \sigma_i^2\bg_{ij}^2 - 1)^2.
\end{equation*}
If $\sum_{i=1}^m\sigma_i^2 \le d/100$, we have that with probability at least $0.9$, $(1/d)\sum_{i=1}^m \sigma_i^2 \bg_{ij}^2 \le (10/d)\E[\sum_{i=1}^m \sigma_i^2\bg_{ij}^2] \le (10/d) \cdot (d/100)$ which implies that $((1/d)\sum_{i=1}^m \sigma_i^2\bg_{ij}^2 - 1)^2 \ge 1/4$ with probability $\ge 0.9$. Let $j \in [n]$ be \emph{large} if the previous event holds. By a Chernoff bound, with probability $\ge 0.9$, there are $\ge n/C_2$ \emph{large} values $j \in [n]$ for a large enough absolute constant $C_2$. 
Thus, $\sum_{i=1}^m \sigma_i^2  \le d/100$ implies that with probability $\ge 0.9$, $\frnorm{(1/d)\T{G}\Sigma^2 G - I}^2 \ge (n/C_2)(1/4) = n/4C_2 \ge \varepsilon^2 n^2$ as we assumed that $\varepsilon \le c/\sqrt{n}$ for a small enough constant $c$. Now, if $\Pr_{\Sigma}[\sum_{i=1}^m \sigma_i^2 \le d/100] > 0.3$, then by the above property for a fixed $\Sigma$, $\Pr_{\Sigma ,G}[\frnorm{(1/d)\T{\bG}\Sigma^2 \bG - I}^2 \ge \varepsilon^2n^2] > 0.2$ which implies that \[\Pr_{\Sigma, \bG}[\frnorm{(1/d)\T{\bG}\Sigma^2 \bG - I}^2 \le \varepsilon^2 n^2] < 0.8\] which is a contradiction to \eqref{eqn:condition-on-sigma-g}. Thus, 	$\Pr_{\Sigma}[\frnorm{\Sigma}^2 \le d/100] < 0.3$ which implies 
\begin{equation}
\Pr_{\Sigma }[\sum_{i=1}^m \sigma_i^2 \ge d/100] \ge 0.7.
	\label{eqn:second-implication-sigma}
\end{equation}
By a union bound on \eqref{eqn:first-implication-sigma} and \eqref{eqn:second-implication-sigma}, with probability $\ge 0.3$, it simultaneously holds that
\begin{equation*}
	\sum_{i=1}^m \sigma_i^4 \le C_1d^2\varepsilon^2 \ \text{and}\ \sum_{i=1}^m \sigma_i^2 \ge d/100.
\end{equation*}
Now,
\begin{align*}
	d^2/100^2 \le \left(\sum_{i=1}^m \sigma_i^2\right)^2 \le m\sum_{i=1}^m \sigma_i^4 \le C_1md^2\varepsilon^2.
\end{align*}
Here we used the Cauchy-Schwarz inequality which finally implies that $m = \Omega(1/\varepsilon^2)$. Thus, any oblivious distribution that gives AMM with $\varepsilon < c/\sqrt{n}$ for a small enough constant $c$ must have $\Omega(1/\varepsilon^2)$ rows.
\end{proof}

Before proving Theorem~\ref{thm:final-amm-lowerbound}, we first prove the following lemma that shows CountSketch preserves the Frobenius norm of a matrix.
\begin{lemma}[CountSketch Preserves Frobenius Norms]
If $\bS$ is a CountSketch matrix with $m \ge 200/\varepsilon^2$, then for any arbitrary matrix $A$, with probability $\ge 9/10$,
\begin{equation*}
	\frnorm{\bS A}^2 = (1 \pm \varepsilon)\frnorm{A}^2.
\end{equation*}
\label{lma:countsketch-fro-norm}
\end{lemma}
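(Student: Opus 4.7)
The plan is to compute $\frnorm{\bS A}^2$ directly using the CountSketch structure, reduce the analysis to a variance bound, and then apply Chebyshev's inequality. Recall that a CountSketch matrix $\bS \in \R^{m \times d}$ is specified by a random hash function $h : [d] \to [m]$ together with signs $\sigma(j) \in \{\pm 1\}$ drawn from a $4$-wise independent family, with $\bS_{ij} = \sigma(j)\,\mathbf{1}[h(j)=i]$.

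First I would expand
\begin{equation*}
\frnorm{\bS A}^2 = \sum_{i,k} (\bS A)_{ik}^2 = \sum_k \sum_{j_1, j_2 : h(j_1)=h(j_2)} \sigma(j_1)\sigma(j_2) A_{j_1 k} A_{j_2 k},
\end{equation*}
and split the inner sum into the diagonal contribution $j_1 = j_2$ and the off-diagonal contribution $j_1 \ne j_2$. The diagonal part collapses to $\frnorm{A}^2$, and I will denote the remaining off-diagonal sum by $Y$. By $2$-wise independence of the signs, every summand in $Y$ has mean zero, so $\E[\frnorm{\bS A}^2] = \frnorm{A}^2$ and $\frnorm{\bS A}^2 - \frnorm{A}^2 = Y$.

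Next I would bound $\E[Y^2]$. Expanding the square produces a sum indexed by two ordered pairs $(j_1, j_2)$ and $(j_1', j_2')$ with $j_1 \ne j_2$ and $j_1' \ne j_2'$; by $4$-wise independence of the $\sigma$'s, only those terms where the multisets $\{j_1,j_2\}$ and $\{j_1',j_2'\}$ coincide contribute, which introduces a factor of $2$ from the two matchings. Using $\Pr[h(j_1)=h(j_2)] = 1/m$ and then Cauchy--Schwarz in the $j$ index gives
\begin{equation*}
\E[Y^2] \le \frac{2}{m} \sum_{k,k'} \Bigl(\sum_j A_{jk}A_{jk'}\Bigr)^2 = \frac{2}{m}\frnorm{\T{A}A}^2 \le \frac{2}{m}\opnorm{A}^2 \frnorm{A}^2 \le \frac{2}{m}\frnorm{A}^4,
\end{equation*}
where the last bound uses $\opnorm{A} \le \frnorm{A}$.

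Finally, Chebyshev's inequality yields
\begin{equation*}
\Pr\bigl[|\frnorm{\bS A}^2 - \frnorm{A}^2| > \varepsilon \frnorm{A}^2\bigr] \le \frac{\E[Y^2]}{\varepsilon^2\frnorm{A}^4} \le \frac{2}{m\varepsilon^2},
\end{equation*}
which is at most $1/100 \le 1/10$ once $m \ge 200/\varepsilon^2$. The only delicate step is the variance calculation: one must be careful to use $4$-wise independence of the signs to kill non-matching terms and to separate the hash collision probabilities from the sign expectations. Everything else is a short bookkeeping computation, and there is no real obstacle—this is a textbook Chebyshev argument for CountSketch, analogous to the moment computation already carried out for OSNAP in the proof of Lemma~\ref{lma:jl-moment-property}.
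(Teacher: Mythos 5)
Your proof is correct, but it takes a genuinely different route from the paper's. The paper works column-by-column: it invokes the standard single-vector variance bound $\E[(\opnorm{\bS x}^2 - \opnorm{x}^2)^2] \le (2/m)\opnorm{x}^4$, converts it to a first-moment bound $\E[|\opnorm{\bS x}^2 - \opnorm{x}^2|] \le \sqrt{2/m}\,\opnorm{x}^2$, sums over the columns of $A$ via the triangle inequality to get $\E[|\frnorm{\bS A}^2 - \frnorm{A}^2|] \le \sqrt{2/m}\,\frnorm{A}^2$, and finishes with Markov's inequality. You instead compute the second moment of the matrix-level error directly and apply Chebyshev. The trade-off: the paper's argument is shorter (modulo the cited single-vector fact) but, by passing through absolute first moments and the triangle inequality, it loses the cancellation across columns and only yields failure probability $1/10$ at $m = 200/\varepsilon^2$. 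Your computation preserves that cancellation and gives the sharper intermediate bound $\E[Y^2] \le (2/m)\frnorm{\T{A}A}^2 \le (2/m)\opnorm{A}^2\frnorm{A}^2$, which both improves the failure probability to $1/100$ at the same $m$ and shows the required sketch size really scales with the stable rank $\frnorm{A}^2/\opnorm{A}^2$ rather than with $\frnorm{A}^4$ alone. Your moment bookkeeping is right: with $j_1 \ne j_2$ and $j_1' \ne j_2'$, four-wise independence of the signs kills every term except those with $\{j_1,j_2\} = \{j_1',j_2'\}$ (each index must appear an even number of times), the two matchings give the factor of $2$, pairwise independence of $h$ gives the $1/m$, and the final step $\sum_{k,k'}(\sum_j A_{jk}A_{jk'})^2 = \frnorm{\T{A}A}^2$ is exact. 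The only implicit hypothesis worth stating is that the hash function and the sign family are independent of one another (or drawn from a single sufficiently independent family), which is the standard CountSketch construction and is what lets you separate the collision probability from the sign expectation.
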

\begin{proof}
	For any vector $x$, we have $\E[(\opnorm{\bS x}^2 - \opnorm{x}^2)^2] \le (2/m)\opnorm{x}^4$ if $\bS$ is a CountSketch matrix with $m$ rows. Now,
	\begin{align*}
		\E[|\opnorm{\bS x}^2 - \opnorm{x}^2|]^2 \le \E[(\opnorm{\bS x}^2 - \opnorm{x}^2)^2] \le (2/m)\opnorm{x}^4
	\end{align*}
	which implies that $\E[|\opnorm{\bS x}^2 - \opnorm{x}^2|] \le \sqrt{2/m}\opnorm{x}^2$. For any arbitrary matrix $A$, we have $|\opnorm{\bS A}^2 - \opnorm{A}^2| = |\sum_i \frnorm{\bS A_{*i}}^2 - \frnorm{A_{*i}}^2| \le \sum_i |\opnorm{\bS A_{*i}}^2 - \opnorm{A_{*i}}^2|$. Thus, $\E[|\frnorm{\bS A}^2 - \frnorm{A}^2|] \le \sum_i \E[|\opnorm{\bS A_{*i}}^2 - \opnorm{A_{*i}}^2|] \le \sqrt{2/m}\sum_i\opnorm{A_{*i}}^2 = \sqrt{2/m}\frnorm{A}^2$. For $m \ge 200/\varepsilon^2$, we have $\E[|\frnorm{\bS A}^2 - \frnorm{A}^2|] \le (\varepsilon/10)\frnorm{A}^2$. By Markov's inequality, with probability $\ge 9/10$, $\frnorm{\bS A}^2 = (1 \pm \varepsilon)\frnorm{A}^2$.
\end{proof}

\begin{proof}[Proof of Theorem~\ref{thm:final-amm-lowerbound}]
Given $n$ and $\varepsilon \le c/\sqrt{n}$ for a small enough constant, assume that for $d = C_1/\varepsilon^2$ for a large enough universal constant $C_1$, there is a random matrix $\bS$ with $r < C_2/\varepsilon^2$ rows such that for any fixed matrix $A \in \R^{d \times n}$, with probability $\ge 99/100$,
\begin{equation*}
	\frnorm{\T{A}\T{\bS}\bS A - \T{A}A} \le (\varepsilon/3)\frnorm{\T{A}}\frnorm{A}.
\end{equation*}
Now consider an arbitrary matrix $B \in \R^{d' \times n}$ for $d' \ge Cn/\varepsilon^2$ for which the previous lemma applies. Let $\bS_1$ be a CountSketch matrix with $K/\varepsilon^2$ rows for a large enough $K$. With probability $\ge 95/100$, we simultaneously have (i) $\frnorm{\T{B}\T{\bS_1}\bS_1 B - \T{B}B} \le (\varepsilon/3)\frnorm{\T{B}}\frnorm{B} = (\varepsilon/3)\frnorm{B}^2$, and
	(ii) $\frnorm{\bS_1 B} = (1 \pm \varepsilon/3)\frnorm{B}$.
By picking $C_1$ large enough, we have that $C_1 \ge K$. Thus, by our assumption, the random matrix $\bS$ gives the AMM property for the matrix $\bS_1 A$. Conditioning on the above events, we have with probability $\ge 99/100$ that
\begin{align*}
	&\frnorm{\T{B}\T{\bS_1}\T{\bS}\bS\bS_1 B - \T{B}\T{\bS_1}\bS_1 B}\\
	&\le (\varepsilon/3)\frnorm{\T{B}\T{\bS_1}}\frnorm{\bS_1 B}\\
	&\le (\varepsilon/3)(1+\varepsilon/3)^2\frnorm{B}^2 \le (\varepsilon/2)\frnorm{B}^2.
\end{align*}
By the triangle inequality, we obtain $\frnorm{\T{B}\T{\bS_1}\T{\bS}\bS\bS_1 B - \T{B}B} \le (\epsilon/3 + \epsilon/2)\frnorm{B}^2 \le \varepsilon\frnorm{B}^2$. Thus, by a union bound, with probability $\ge 0.9$, the random matrix $\bS \cdot \bS_1$ satisfies that for any fixed matrix $B$ with $\ge Cn/\varepsilon^2$ rows, with probability $\ge 0.9$,
\begin{equation*}
	\frnorm{\T{B}\T{(\bS \cdot \bS_1)}(\bS \cdot \bS_1)B - \T{B}B} \le \varepsilon\frnorm{B}^2
\end{equation*}
implying that even for a matrix with at least $ Cn/\varepsilon^2$ rows, there is an oblivious sketching distribution with $r < C_2/\varepsilon^2$ rows which gives an AMM guarantee. This contradicts the previous lemma and hence our assumption that there is a small sketching matrix for matrices with $d = C_1/\varepsilon^2$ rows is false. Thus, we have that for any $d \ge C/\varepsilon^2$ for a large enough constant $C$, there is no sketching distribution with $< C_1/\varepsilon^2$ rows that gives the $\varepsilon$ AMM guarantee for matrices with $\ge d$ rows.
\end{proof}
\section{Missing Proofs from Section~\ref{sec:communication-bounds}}
\subsection{Lower Bounds for Ridge Regression}
Note that multiplying a column of $M$ with $-1$ does not change the optimum value. Therefore, without loss of generality, we can assume that the columns of $M$ are either $\T{[1\, 1]}$ or $\T{[1\, {-1}]}$. Thus, we have that $d-N$ columns of $M$ are equal to $\T{[1\, 1]}$ and $N$ columns of the matrix $M$ are equal to $\T{[1\, {-1}]}$. Let $I \subseteq [n]$, $|I |=N$ be the set of columns of $M$ that are equal to $\T{[1\, {-1}]}$. Now, for any vector $x$,
\begin{align*}
	&\opnorm{Mx - \begin{bmatrix}1 \\ -1\end{bmatrix}}^2 + \lambda \opnorm{x}^2\\
	&= (\sum_{i \in I} x_i + \sum_{i \notin I}x_i - 1)^2 + (-\sum_{i \in I} x_i+ \sum_{i \notin I}x_i + 1)^2\\
	&\quad  + \lambda \sum_{i \in I} x_i^2 + \lambda\sum_{i \notin I} x_i^2\\
	&= 2(\sum_{i \notin I} x_i)^2 + 2(\sum_{i \in I} x_i -1)^2 + \lambda \sum_{i \in I}x_i^2 + \lambda\sum_{i \notin I}x_i^2.
\end{align*}
Clearly, to minimize the expression, we need to set $x_i = 0$ for $i \notin I$. We also have that for a fixed value of $\sum_{i \in I} x_i$, the expression $\sum_{i \in I}x_i^2$ is minimized when $x_i = x_{i'}$ for all $i,i' \in I$. Let $x_i = \alpha$ for all $i \in I$. Then, 
\begin{align*}
	\opnorm{Mx - \begin{bmatrix}1 \\ -1\end{bmatrix}}^2 + \lambda \opnorm{x}^2 &= 2(N\alpha -1)^2 + \lambda N\alpha^2.
\end{align*}
To minimize the expression, we set $\alpha = 2/(2N + \lambda)$ and obtain that
\begin{equation*}
	\min_x \opnorm{Mx - \begin{bmatrix}1 \\ -1\end{bmatrix}}^2 + \lambda \opnorm{x}^2 = \frac{2\lambda}{\lambda + 2N}.
\end{equation*}
\section{Communication Lower Bounds for other Regularized Problems}
We show communication lower bounds for computing $1+\varepsilon$ approximations to optimum values of Lasso and the square-root Lasso problems by reducing the \textsc{Disjointness} problem to computing these optimum values. 

In an instance of \textsc{Disjointness} problem Alice receives a set $A \subseteq [n]$ and Bob receives a set $B \subseteq [n]$. They send bits to each other to communicate based on a pre-determined to protocol to find if $A \cap B = \emptyset$ or not.
\begin{theorem}[Hardness of \textsc{Disjointness} \cite{haastad2007randomized}]
	Every public-coin randomized protocol for \textsc{Disjointness} that has two-sided error at most a constant $\varepsilon \in (0, 1/2)$ uses $\Omega(n)$ communication in the worst case (over inputs and coin flips).
\end{theorem}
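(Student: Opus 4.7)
The plan is to prove the $\Omega(n)$ lower bound via the information complexity framework of Bar-Yossef, Jayram, Kumar, and Sivakumar, exploiting that \textsc{Disjointness} equals the OR of $n$ independent single-bit ANDs. By a direct sum theorem, an $\Omega(1)$ information-cost lower bound for a single-coordinate AND amplifies to an $\Omega(n)$ lower bound for \textsc{Disjointness}, and communication complexity dominates information cost, so this suffices.

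First I would fix a hard distribution $\mu = \nu^n$ where $\nu$ places probability $1/2$ on $(0,0)$, probability $1/4$ each on $(0,1)$ and $(1,0)$, and zero probability on $(1,1)$; under $\mu$ the pair $(A,B)$ is almost surely disjoint, and non-disjoint ``hard'' instances are produced by planting a single $(1,1)$ in a uniformly chosen coordinate. For any public-coin protocol $\Pi$ that decides \textsc{Disjointness} with constant error, define its information cost to be $I(X,Y;\Pi \mid R)$ under $\mu$, where $R$ is the public randomness. Using the chain rule together with the product structure of $\mu$, one decomposes this quantity coordinate-by-coordinate as a sum $\sum_{i=1}^n I(X_i, Y_i ; \Pi \mid X_{-i}, Y_{-i}, R)$, where each summand is the information cost of an induced single-coordinate protocol for AND under $\nu$, with $(X_{-i},Y_{-i})$ playing the role of additional shared randomness sampled independently by Alice and Bob via the ``collapsing'' / self-reducibility trick.

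Next I would prove the single-coordinate lower bound: any protocol for AND on $\nu$ with constant advantage has information cost $\Omega(1)$. The key ingredient is the rectangle / cut-and-paste property of deterministic protocols: the distributions of transcripts on $(x,y)$ and $(x',y')$ jointly constrain the distributions on $(x,y')$ and $(x',y)$ via a Hellinger distance identity. Applied to the four inputs $(0,0),(0,1),(1,0),(1,1)$, together with the fact that correctness forces transcripts on $(1,1)$ to be statistically distinguishable from those on the three disjoint inputs, cut-and-paste propagates this distinguishability to force transcripts on $(0,1)$ and $(1,0)$ to be noticeably separated in Hellinger distance, which via Pinsker yields $I(X_i,Y_i;\Pi \mid X_{-i},Y_{-i},R) = \Omega(1)$.

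Combining the direct sum decomposition with the single-coordinate bound gives $I(X,Y;\Pi \mid R) = \Omega(n)$, and since the expected transcript length is an upper bound on information cost, the desired $\Omega(n)$ worst-case communication lower bound follows (up to standard amplification to handle worst-case rather than average-case inputs, via Yao's minimax principle applied to $\mu$ together with the planted distribution). The main obstacle is the single-coordinate step: lower bounding the Hellinger distance between protocol transcripts on $(0,1)$ and $(1,0)$ is the combinatorial heart of the argument and crucially exploits the rectangle structure of communication protocols; the direct sum reduction and the reduction from communication to information are comparatively routine.
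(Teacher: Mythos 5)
This theorem is not proved in the paper at all: it is imported as a black-box result from the literature (cited to H{\aa}stad and Wigderson) and used only as the source problem for the reductions to Lasso and square-root Lasso. So there is no in-paper proof to compare against. Your proposal is a correct high-level outline of the standard information-complexity proof of the $\Omega(n)$ bound (due to Bar-Yossef, Jayram, Kumar, and Sivakumar; the bound itself goes back to Kalyanasundaram--Schnitger and Razborov), and all the main ingredients you name --- the direct-sum decomposition via the chain rule, the single-coordinate $\Omega(1)$ information bound for AND via the cut-and-paste/Hellinger identity, and the fact that expected communication dominates information cost --- are the right ones.

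Two small points to tighten if you were to write this out in full. First, your distribution $\nu$ (mass $1/2$ on $(0,0)$ and $1/4$ each on $(0,1)$ and $(1,0)$) is not a product distribution, so Alice and Bob cannot privately sample $(X_{-i},Y_{-i})\sim\nu^{n-1}$ on their own; the standard fix is to introduce an auxiliary variable $D_i$ indicating which of the two coordinates is forced to $0$, so that $\nu$ conditioned on $D_i$ is a product distribution, and to condition the information cost on $D$ as well. You allude to this with the ``collapsing'' trick, but the conditioning variable is essential, not optional. Second, no appeal to Yao's minimax principle is needed: one fixes a worst-case-correct randomized protocol and measures its information cost under $\mu$ directly; worst-case correctness is exactly what lets you argue that the transcript distribution on the (measure-zero under $\mu$) input $(1,1)$ is far from the accepting ones. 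With those caveats, the proposal is a sound route to the stated theorem.
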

\begin{theorem}[Communication Complexity of Lasso]
Let $0< \lambda < 1$ be the Lasso parameter. If Alice has the $n \times d$ matrix $M_1$ and Bob has the $n \times d$ matrix $M_2$, then to determine a $1+c$ approximation, for a small enough constant $c$, to the optimal value of
\begin{align*}
    \min_z \opnorm{(M_1+M_2)z - b}^2 + \lambda\|{z}\|_1
\end{align*}
requires $\Omega(d)$ bits of communication between Alice and Bob.
\end{theorem}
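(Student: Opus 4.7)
The plan is to reduce from \textsc{Disjointness}. Let $A, B \subseteq [d]$ be sampled from a hard \textsc{Disjointness} distribution supported on sets with $|A| = |B| = d/4$, so in particular $A \cup B \neq \emptyset$; by the cited lower bound, deciding whether $A \cap B = \emptyset$ requires $\Omega(d)$ bits of communication. I will embed each such instance into a Lasso problem whose optimum differs by a constant multiplicative factor between the disjoint and non-disjoint cases, so that a $(1+c)$-approximation to the Lasso optimum for a small constant $c$ transfers the $\Omega(d)$ bit lower bound.

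The hard instance takes $n = 1$. Let $\chi_A, \chi_B \in \{0,1\}^d$ denote the indicator vectors; Alice forms $M_1 = \T{\chi_A} \in \R^{1 \times d}$ and Bob forms $M_2 = \T{\chi_B}$, and the target is $b = 1$. Set $v := \chi_A + \chi_B \in \{0,1,2\}^d$, so $v_i = 2$ iff $i \in A \cap B$. The objective is $(\T{v}z - 1)^2 + \lambda \onenorm{z}$. The key structural observation is the $\ell_1/\ell_\infty$ duality
\begin{equation*}
    \min\{\onenorm{z} : \T{v}z = \sigma\} = \frac{|\sigma|}{\max_i v_i},
\end{equation*}
attained by placing all mass on any single coordinate realizing $\max_i v_i$. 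Consequently the Lasso objective depends on $z$ only through $\sigma := \T{v}z$, and writing $k := \max_i v_i \in \{1,2\}$ and $u := \sigma - 1$, the problem reduces to $\min_{u \in \R}\, u^2 + \lambda\,|1+u|/k$.

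For $\lambda \in (0,1)$ and $k \in \{1,2\}$, the optimizer on the branch $u > -1$ is $u = -\lambda/(2k)$, giving value $\opt_k = \lambda/k - \lambda^2/(4k^2)$; on the branch $u \le -1$ the best value is $\ge 1 > \opt_k$. Thus $\opt_1 = \lambda - \lambda^2/4$ when $A \cap B = \emptyset$ and $\opt_2 = \lambda/2 - \lambda^2/16$ when $A \cap B \neq \emptyset$, and the ratio is $\opt_1/\opt_2 = 4(4-\lambda)/(8-\lambda) \ge 12/7$ for all $\lambda \in (0,1)$. Hence any $(1+c)$-approximation with $c$ smaller than, say, $1/4$ lets the players identify which case they are in, solving \textsc{Disjointness}, and the claimed $\Omega(d)$ lower bound follows.

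The main obstacle is justifying the one-sparse reduction carefully: I need the infimum of $\onenorm{z}$ over all $z \in \R^d$ with $\T{v}z = \sigma$ to equal $|\sigma|/k$, and therefore the infimum of the Lasso objective over all $z$ to be attained (or approached) by one-sparse $z$ at an argmax coordinate of $v$. This is the dual characterization of $\onenorm{\cdot}$, but it must be combined with the scalar optimization in $u$ above to conclude that $\opt_k$ is genuinely the Lasso optimum. A minor bookkeeping point is ruling out $A \cup B = \emptyset$ (where $\max_i v_i = 0$ would give a different formula); this is guaranteed by the $|A|, |B| = d/4$ constraint in the hard distribution.
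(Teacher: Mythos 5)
Your proof is correct, but it takes a genuinely different route from the paper's. Both arguments reduce from \textsc{Disjointness}; the paper's hard instance is a $2\times d$ design matrix whose two rows are the indicator vectors of $A$ and $B$ with target $(1,1)$, and it distinguishes the two cases by whether some column equals $\T{(1,1)}$: if $A\cap B\neq\emptyset$ a single standard basis vector certifies cost $\lambda$, while in the disjoint case a direct two-parameter optimization (one coordinate in $A$, one in $B$) gives optimum $2\lambda-\lambda^2/2\ge 3\lambda/2$. You instead collapse to $n=1$ with design row $v=\chi_A+\chi_B$ and detect the intersection through $\|v\|_\infty\in\{1,2\}$, using the $\ell_1$/$\ell_\infty$ duality $\min\{\onenorm{z}:\T{v}z=\sigma\}=|\sigma|/\|v\|_\infty$ to reduce the entire Lasso problem to a one-variable optimization in $\sigma$. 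This buys a cleaner justification of the ``optimum is achieved by a sparse $z$'' step, which the paper asserts somewhat informally (``we can see that\ldots''), and your constant gap ($\ge 12/7$) is comparable to the paper's ($3/2$); your branch analysis ($u>-1$ versus $u\le-1$) and the resulting values $\opt_k=\lambda/k-\lambda^2/(4k^2)$ check out for $\lambda\in(0,1)$. The one extra hypothesis your version needs---that the hard distribution avoids $A=B=\emptyset$ so that $v\neq 0$---is correctly noted and is satisfied by the standard $|A|=|B|=d/4$ hard distribution for \textsc{Disjointness}.
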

\begin{proof}
We prove the theorem by reducing the \textsc{Disjointness} problem to an instance of the Lasso problem. Let $A, B \subseteq [d]$ denote the sets received by Alice and Bob, respectively. Let $x^{(1)}, x^{(2)} \in \{0,1\}^d$ be the vectors that denote $A$ and $B$ respectively. We have $x^{(1)}_i = 1$ if and only if $i \in A$ and similarly $x^{(2)}_i = 1$ if and only if $i \in B$. Let $M$ denote a $2 \times d$ matrix with two rows such that $M_{1*} = x^{(1)}$ and $M_{2*}= x^{(2)}$. We consider the Lasso problem:
\begin{equation*}
	\min_z \opnorm{Mz - \begin{bmatrix}
		1 \\ 1
	\end{bmatrix}}^2 + \lambda \onenorm{z}
\end{equation*}
for some $0 < \lambda < 1$.

Consider the following two cases:
\begin{enumerate}
	\item $A \cap B \ne \emptyset$ : Let $i \in A \cap B$. We have $x^{(1)}_i = x^{(2)}_i = 1$ and therefore $M_{1i} = M_{2i} = 1$. Let $z = e_i$. We have 
	$Me_i = \T{[1\quad 1]}$. Therefore
	\begin{equation*}
		\opnorm{Me_i - \begin{bmatrix}
		1 \\ 1
	\end{bmatrix}}^2 + \lambda\onenorm{e_i} = 0 + \lambda = \lambda.
	\end{equation*}
	Therefore $\min_z \opnorm{Mz - 1_2}^2 + \lambda\onenorm{z} \le \lambda$.
	\item Let $A \cap B = \emptyset$. Let $z$ be an arbitrary vector. We have $M_{1*}z = \sum_{i \in A}z_i$ and $M_{2*}z = \sum_{i \in B}z_i$. Thus,
	\begin{align*}
		\opnorm{Mz - 1_2}^2 + \lambda\onenorm{z} = (\sum_{i \in A}z_i - 1)^2 +(\sum_{i \in B}z_i -1)^2 + \lambda (\sum_{i \in A}{|z_i|} + \sum_{i \in B}{|z_i|}),
	\end{align*}
We can see that the optimum value can be attained by setting $z_i = \alpha$ for some $i \in A$ and $z_j = \beta$ for some $j \in B$ and setting the rest of the coordinates to $0$. We now consider
\begin{equation*}
	\min_{\alpha} (\alpha - 1)^2 + \lambda\alpha.
\end{equation*}
The optimal solution for the above is attained at $\alpha$ satisfying $2(\alpha - 1) + \lambda = 0$ which gives $\alpha = 1 - \lambda/2$. Thus the optimum value for 
\begin{equation*}
	\min_z \opnorm{Mz - 1_2}^2 + \lambda\onenorm{z}
\end{equation*}
when $A \cap B = \emptyset$ is given by 
\begin{equation*}
	2 \cdot \frac{\lambda^2}{4} + \lambda(2 - \lambda) = 2\lambda -  \lambda^2/2 \ge 3\lambda/2.
\end{equation*}
\end{enumerate}
The optimal Lasso values differ by a factor of at least $3/2$. Thus a protocol that can compute $1+c$ approximation to the optimal value of ridge regression can be used to solve the \textsc{Disjointness} problem which gives an $\Omega(d)$ bits lower bound for approximating Lasso.
\end{proof}

We similarly have the following communication lower bounds for Square-root Lasso.
\begin{theorem}[Hardness of Sketching Square-Root Lasso]
Let $0< \lambda < 2\sqrt{2}/3$ be the Lasso parameter. If Alice has the $n \times d$ matrix $M_1$ and Bob has the $n \times d$ matrix $M_2$, then to determine a $1+c$ approximation, for a small enough constant $c$, to the optimal value of
\begin{align*}
    \min_z \opnorm{(M_1+M_2)z - b} + \lambda\|{z}\|_1,
\end{align*}
requires $\Omega(d)$ bits of communication between Alice and Bob.
\end{theorem}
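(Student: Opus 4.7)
The plan is to reduce from the \textsc{Disjointness} problem, following the template used for the Lasso lower bound just above. Given Disjointness sets $A, B \subseteq [d]$ with indicator vectors $x^{(1)}, x^{(2)} \in \{0,1\}^d$, Alice forms the $2 \times d$ matrix $M_1$ whose first row is $x^{(1)}$ and whose second row is $0$, and Bob forms $M_2$ symmetrically, so that $M := M_1 + M_2$ is the $2 \times d$ matrix with rows $x^{(1)}$ and $x^{(2)}$. They run the square-root Lasso instance
\[
\min_z \opnorm{Mz - \begin{bmatrix}1 \\ 1\end{bmatrix}} + \lambda\onenorm{z}.
\]
I will show that the optimum is $\lambda$ when $A \cap B \ne \emptyset$ and equals $\min(2\lambda, \sqrt{2})$ when $A \cap B = \emptyset$; the ratio of these two values exceeds $3/2$ precisely when $\lambda < 2\sqrt{2}/3$, so a $(1+c)$-approximation for any constant $c$ with $1 + c < 3/2$ would decide Disjointness, forcing $\Omega(d)$ communication by the hardness theorem quoted at the start of this section.

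For the intersecting case, $z = e_i$ with $i \in A \cap B$ gives $Mz = \T{[1\;1]}$ and objective $\lambda$. To see the matching lower bound, I will decompose any $z$ according to $A \cap B$, $A \setminus B$, $B \setminus A$, and the rest (and zero out the last block, which only helps), reparametrize by $s = \sum_{A \cap B} z_i$, $u = \sum_{A \setminus B} z_i$, $v = \sum_{B \setminus A} z_i$, and minimize $\onenorm{z} \ge |s| + |u| + |v|$ for fixed $(s+u, s+v)$ via a median-style argument; this reduces the problem to $\min_y \sqrt{2}|y-1| + \lambda|y|$, whose value is $\lambda$ whenever $\lambda \le \sqrt{2}$. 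For the disjoint case, coordinates outside $A \cup B$ only inflate $\onenorm{z}$, so I may restrict $z$ to $A \cup B$, introduce $u = \sum_{i \in A} z_i$ and $v = \sum_{i \in B} z_i$, and use tightness of $\onenorm{z} \ge |u| + |v|$ when each block concentrates on one coordinate, reducing the problem to
\[
\min_{u,v \in \R} \sqrt{(u-1)^2 + (v-1)^2} + \lambda(|u| + |v|).
\]
Joint convexity and the $u \leftrightarrow v$ symmetry place some optimizer on the diagonal $u = v = t$, leaving $\min_t h(t)$ for $h(t) = \sqrt{2}|t-1| + 2\lambda|t|$. A piecewise analysis on $(-\infty, 0]$, $[0, 1]$, and $[1, \infty)$ gives $\min_t h(t) = \min(2\lambda, \sqrt{2})$: both outer intervals have slopes that push the minimum toward their inner endpoint, and on $[0,1]$ the minimum is at $t = 1$ when $\lambda \le \sqrt{2}/2$ and at $t = 0$ otherwise.

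The main obstacle is the diagonal reduction in the disjoint case, but it follows from the standard symmetrization trick: any optimum $(u^\star, v^\star)$ of the convex, $u \leftrightarrow v$-symmetric objective can be averaged with its reflection $(v^\star, u^\star)$ to produce an optimum lying on the diagonal. Once the problem is one-dimensional, the analysis of $h(t)$ is elementary and it is exactly this piecewise structure that pinpoints $\lambda = 2\sqrt{2}/3$ as the threshold: below it, the value $\sqrt{2}$ attained at $t = 0$ still gives a ratio strictly greater than $3/2$ over the intersecting optimum $\lambda$, while at or above it the gap shrinks below $3/2$ and the reduction from Disjointness stops producing an $\Omega(d)$ bound.
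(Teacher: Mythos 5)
Your proposal is correct and follows essentially the same route as the paper: the same reduction from \textsc{Disjointness} via the $2\times d$ matrix with rows $x^{(1)},x^{(2)}$ and target $\T{[1\ 1]}$, the same one-dimensional optimization $\min_\alpha \sqrt{2}|\alpha-1|+2\lambda|\alpha| = \min(\sqrt{2},2\lambda)$ in the disjoint case, and the same $3/2$ gap yielding the threshold $\lambda < 2\sqrt{2}/3$. Your convexity-plus-symmetrization argument rigorously justifies the reduction to the diagonal $u=v$, a step the paper's proof simply asserts, so your write-up is if anything slightly more complete.
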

\begin{proof}
	We use the same notation as in the proof of the above theorem. In the case of $A \cap B \ne \emptyset$, we again have that the optimum value is at most $\lambda$. In the other case, where $A \cap B = \emptyset$, we again have that the optimum value can be attained by setting $z_i = \alpha = z_j$ for some $i \in A$ and $j \in B$. Thus we have the following optimization problem
\begin{equation*}
	\min_{\alpha} \sqrt{2}|\alpha -1| + 2\lambda|\alpha| = \min(\sqrt{2}, 2\lambda).
\end{equation*}
Thus, the optimum value of the square-root Lasso problem is at most $\lambda$ in the case $A \cap B \ne \emptyset$ and is at least $\min(\sqrt{2}, 2\lambda)$ in the case of $A \cap B = \emptyset$. If $\lambda < 2\sqrt{2}/3$, then $\min(\sqrt{2}, 2\lambda) \ge (3\lambda/2)$. Thus, if Alice and Bob can obtain an $11/10$ approximation for the Square-Root Lasso instance $(M, 1_2, \lambda)$ for some $\lambda < 2\sqrt{2}/3$, they can solve the \textsc{Disjointness} instance which implies an $\Omega(d)$ bit lower bound for computing a $1+c$ approximation for Square-Root Lasso.
\end{proof}
\section{An Experiment}
We run our algorithm on a ridge regression instance with a $6000 \times 70000$ matrix $A$ whose entries are independent Gaussian random variables. We set $\lambda$ such that $\sigma^2/\lambda \approx 1$. Na\"ively computing $x^* = \T{A}(A\T{A} + \lambda I)^{-1}b$ takes $t_{\text{naive}} = 71$ seconds on our machine. We use OSNAP with sparsity $s = 8$ and vary the number $r$ of rows and observe the running times and quality of the solution that is obtained by our algorithm.

Our experiments show the general trends we expect. Increasing the number of rows in the sketching matrix results in a solution $\tilde{x}$ that has lower cost and also is closer to the optimum solution $x^*$. We also see that the running time of the algorithm is nearly linear in the sketch size $r$, implying that the time required to apply the sketch is negligible for this instance. At sketch size $r = 30000$, that is less than $d/2$, we see that the algorithm runs nearly $40\%$ faster than the na\"ive algorithm while computing a solution that has a cost within $5\%$ of the optimum. For  larger values of  $d$, we expect to  obtain a greater  speedup as  compared  to na\"ively computing $x^*$.

Notice that we do not compare with the algorithm of \citet{chowdhury2018iterative} as for one iteration, our algorithm is exactly the same as theirs. Our theorems show that the sketch can be smaller and sparser than what is shown in their work to compute $1+\varepsilon$ approximate solutions, giving the first proof of correctness about the quality of the solution at smaller sketch sizes.
\begin{figure}[!htb]
\minipage{0.4\textwidth}
  \includegraphics[width=\linewidth]{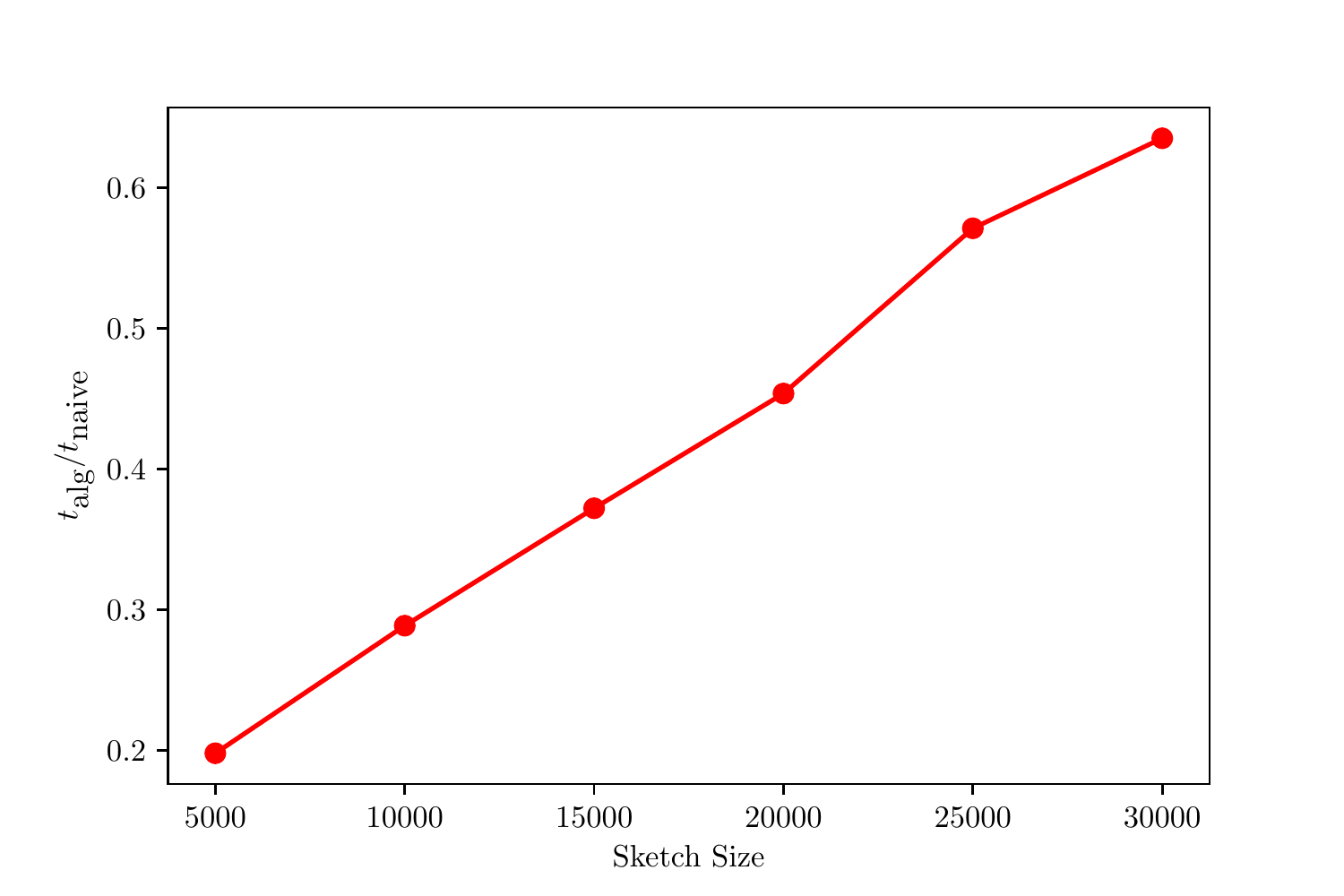}
  \caption{$t_{\text{alg}}/t_{\text{naive}}$ vs \# of rows of OSNAP}\label{fig:awesome_image1}
\endminipage\hfill
\minipage{0.4\textwidth}
  \includegraphics[width=\linewidth]{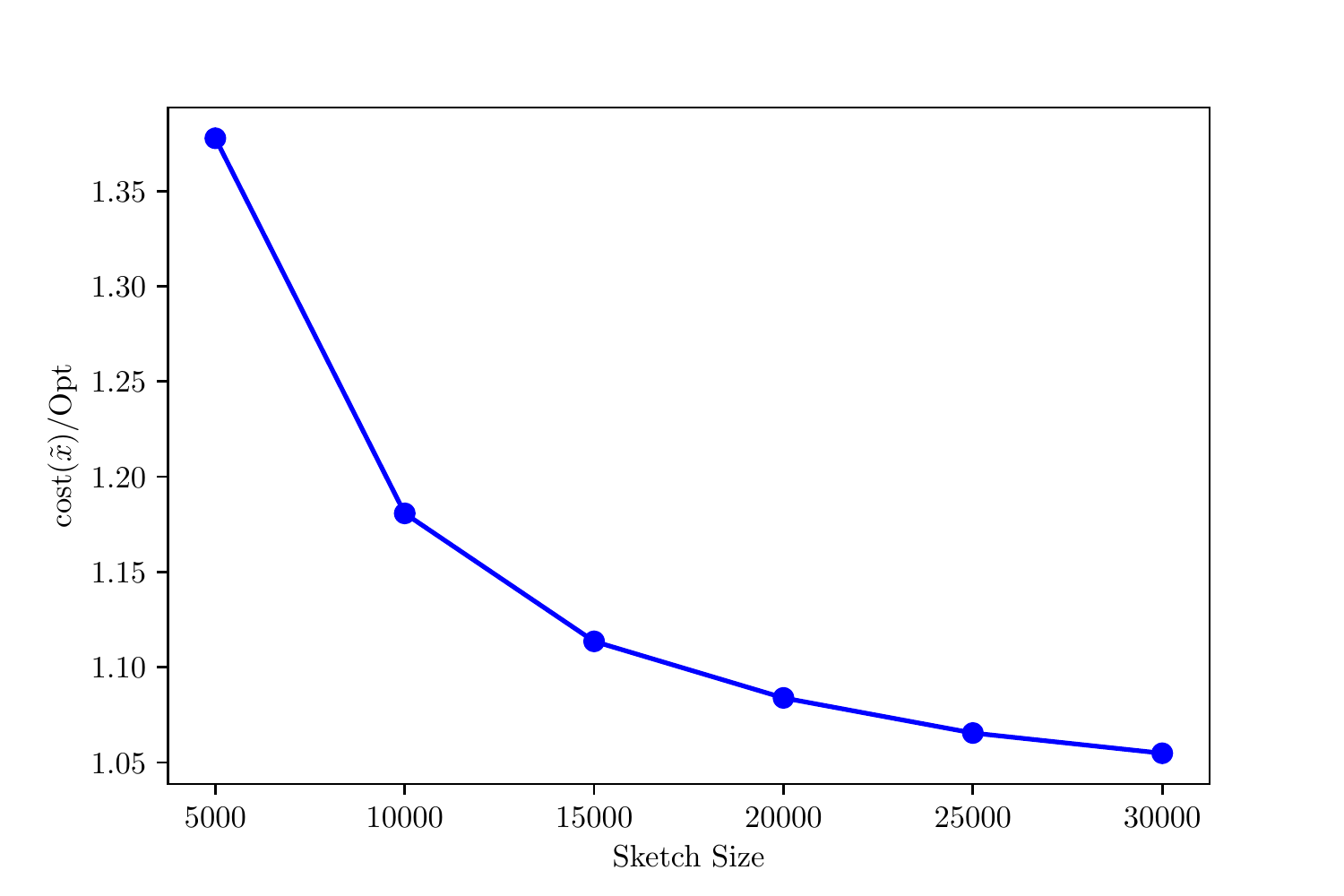}
  \caption{$\cost(\tilde{x})/\opt$ vs \# of rows of OSNAP}\label{fig:awesome_image2}
\endminipage\hfill\par
\minipage{0.3\textwidth}
\endminipage\hfill
\minipage{0.4\textwidth}%
  \includegraphics[width=\linewidth]{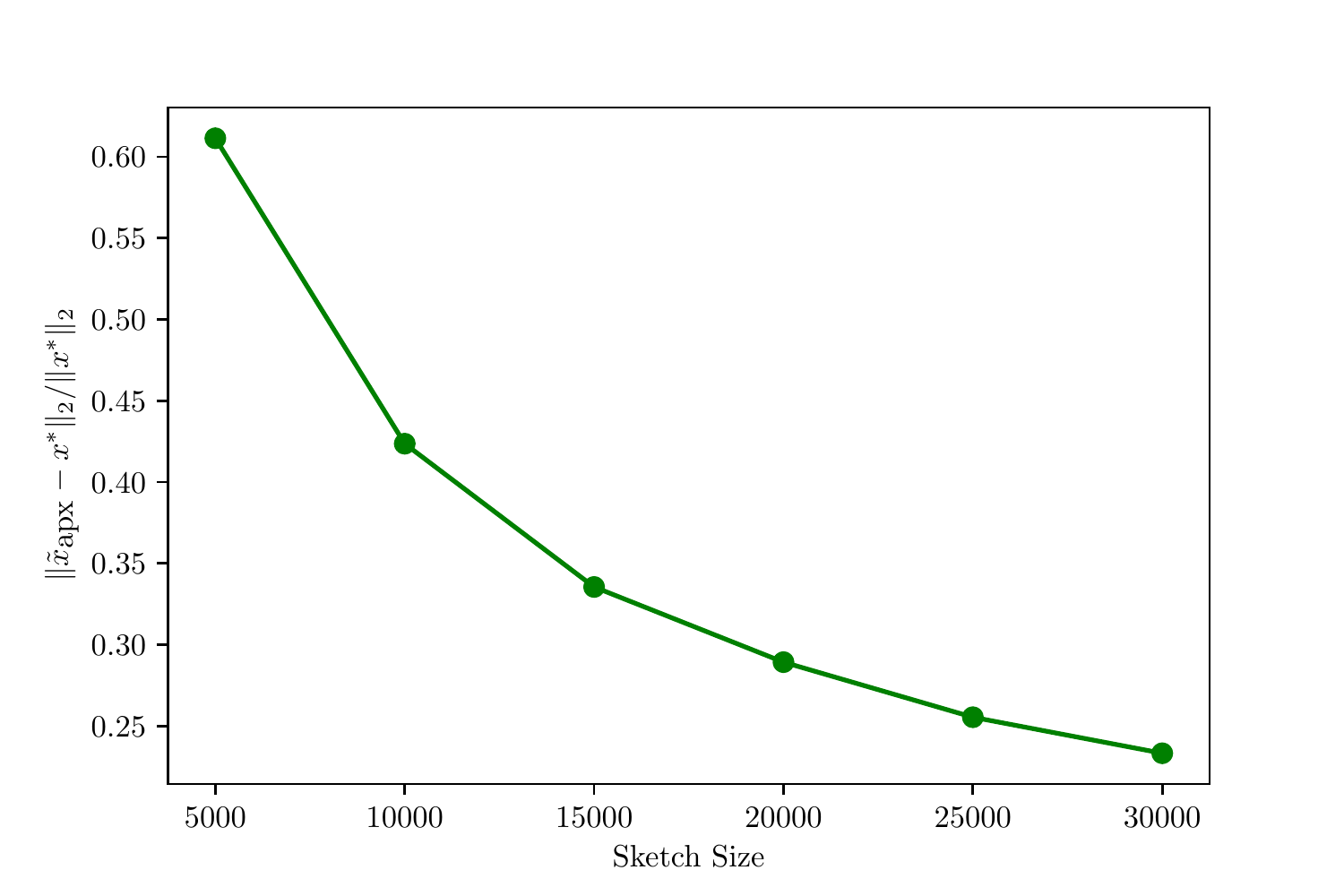}
  \caption{$\opnorm{\tilde{x} - x^*}/\opnorm{x^*}$ vs \# of rows of OSNAP}\label{fig:awesome_image3}
\endminipage\hfill
\minipage{0.3\textwidth}
\endminipage\hfill
\end{figure}

\end{document}